\theoremstyle{definition}
\newtheorem{defn}{Definition}[section]
\theoremstyle{plain}
\newtheorem{theorem}[defn]{Theorem}
\newtheorem{conjecture}[defn]{Conjecture}
\newtheorem{corollary}[defn]{Corollary}
\newtheorem{lemma}[defn]{Lemma}
\newtheorem{proposition}[defn]{Proposition}
\theoremstyle{remark}
\newtheorem{rem}[defn]{Remark}
\theoremstyle{definition}
\newenvironment{claim}[2]{\par\noindent\textbf{Claim #1.}\space#2}{}
\newenvironment{claimproof}[1]{\par\noindent\textit{Proof}\space#1}{\hfill $\blacksquare$}
\newcommand{\tuple}[1]{\left\langle#1\right\rangle}
\newcommand{\setdef}  [2]{\mathopen{}\left\{#1 \mathrel{\Bigg|} #2\right\}\mathclose{}} 
\newcommand{\LC}{\mathsf{LC}}
\newcommand{\LCs}[2][\S]{\LC\left(#1,#2\right)}
\newcommand{\SC}{\mathsf{SC}}
\newcommand{\SCs}[1][\S]{\SC\left(#1\right)}
\newcommand{\CLC}{\mathsf{CLC}}
\newcommand{\CLCs}[2][\S]{\CLC\left(#1,#2\right)}
\newcommand{\CSC}{\mathsf{CSC}}
\newcommand{\CSCs}[1][\S]{\CSC\left(#1\right)}
\newcommand{\LCk}[1]{\LC^{(#1)}}
\newcommand{\CLCk}[1]{\CLC^{(#1)}}
\newcommand{\scenario}{\tuple{X, \M, (O_m)_m}}
\newcommand{\Meas}[1]{X^{(#1)}}
\newcommand{\Cov}[1]{\M^{(#1)}}
\newcommand{\Event}[1]{\E^{(#1)}}
\newcommand{\Model}[1]{\S^{(#1)}}
\newcommand{\FModel}[1]{\F^{(#1)}}
\newcommand{\Jscenario}[1]{\tuple{\Meas{#1},\Cov{#1}, (O^{(#1)}_m)_{m\in X^{(#1)}}}}
\newcommand{\flatten}{\mathsf{flatten}}
\newcommand{\C}{\mathscr{C}}
\newcommand{\D}{\mathcal{D}}
\newcommand{\E}{\mathcal{E}}
\newcommand{\F}{\mathcal{F}}
\newcommand{\K}{\mathcal{K}}
\newcommand{\M}{\mathcal{M}}
\newcommand{\N}{\mathcal{N}}
\renewcommand{\P}{\mathcal{P}}
\renewcommand{\S}{\mathcal{S}}
\newcommand{\U}{\mathcal{U}}
\title{Towards a complete cohomological invariant for non-locality and contextuality}
\author{Giovanni Car\`{u}}
\affil{Department of Computer Science\\ University of Oxford}
\date{}
\begin{document}
\maketitle
\begin{abstract}
The sheaf theoretic description of non-locality and contextuality by Abramsky and Brandenburger sets the ground for a topological study of these peculiar features of quantum mechanics. This viewpoint has been recently developed thanks to sheaf cohomology, which provides a sufficient condition for contextuality of empirical models in quantum mechanics and beyond. Subsequently, a number of studies proposed methods to detect contextuality based on different cohomology theories. However, none of these cohomological descriptions succeeds in giving a full invariant for contextuality applicable to concrete examples. In the present work, we introduce a cohomology invariant for possibilistic and strong contextuality which is applicable to the vast majority of empirical models.  

\end{abstract}

\section{Introduction}
Non-locality and contextuality are key features of quantum mechanics, which have been proved to play a crucial role as a fundamental resource for quantum information and computation \cite{Howard, Raussendorf4}. Abramsky and Brandenburger gave a general and unified description of these phenomena using sheaf theory \cite{Abramsky1}, showing that contextual behaviour can be observed even beyond quantum mechanics. The sheaf theoretic framework provides a rigorous topological description of contextuality, which perfectly conveys the idea of contextuality as a fundamental discrepancy between local consistency and global inconsistency \cite{Abramsky2}. In this framework, empirical models, which contain all the information concerning the outcomes of an ideal experiment, are represented as presheafs over the set of available measurements. Then, contextuality corresponds to the impossibility of extending the local sections of the empirical model presheaf to global ones. 

In recent work, this topological viewpoint has been further developed by taking advantage of sheaf cohomology, a widely used theory in algebraic geometry and topology, suited to study extendability of local features to global ones. In particular, Abramsky et al.~used \v{C}ech cohomology to derive a cohomological obstruction to the extendability of local sections \cite{Abramsky3, Abramsky2}. Although the obstruction has been proved to detect contextuality in a variety of well-studied empirical models, such as the Popescu-R\"{o}hrlich (PR) boxes \cite{Peres}, the Greenberger-Holt-Zeilinger (GHZ) states under Pauli measurements \cite{Greenberger, Greenberger2, Mermin2}, the Peres-Mermin `magic' square, and the whole class of models admitting All-vs-Nothing (AvN) arguments \cite{Abramsky2}, there is evidence of a significant amount of false positives (e.g. the Hardy model \cite{Hardy}). In fact, subsequent work has highlighted many limits of the \v{C}ech cohomology approach, including the fact that it does not provide a full invariant for contextuality even in its strongest form \cite{Caru}, and even under strong assumptions on the measurement scenarios.

Since then, other studies have used different cohomology theories to study contextual features. For instance, 
Okay et al.~\cite{RaussendorfCohomology} used simplicial and group cohomology to present topological counterparts for the contextuality arguments used in measurement based quantum computation (MBQC). However, these methods do not represent a full invariant for contextuality, and they are limited to the class of contextuality arguments relevant for MBQC. Roumen's \emph{cohomology of effect algebras} \cite{Roumen}, provides an alternative viewpoint, based on cyclic and order cohomology. The author partially addresses the question of existence of false positives, and suggests that order cohomology does not produce any false positives. However, 
the result is obtained at the expense of the practical computatability of the cohomology invariant, which appears so complex that no application to concrete empirical models has been presented yet.

The aim of this paper is to refine the theory of \v{C}ech cohomology in order to obtain a complete invariant which is applicable to the vast majority of empirical models, without compromising the practical computability of the cohomology obstructions. 

\section{Background}\label{sec: background}
\subsection{Sheaf theory and contextuality}\label{sec: basics}
The sheaf-theoretic description of non-locality and contextuality provides the appropriate topological framework for the definition of cohomology. The key aspect of this approach is that sheaves represent the most natural objects to study the extendability of local properties to global ones, and thus perfectly convey the conception of contextuality as a discrepancy between local consistency and global inconsistency. In this section, we review the main definitions and results and set the ground for the definition of cohomology. 

Let $X$ be a finite set of measurement labels representing all the measurements available to the experimenters in an ideal scenario. A fundamental aspect of contextuality scenarios is that not all the measurements may be performed simultaneously.\footnote{In quantum mechanics, this arises e.g. when considering non-commuting observables.} To capture this feature, we introduce a \emph{measurement cover} i.e. an antichain $\M\subseteq \mathcal{P}(X)$ satisfying $\bigcup_{C\in\M}C=X$. Elements of the cover $\M$ are the \emph{measurement contexts}, i.e. the maximal sets of measurements that can be jointly performed. Each measurement $m\in X$ will produce an outcome in a set $O_m$. The set $X$, the cover $\M$ and the outcome sets $(O_m)_{m\in X}$ constitute the \emph{measurement scenario} $\tuple{X,\M,(O_m)_m}$. 
We equip $X$ with the discrete topology and define the \emph{sheaf of events}
\[
\mathcal{E}:\textbf{Open}(X)^{op}=\mathcal{P}(X)^{op}\longrightarrow \textbf{Set},
\]
where, for all $U\subseteq X$,
\[
\mathcal{E}(U):=\prod_{m\in U}O_m,
\]
and the restriction maps are given by the obvious projection, i.e. given $U\subseteq U'\subseteq X$, we have
\[
\rho_U^{U'}:=\mathcal{E}(U\subseteq U'): \prod_{m\in U'}O_m\longrightarrow \prod_{m\in U}O_m::\tuple{s_m}_{m\in U'}\longmapsto \tuple{s_m}_{m\in U}.
\]
It is quite easy to show that $\mathcal{E}$ is indeed a functor and satisfies the sheaf condition.
Each $s\in \mathcal{E}(U)$ is called a \emph{local section over $U$}. If $U=X$ such a section is called a \emph{global section}. We will often refer to a local section at $U$ as a function $U\rightarrow\coprod_{m\in U} O_m$ such that $s(m)\in O_m$ for all $m\in U$. 

The measurement scenario and the sheaf of events define the experiment setting and are therefore independent of any physical system we aim to apply these measurements to. The application of this scenario on an actual physical system is captured by the notion of \emph{empirical model}. An empirical model is a compatible family $\{e_C\}_{C\in\M}$ of probability distributions over the events at each contexts $\mathcal{E}(C)$,\footnote{Here, compatibility means that the marginals of the distributions agree on the intersections of contexts. The formal definition can be found in \cite{Abramsky1} and will not be needed in this paper.} which represent the statistics obtained as a result of the experiment. In this work, we focus on \emph{possibilistic empirical models}, in other words, we shall be concerned only with whether an event is possible (i.e. with probability $>0$) or not, disregarding the actual value of the probability. 
Such models are defined as subpresheaves $\S$ of $\mathcal{E}$ satisfying the following conditions:
\begin{enumerate}
\item $\S(C)\neq \emptyset$ for all $C\in\M$.\label{cond: 1}
\item \label{cond: flasque} $\S$ is \emph{flasque beneath the cover}, i.e. the restriction map $\rho_U^{U'}$ is surjective whenever $U\subseteq U'\subseteq C$ for some context $C\in\M$. 
\item Every family $\{s_C\in\S(C)\}_{C\in\M}$ which is \emph{compatible} (i.e. for all $C,C'\in\M$, we have $s_C\mid_{C\cap C'}=s_{C'}\mid_{C\cap C'}$) induces a global section in $\S(X)$. Such a section must be unique as $\S$ is a subpresheaf of $\mathcal{E}$. \label{cond: 3}
\end{enumerate}
Condition \ref{cond: flasque} can be seen as a possibilistic version of \emph{no-signalling}. 

We can now define the notion of contextuality. Let $\S$ be an empirical model on a scenario $\scenario$. 
\begin{itemize}
\item Given a context $C\in\M$ and a section $s\in\S(C)$, we say that $\S$ is \emph{possibilistically} (or \emph{logically}) \emph{contextual} at $s$ and write $\mathsf{LC}(\S, s)$, if $s$ is not part of any compatible family. We say that $\S$ is \emph{logically contextual}, or $\mathsf{LC}(\S)$ if $\mathsf{LC}(\S, s)$ for some section $s$. 
\item We say that $\S$ is \emph{strongly contextual}, and write $\mathsf{SC}(\S)$, if $\mathsf{LC}(\S, s)$ for all $s$. In other words, by condition \ref{cond: 3}, there is no global section, i.e. $\S(X)=\emptyset$. 
\end{itemize}

In this framework, non-locality is a special case of contextuality. It corresponds to contextuality in the particular case where the measurement scenario is \emph{Bell-type}. A scenario $\scenario$ is said to be Bell-type if
\begin{itemize}
\item The measurement set $X$ can be partitioned into subsets $\{X_i\}_{i\in I}$, where $I$ labels different `parts' of the system, and $X_i$ represents the measurements that can be carried out at part $i$.
\item The cover $\M$ contains contexts of the form $\{x_i\}_{i\in I}$, where $x_i\in X_i$ for all $i\in I$. This corresponds to performing one and only one measurement for each part of the system.
\end{itemize}

For this reason, every result concerning contextuality will be true in particular for non-locality.

\subsubsection{Simplicial complex description and bundle diagrams}
The structure of a measurement scenario can also be described as an abstract simplicial complex having measurements as vertices \cite{Barbosa, Rui}. A set of vertices constitutes a face of the complex whenever the corresponding measurements can be performed jointly, hence contexts correspond to facets. Using this description, it is possible to see possibilistic empirical models as simplicial bundles over measurement complexes.
This allows to represent simple empirical models in a clear and intuitive way, using \emph{bundle diagrams}. For example, consider the Hardy model represented in Table \ref{tab: Hardy}.
\begin{table}[htbp]
\centering
\begin{tabular}{c c | c c c c}
\hline
$A$ & $B$ & $(0,0)$ & $(1,0)$ & $(0,1)$ & $(1,1)$\\
\hline
$a_1$ & $b_1$ & $1$ & $1$ & $1$ & $1$\\
$a_1$ & $b_2$ & $0$ &  $1$ &  $1$ &  $1$\\
$a_2$ & $b_1$ &  $0$ &  $1$ &  $1$ &  $1$\\
$a_2$ & $b_2$ &  $1$ &  $1$ &  $1$ &  $0$\\
\end{tabular}
\caption{The Hardy possibilistic model.}\label{tab: Hardy}
\end{table}
In this case, we have two experimenters Alice and Bob who can choose between two dichotomic measurements each ($a_1,a_2$ for Alice and $b_1,b_2$ for Bob). Thus, we have $X=\{a_1,a_2,b_1,b_2\}$, $\M=\{\{a_1,b_1\},\{a_1,b_2\},\{a_2,b_1\},\{a_2,b_2\}\}$ and $O_m=\{0,1\}$ for all $m\in X$. Notice that this is a Bell-type scenario. The rows of the table correspond to the contexts, and the events marked with '$1$' are the ones deemed possible by the model. 
The bundle description of the Hardy model can be found in Figure \ref{fig: Hardy}, both in its 3-dimensional and planar version.
\begin{figure}[htbp]
\centering
\includegraphics[scale=0.6]{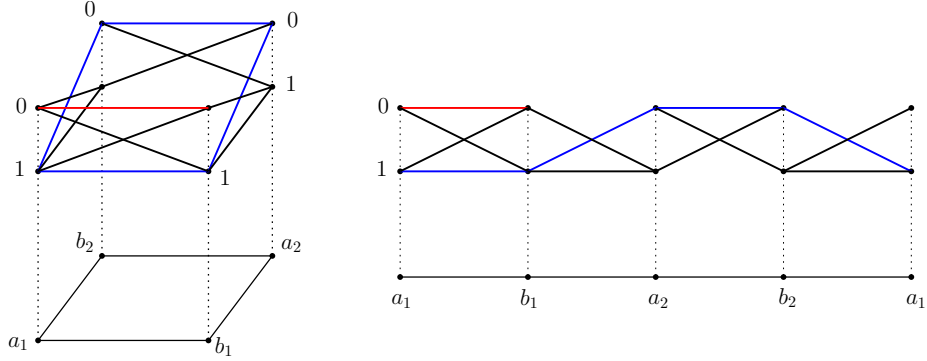}
\caption{Bundle diagram of the Hardy model in its 3-dimensional version (left) and planar version (right).}\label{fig: Hardy}
\end{figure}
At the base of the diagram lies the measurement complex. Above each vertex is a fiber representing the two possible outcomes for the corresponding measurement. Possible sections of the empirical model are represented by edges connecting points in the fiber above each context. Global sections correspond to a choice of one section per context such that they all agree at intersections, and appear as closed loops around the bundle. For instance, we have highlighted a global section in blue. The planar representation will be handy with more complicated scenarios. To recover the full model from it, one has to glue the right hand side with the left hand side. 

One of the advantages of representing empirical models using bundle diagrams is that we have an immediate visual feedback on the contextual properties of the model. For example, by simply looking at the diagrams in Figure \ref{fig: Hardy}, we see that the section $s:=(a_1,b_1)\mapsto (0,0)$ is not part of any compatible family, hence the model is logically contextual at $s$. However, the model is not strongly contextual because it contains, among others, the global section highlighted in blue.

\subsection{\v{C}ech cohomology}
In this section, we recall the main definitions and results of \cite{Abramsky3, Abramsky2} on how to apply \v{C}ech cohomology, which is a particular type of sheaf cohomology, to the study of contextuality.

Consider an empirical model $S$ on a measurement scenario $\scenario$. The first issue we encounter in the definition of cohomology is that it requires a presheaf of abelian groups, while $\S$ is merely a presheaf of sets. This difficulty is overcome by taking a presheaf $\F$ of abelian groups which \emph{represents} $\S$. Formally, this means that $\F$ is a presheaf $\F:\P(X)\rightarrow\textbf{AbGrp}$ which verifies conditions \ref{cond: 1}, \ref{cond: flasque}, \ref{cond: 3}
and is such that there exists an injection $i:\S\hookrightarrow \F$ with $i_C(s_C)\neq 0\in\F(C)$ for all $C\in \M$ and for each $s_C\in\S(C)$. In practice, we will select $\F:=F_R\S$ as a representative presheaf, where $R$ is a ring, 
and $F_R:\textbf{Set}\rightarrow \textbf{AbGrp}$ is the functor that maps a set $X$ to the free abelian group on $R$ generated by it.\footnote{The restriction maps in this case are obtained by linearly extending the ones of $\S$. In a slight abuse of notation, we will denote by $\rho_U^{U'}$ both the restriction maps of $\S$ and those of $\F$.} Although this might seem as a minor alteration, it actually plays a crucial role in the existence of false positives in the detection of contextuality as we shall see in detail later in this paper. 

The \emph{nerve} $\N(\M)$ of $\M$ is an abstract simplicial complex with vertices in $\M$. The set $\N(\M)^q$ of its $q$-simplices is constituted by tuples $\sigma=\tuple{C_0,\dots, C_q}$ of elements of $\M$ such that $|\sigma|:=\cap_{i=0}^qC_i\neq\emptyset$. For all $q\ge0$ and each $0\leq j\leq q$, we define the boundary maps $\partial_j:\mathcal{N}(\M)^{q+1}\rightarrow\mathcal{N}(\M)^q$ by
\[
\partial_j(C_0,\dots, C_{q+1}):=(C_0,\dots, C_{j-1}, \hat{C_j}, C_{j+1},\dots, C_{q+1}).
\]
We can now introduce the \emph{augmented \v{C}ech cochain complex}
\[
0\xrightarrow{~0~} C^0(\M, \F)\xrightarrow{\delta^0}C^1(\M, \F)\xrightarrow{\delta^1}\dots
\]
where, for all $q\ge 0$,
\[
C^q(\M,\F):=\bigoplus_{\sigma\in\mathcal{N}(\M)^q}\F(|\sigma|)
\]
is the abelian group of $q$-\emph{cochains}, and $\delta^q:C^q(\M,\F)\rightarrow C^{q+1}(\M,\F)$ defined by
\[
\delta^q(\omega)(\sigma):=\sum_{j=0}^{q+1}(-1)^j\rho_{|\sigma|}^{|\partial_j\sigma|}(\omega(\partial_j\sigma))~~\forall\omega\in C^q(\M,\F),~\forall\sigma\in\mathcal{N}(\M)^q
\]
is the $q$-\emph{th coboundary map}. \emph{\v{C}ech cohomology} $\check{H}^\ast(\M,\F)$ is defined as the cohomology of this augmented cochain complex.

We will assume that $\M$ is \emph{connected}, which means that given any $C,C'\in\M$, there exists a sequence $C=C_0,\dots, C_n=C'$ such that $C_i\cap C_{i+1}\neq\emptyset$ for all $0\leq i\leq n-1$. Note that this assumption does not cause any loss of generality as we can always study an empirical model defined on a non-connected cover by 
analysing its behavior on the individual connected components, without compromising the contextual structure of the model itself. With this assumption, cocycles in $Z^0(\M,\F)\cong\check{H}^0(\M,\F)$ correspond to compatible families $\{r_C\in\F(C)\}_{C\in\M}$, i.e. families verifying $r_C\mid_{C\cap C'}=r_{C'}\mid_{C\cap C'}$ for all $C,C'\in\M$.\footnote{Here, $r_C\mid_{C\cap C'}$ is an equivalent notation for $\rho^C_{C\cap C'}(r_C)=\F(C\cap C'\subseteq C)(r_C)$.} 

We shall be concerned with extendability of local sections at a fixed context $C_0\in\M$. For this reason, we define the \emph{relative cohomology of} $\F$. To do so, we introduce two auxiliary preshaves. Firstly
\[
\F\mid_{C_0}:\textbf{Open}(X)^{op}\rightarrow\textbf{AbGrp}::U\mapsto \F(U\cap C_0).
\]
The restriction to $C_0$ yields a morphism of sheaves $p^{C_0}:\F\Rightarrow\F\mid_{C_0}$ given by
\[
p^{C_0}_U:\F(U)\rightarrow\F\mid_{C_0}(U)::r\mapsto r\mid_{C_0\cap U}.
\]
Each $p^{C_0}_U$ is surjective as $\F$ is flasque beneath the cover and $U\cap C_0\subseteq C_0\in\M$. 
The second presheaf is defined by $\F_{\tilde{C}_0}(U):=\ker(p_U^{C_0})$. To summarise, we have the following exact sequence of presheaves
\begin{equation}\label{equ: exact sequence}
\textbf{0}\Longrightarrow\F_{\tilde{C}_0}\Longrightarrow\F\xRightarrow{~p^{C_0}}\F\mid_{C_0},
\end{equation}
which can be lifted to cochains to
\[
0\longrightarrow C^0(\M, \F_{\tilde{C}_0})\xhookrightarrow{~~~~~~~}C^0(\M, \F)\xrightarrow{\bigoplus_Cp_C^{C_0}}C^0(\F, \F\mid_{C_0}),\longrightarrow 0,
\]
where exactness on the right follows by surjectivity of all the $p_C^{C_0}$. The map $\delta^0$ can be correstricted to a map $\tilde{\delta^0}:=\delta^0\mid^{Z^1(\M,\F)}$ whose kernel is $Z^0(\M,\F)\cong\check{H}^0(\M,\F)$ and whose cokernel is isomorphic to $\check{H}^1(\M,\F)$, and the same procedure can be applied to $\F\mid_{C_0}$ and $\F_{\tilde{C}_0}$. Therefore, by applying the snake lemma to 
\begin{center}
\begin{tikzpicture}[auto]
\matrix (m) [matrix of math nodes, row sep=2em, column sep=2em, text height=2ex, text depth=0.25ex]
{
0 & C^0(\M, \F_{\tilde{C}_0}) & C^0(\M, \F) & C^0(\M, \F\mid_{C_0}) & 0 \\
0 & Z^1(\M, \F_{\tilde{C}_0}) & Z^1(\M, \F) & Z^1(\M, \F\mid_{C_0}) & \\
};

\path[->, font=\scriptsize]
(m-1-1) edge node[auto] {} (m-1-2)
(m-1-2) edge node[auto] {} (m-1-3)
(m-1-3) edge node[auto] {} (m-1-4)
(m-1-4) edge node[auto] {} (m-1-5)
(m-2-1) edge node[auto] {} (m-2-2)
(m-2-2) edge node[auto] {} (m-2-3)
(m-2-3) edge node[auto] {} (m-2-4)
(m-1-2) edge node[auto] {$\tilde{\delta}^0$} (m-2-2)
(m-1-3) edge node[auto] {$\tilde{\delta}^0$} (m-2-3)
(m-1-4) edge node[auto] {$\tilde{\delta}^0$} (m-2-4);

\end{tikzpicture}
\end{center}
we obtain the following exact sequence

 \begin{center}
\begin{tikzpicture}[descr/.style={fill=white,inner sep=1.5pt}]
        \matrix (m) [
            matrix of math nodes,
            row sep=2.5em,
            column sep=2.5em,
            text height=1.5ex, text depth=0.25ex
        ]
        { \check{H}^0(\M,\F_{\tilde{C}_0}) & \check{H}^0(\M,\F) & \check{H}^0(\M,\F\mid_{C_0}) \\
        \check{H}^1(\M,\F_{\tilde{C}_0}) & \check{H}^1(\M,\F) & \check{H}^1(\M,\F\mid_{C_0})\\
        };

        \path[overlay,->, font=\scriptsize,>=latex]
        (m-1-1) edge node[auto] {} (m-1-2)
        (m-1-2) edge node[auto] {} (m-1-3)
        (m-1-3) edge[out=355,in=175] node[descr,yshift=0.3ex] {$\gamma_{C_0}$} (m-2-1)
        (m-2-1) edge node[auto] {} (m-2-2)
        (m-2-2) edge node[auto] {} (m-2-3);
\end{tikzpicture}
\end{center}
The homomorphism $\gamma_{C_0}$ is called the \emph{connecting homomorphism} relative to the context $C_0$.

It can be shown that $\F(C_0)\cong\check{H}^0(\M, \F\mid_{C_0})$. Thus, given a local section $r_0\in\F(C_0)$, we can define the \emph{cohomology obstruction} of $r_0$ as the element $\gamma_{C_0}(r_0)\in\check{H}^1(\M,\F_{\tilde{C}_0})$.


\begin{proposition}[\cite{Abramsky3}]\label{prop: main}
Let $\M$ be a connected cover, $C_0\in\M$ and $r_0\in\F(C_0)$. Then, $\gamma_{C_0}(r_0)=0$ if and only if there exists a compatible family $\{r_C\in\F(C)\}_{C\in\M}$ such that $r_{C_0}=r_0$. 
\end{proposition}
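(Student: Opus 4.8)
The plan is to read the statement directly off the long exact sequence produced by the snake lemma, the only genuine work being to unwind the two isomorphisms that identify the degree-zero cohomology groups with concrete objects. Recall that exactness of
\[
\check{H}^0(\M,\F_{\tilde{C}_0})\longrightarrow\check{H}^0(\M,\F)\xrightarrow{~p_\ast~}\check{H}^0(\M,\F\mid_{C_0})\xrightarrow{~\gamma_{C_0}~}\check{H}^1(\M,\F_{\tilde{C}_0})
\]
at the third node gives $\ker\gamma_{C_0}=\operatorname{im}p_\ast$, where $p_\ast$ is the map induced on $\check{H}^0$ by $\bigoplus_C p_C^{C_0}$. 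Hence $\gamma_{C_0}(r_0)=0$ is equivalent to $r_0$ lying in the image of $p_\ast$, and it remains to show that this image consists of exactly those sections at $C_0$ that extend to a compatible family of $\F$.

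First I would make the identification $\F(C_0)\cong\check{H}^0(\M,\F\mid_{C_0})$ precise, since it governs what $p_\ast$ does. As $\M$ is connected, a $0$-cocycle of $\F\mid_{C_0}$ is a family $\{t_C\in\F(C\cap C_0)\}_{C\in\M}$ whose members agree on overlaps; testing the overlap condition against $C_0$ itself forces $t_C=\rho^{C_0}_{C\cap C_0}(t_{C_0})$ for every $C$, so the cocycle is rigidly determined by its value $t_{C_0}\in\F(C_0)$, and conversely every $r_0\in\F(C_0)$ yields such a cocycle via $C\mapsto\rho^{C_0}_{C\cap C_0}(r_0)$. Under the companion reading of $\check{H}^0(\M,\F)\cong Z^0(\M,\F)$ as the group of compatible families $\{r_C\in\F(C)\}_{C\in\M}$, I would then compute $p_\ast$ componentwise: applying $p_C^{C_0}$ sends such a family to $\{\rho^C_{C\cap C_0}(r_C)\}_C$, whose value at the index $C_0$ is simply $r_{C_0}$.

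Combining these, the composite $\check{H}^0(\M,\F)\xrightarrow{p_\ast}\check{H}^0(\M,\F\mid_{C_0})\cong\F(C_0)$ is precisely evaluation at $C_0$, namely $\{r_C\}_C\mapsto r_{C_0}$. Therefore $r_0\in\operatorname{im}p_\ast$ if and only if there is a compatible family $\{r_C\in\F(C)\}_{C\in\M}$ with $r_{C_0}=r_0$, which together with $\ker\gamma_{C_0}=\operatorname{im}p_\ast$ yields the claim. I expect the main obstacle to be exactly this bookkeeping: checking that the two cohomological identifications are compatible with the maps $p_C^{C_0}$ so that $p_\ast$ really becomes evaluation at $C_0$, and verifying that connectedness of $\M$ is what pins a $0$-cocycle of $\F\mid_{C_0}$ to its $C_0$-component rather than leaving extra freedom. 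No genuine estimate or construction is needed; the content lies entirely in aligning the snake-lemma output with the concrete description of compatible families.
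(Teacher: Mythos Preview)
Your proposal is correct and matches the approach the paper sets up: the paper derives the long exact sequence via the snake lemma immediately before stating the proposition and then cites \cite{Abramsky3} rather than writing out a proof, so there is no in-paper argument to compare against beyond that setup. Your completion---reading off $\ker\gamma_{C_0}=\operatorname{im}p_\ast$ from exactness at $\check{H}^0(\M,\F\mid_{C_0})$ and identifying $p_\ast$ with evaluation at $C_0$ by unwinding the two degree-zero identifications---is exactly the standard argument behind the cited result.
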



Let $\S$ be an empirical model and consider a local section $s_0\in\S(C_0)$. We define the following concepts
\begin{itemize}
\item $\S$ is \emph{cohomologically logically contextual at $s_0$}, or $\CLCs{s_0}$, if $\gamma_{C_0}(s_0)\neq 0$. We say that $\S$ is \emph{cohomologically logically contextual}, or $\CLC(\S)$, if $\CLCs{s}$ for some section $s$. 
\item $\S$ is \emph{cohomologically strongly contextual}, or $\CSCs$, if $\CLCs{s}$ for all sections $s$.
\end{itemize}

The main result of \cite{Abramsky3} provides a sufficient condition for an empirical model to be contextual:
\begin{theorem}\label{thm: main}
Let $\S$ be an empirical model. Given a section $s_0$ of $\S$, we have $\CLC(\S,s_0)\Rightarrow \LC(\S,s_0)$. Moreover, $
\CSCs\Rightarrow \SCs$.
\end{theorem}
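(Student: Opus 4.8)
The plan is to prove both implications by contraposition, treating Proposition \ref{prop: main} as a black box so that all of the homological machinery (the snake lemma and the connecting homomorphism $\gamma_{C_0}$) is already packaged. The only genuinely new ingredient required is that the representing injection $i:\S\hookrightarrow\F$, being a morphism of presheaves, transports compatible families of $\S$ to compatible families of $\F$. Throughout I keep in mind the implicit identification of a set-section $s_0\in\S(C_0)$ with its image $r_0:=i_{C_0}(s_0)\in\F(C_0)$, which is what $\CLCs{s_0}$ is really defined on via $\gamma_{C_0}(r_0)$.

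First I would establish the logical part. Fix $C_0\in\M$ and $s_0\in\S(C_0)$, and argue contrapositively: assume $\neg\LCs{s_0}$, i.e. that $s_0$ lies in some compatible family $\{s_C\in\S(C)\}_{C\in\M}$ with $s_{C_0}=s_0$. Setting $r_C:=i_C(s_C)\in\F(C)$, I would check that $\{r_C\}_{C\in\M}$ is compatible in $\F$. Since $i$ commutes with the restriction maps, for any $C,C'\in\M$ we have $r_C\mid_{C\cap C'}=i_{C\cap C'}(s_C\mid_{C\cap C'})=i_{C\cap C'}(s_{C'}\mid_{C\cap C'})=r_{C'}\mid_{C\cap C'}$, where the middle equality is exactly the compatibility of the $s_C$. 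This family satisfies $r_{C_0}=i_{C_0}(s_0)=r_0$, so Proposition \ref{prop: main} gives $\gamma_{C_0}(r_0)=0$, that is $\neg\CLCs{s_0}$. Contraposition yields $\CLCs{s_0}\Rightarrow\LCs{s_0}$.

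For the strong part I would simply iterate the logical implication. Assume $\CSCs$, so $\CLCs{s}$ holds for every section $s$ (each relative to the context in which it lives); by the first part, $\LCs{s}$ holds for every such $s$ as well. It then remains to deduce $\S(X)=\emptyset$: if there were a global section $g\in\S(X)$, its restrictions $\{g\mid_C\}_{C\in\M}$ would constitute a compatible family containing $g\mid_{C_0}$, contradicting $\LCs{g\mid_{C_0}}$. Hence $\S(X)=\emptyset$, which is precisely $\SCs$.

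I do not anticipate a serious obstacle, since the analytic content is already encapsulated in Proposition \ref{prop: main}. The only points demanding care are the identification of $s_0$ with $r_0=i_{C_0}(s_0)$ in the abelian group $\F(C_0)$, and the verification that $i$, as a presheaf morphism, genuinely preserves the agreement-on-overlaps condition rather than merely the pointwise section data; both are routine but are the hinge on which the reduction to Proposition \ref{prop: main} turns.
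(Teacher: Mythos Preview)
Your proof is correct. The paper itself does not supply a proof of this theorem; it is quoted as the main result of \cite{Abramsky3}, so there is no in-paper argument to compare against. Your contrapositive route via Proposition~\ref{prop: main} is exactly the intended one, and the only subtlety---that the presheaf morphism $i:\S\hookrightarrow\F$ carries compatible families to compatible families---you have handled explicitly.

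One small redundancy: in the strong-contextuality part, once you have established $\LCs{s}$ for every local section $s$, you are already done, since $\SCs$ is \emph{defined} in the paper as ``$\LCs{s}$ for all $s$''. The equivalence with $\S(X)=\emptyset$ is a remark (via condition~\ref{cond: 3}), not the definition, so your final paragraph deducing $\S(X)=\emptyset$ is unnecessary, though not incorrect.
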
 

Note that cohomology only provides a sufficient condition for contextuality, which is not necessary in general. False positives do exist, as we shall see in detail in the following section. The key subtlety that leads to such detection errors is that the vanishing of the cohomology obstruction $\gamma(s)$ implies that $s$ is a part of a compatible family for the presheaf $\F$. However \emph{this compatible family may not be a valid family for $\S$}, because $\S$ does not allow linear combinations of sections.

 \section{False positives in cohomology}\label{sec: false positives}

In this section, we will study false positives by analysing some key examples.

Consider the Hardy model presented in Figure \ref{fig: Hardy}. We have already shown that the section $s:=(a_1, b_1)\mapsto (0,0)$, marked in red in the picture, is not part of any compatible family of the presheaf $\S$ of the model, proving that $\S$ is logically contextual at $s$. However, the section $s$ is part of the following compatible family for the sheaf $\F:= F_\mathbb{Z}\S$:
\[
\begin{split}
\{s, (a_2,b_1)\mapsto (1,0), & (a_2, b_2)\mapsto(1,0), \\
& [(a_1,b_2)\mapsto (1,0)]-[(a_1,b_2)\mapsto (1,1)]+[(a_1,b_2)\mapsto (0,1)]\},
\end{split}
\]
which is higlighted in blue in the bundle diagram of Figure \ref{fig: Hardy counter}.\footnote{The compatible family can be seen as a closed loop around the bundle, or, equivalently, as a path going from left to right in the planar bundle diagram. The convention we follow is that we assign coefficient $+1$ to all the segments of the path going from left to right, and $-1$ to all the segments going from right to left. This kind of loops will be referred to as \emph{cohomology loops} or \emph{non-standard loops}, and correspond to global sections for the presheaf $\F$.}
\begin{figure}[htbp]
\centering
\includegraphics[scale=0.6]{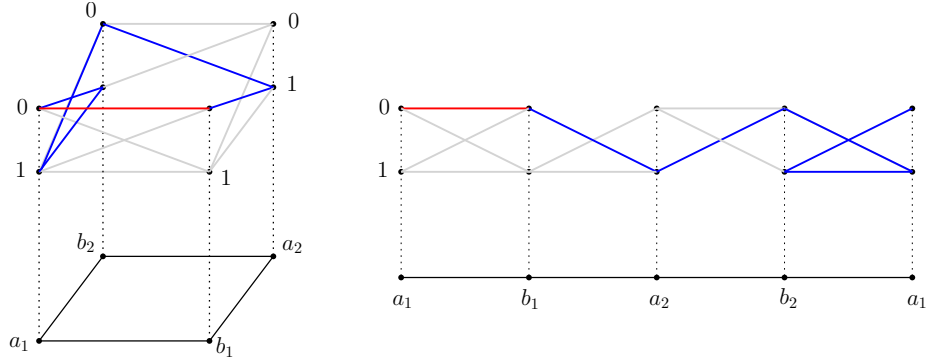}
\caption{A cohomology false positive for the Hardy model.}\label{fig: Hardy counter}
\end{figure}
This means that the obstruction $\gamma(s)$ vanishes, and cohomology is unable to detect the contextuality of the Hardy model. 

Another, more extreme, false positive is presented in \cite{Caru}, and is depicted in the bundle diagram of Figure \ref{fig: Gio}.
\begin{figure}[htbp]
\centering
\includegraphics[scale=0.6]{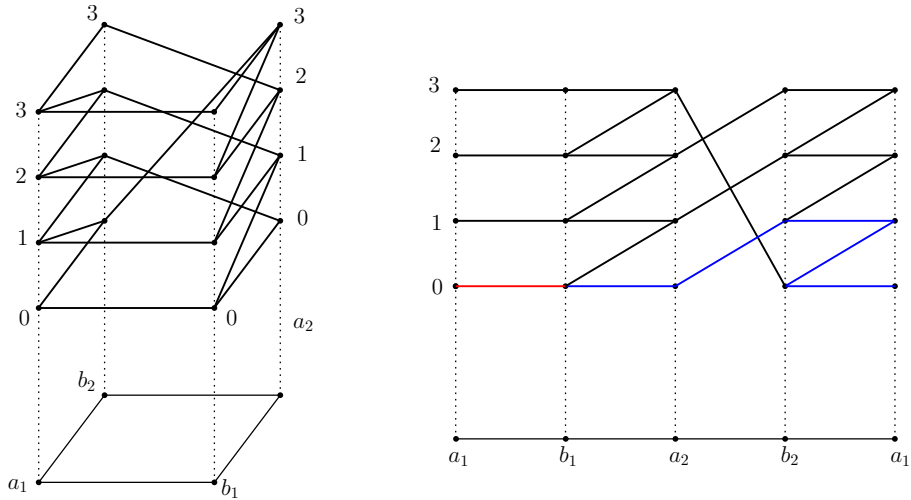}
\caption{A strongly contextual model which is cohomologically non-contextual.}\label{fig: Gio}
\end{figure}
The peculiarity of this model is that cohomology fails to detect contextuality at every single section of the model, despite the fact that none of the sections is part of a compatible family, as the model is strongly contextual. This proves that \v{C}ech cohomology, as it is defined in \cite{Abramsky3}, is not an invariant for strong contextuality, and it can fail even on extremely simple scenarios. To show how this is possible, in the planar diagram of Figure \ref{fig: Gio}, we highlighted in blue a compatible family for $\F$ containing the section $(a_1,b_1)\mapsto (0,0)$.\footnote{A similar kind of path can be found for any of the local sections of the model.} 

Notice how the possibility of following a `Z' shaped path like the one for the context $\{a_1,b_2\}$ is crucial for the existence of this false positive. This appears to be an aspect common to most of the false positives we know. 
For instance, it is sufficient to invert the labelling of the outcome set $O_{a_1}=\{0,1\}$ for the measurement $a_1$ in the Hardy model to see that a `Z' path is responsible for the false positive in this case as well. This common trait is crucial, and it will essentially motivate our basic strategy to avoid false positives.

Before we illustrate the strategy, we present one last example which will clarify our arguments. Consider the model described by Table \ref{tab: model} and graphically represented in Figure \ref{fig: False}. 
\begin{table}[htbp]
\centering
\begin{tabular}{c c | c c c c}
\hline
$A$ & $B$ & $(0,0)$ & $(1,0)$ & $(0,1)$ & $(1,1)$\\
\hline
$a_1$ & $b_1$ & $1$ & $0$ & $0$ & $1$\\
$a_1$ & $b_2$ & $1$ &  $0$ &  $1$ &  $1$\\
$a_2$ & $b_1$ &  $1$ &  $0$ &  $0$ &  $1$\\
$a_2$ & $b_2$ &  $0$ &  $1$ &  $1$ &  $0$\\
\end{tabular}
\caption{A logically contextual empirical model.}\label{tab: model}
\end{table}

\begin{figure}[htbp]
\centering
\includegraphics[scale=0.6]{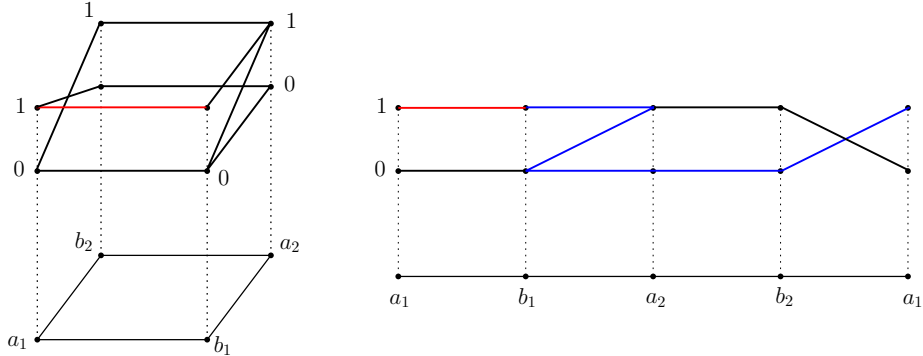}
\caption{A cohomology false positive for section $(a_1,b_1)\mapsto(1,1)$, highlighted in red}\label{fig: False}
\end{figure}
Once again, the section $s:=(a_1,b_1)\mapsto(1,1)$, highlighted in red, is not part of any compatible family, which means that the model is logically contextual at $s$. However, if we allow linear combinations of sections by considering the presehaf $\F:=F_{\mathbb{Z}}\S$, we see that $s$ is part of the compatible family
\[
\begin{split}
\{s,  & [(a_2,b_1)\mapsto (1,1)-(a_2,b_1)\mapsto (1,0)+(a_2,b_1)\mapsto (0,0)],\\
& (a_2, b_2)\mapsto(0,0), (a_1, b_2)\mapsto(0,1) \},
\end{split}
\]
which means that the model is not cohomologically logically contextual at $s$. The reason why this model is important to understand our strategy is the following. We have already mentioned that compatible families correspond to closed loops around the bundle. Suppose that, in the process of trying to extend $s$ to form a closed loop, we could `force' the selection of section $(a_2,b_1)\mapsto (1,1)$ for the context $\{a_2,b_1\}$. This would disallow the `Z' path higlighted in blue in Figure \ref{fig: False}, which is ultimately responsible for the existence of a false positive. It would then be possible to conclude that it is impossible to extend $s$ to a closed loop, even the ones allowed by linear combinations typical of cohomology. Our main strategy will follow exactly this idea. We will derive a series of modified scenarios where each section corresponds to a `forced' selection of sections in adjacent contexts of the original model. 

\section{Changing the original problem}
In order to achieve a complete cohomology invariant for contextuality, we will need to transform the original model into a new one, which is defined on a carefully modified scenario with special properties. The following sections will introduce the main definitions and results.
\subsection{Joint scenarios}

\begin{defn}\label{defn: joint scenario}
Let $\scenario$ be a measurement scenario, such that $\M$ is connected.\footnote{Note that this assumption does not cause any loss of generality. Indeed, contextual behavior in non-connected scenarios can be completely understood by studying the individual connected components of the scenario.} We define the \emph{first joint scenario} of $\scenario$ as the scenario 
\[
\scenario^{(1)}:=\Jscenario{1}.
\]
where
\begin{itemize}
\item $\Meas{1}:=\M$.
\item If $\M$ contains a single context $C$, we let $\Cov{1}:=\{\{C\}\}$.\footnote{This special case will never be used in practice, as we will clarify in Remark \ref{rem: single context}.} Otherwise, we have $|\M|\ge 2$, and define
\[
\Cov{1}:=\left\{\{C,C'\}\subseteq \M\mid C\neq C' \text{ and } C\cap C'\neq \emptyset\right\}
\]
\item For all $C\in \M$, $O_C^{(1)}:=\E(C)$, where $\E:\P(X)^{op}\rightarrow\textbf{Set}$ is the sheaf of events of $\scenario$. 
\end{itemize}
\end{defn}

Note that $\scenario^{(1)}$ is a well-defined measurement scenario, as shown by the following proposition. 
\begin{proposition}\label{prop: scenario}
Let $\scenario$ be a measurement scenario. Then its first joint scenario is well-defined.
\end{proposition}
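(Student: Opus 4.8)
The plan is to unwind the definition of a measurement scenario and verify each of its defining clauses for $\scenario^{(1)}$. Recall that such a scenario consists of a finite set of labels, a measurement cover which is an antichain covering that set, and an outcome set attached to each label. Here the labels are $\Meas{1}=\M$, the outcome sets are $O_C^{(1)}=\E(C)=\prod_{m\in C}O_m$, and the candidate cover is $\Cov{1}$. Finiteness of $\Meas{1}$ is immediate since $\M\subseteq\P(X)$ and $X$ is finite, while each $\E(C)$ is a well-defined set as a finite product of the outcome sets $O_m$. Thus the entire content of the proposition lies in checking that $\Cov{1}$ is an antichain covering $\Meas{1}=\M$.

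I would first dispatch the antichain condition, which is routine. In the single-context case $\Cov{1}=\{\{C\}\}$ has one element and is trivially an antichain. When $|\M|\ge 2$, every element of $\Cov{1}$ is a two-element subset $\{C,C'\}$ of $\M$; since distinct sets of equal cardinality are incomparable under inclusion, $\Cov{1}$ is an antichain.

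The heart of the proof is the covering condition $\bigcup_{D\in\Cov{1}}D=\M$, and this is where the connectedness hypothesis on $\M$ does its work. The inclusion into $\M$ is trivial, as every $D\in\Cov{1}$ is a subset of $\M$. For the reverse inclusion the single-context case is immediate, so suppose $|\M|\ge 2$ and fix a context $C\in\M$; I must show that $C$ occurs in some pair $\{C,C'\}\in\Cov{1}$, equivalently that $C$ meets some other context. Choosing any $C'\neq C$ (possible since $|\M|\ge 2$), connectedness yields a sequence $C=C_0,\dots,C_n=C'$ with $C_i\cap C_{i+1}\neq\emptyset$ throughout. I would then let $k$ be the least index with $C_k\neq C$, which exists because $C_n\neq C$; by minimality $C_{k-1}=C$, and the nonempty intersection $C_{k-1}\cap C_k=C\cap C_k$ exhibits a context $C_k\neq C$ meeting $C$. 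Hence $\{C,C_k\}\in\Cov{1}$, so $C$ is covered.

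The only genuine obstacle is this final step. One must resist concluding directly that $C$ meets its immediate successor $C_1$ along the path, since the connecting sequence may stall at $C$ for several steps before departing. Passing to the first index at which the path leaves $C$ is precisely what makes the argument correct, and it is the single place where connectedness --- rather than mere nonemptiness of $\M$ --- is genuinely used.
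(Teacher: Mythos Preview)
Your proof is correct and follows essentially the same approach as the paper: verify finiteness of $\Meas{1}$ and then use connectedness of $\M$ to establish the covering condition $\bigcup_{D\in\Cov{1}}D=\M$. Your version is in fact slightly more careful than the paper's, since you explicitly check the antichain condition and guard against the connecting sequence stalling at $C$ (the paper tacitly assumes $C_0\neq C_1$), but these are refinements rather than a different route.
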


\begin{proof}
First of all, note that $\Meas{1}$ is finite because $X$ is finite. We only need to show that $\Cov{1}$ is a well-defined measurement cover. We clearly have $\Cov{1}\subseteq\P(\Meas{1})$. 
If $\M$ contains a single context $C$, this is trivially verified. Indeed, $\Cov{1}=\{\{C\}\}$ and we have 
\[
\bigcup_{M\in\Cov{1}}M=\{C\}=\M=\Meas{1}.
\]
Now, suppose $|\M|\ge 2$.  We have $\Cov{1}\subseteq\P(\Meas{1})$, and 
\[
\bigcup_{M\in\Cov{1}}M=\bigcup_{\substack{C,C'\in\M\\ C\cap C'\neq\emptyset}}\{C,C'\}=\M=\Meas{1}.
\]
Indeed, 
\begin{itemize}
\item The inclusion $\bigcup_{M\in\Cov{1}}M\subseteq \M=\Meas{1}$ is trivial, given that each $M\in\Cov{1}$ is included in $\M$ by definition. 
\item Let $C\in\M$. Since $|\M|\ge 2$, there exists a distinct $C'\in \M$. Since $\M$ is connected, there exists a sequence $C=C_0,\dots C_n=C'$ such that $C_i\cap C_{i+1}\neq \emptyset$, hence $C\in\{C_0,C_1\}\subseteq \bigcup_{M\in\Cov{1}}M$.
\end{itemize}
\end{proof}

It is worth spelling out the definition of the sheaf of events of the first joint scenario, which we will denote by $\Event{1}$.
We have $\Event{1}:\P(\Meas{1})^{op}\rightarrow\textbf{Set}$, where, given a $\U\subseteq \Meas{1}$, we have
\[
\Event{1}(\U):=\prod_{C\in \U}O^{(1)}_{C}=\prod_{C\in\U}\E(C),
\]
with restriction maps given by the obvious projections.

To have a better understanding of how the first joint scenario is defined, we give an example in Figure \ref{fig: joint}.
On the left hand side is a simplicial complex representation of the measurement cover 
\[
\M=\{\{a,b,c\},\{b,c,d\},\{a,c,d\},\{a,b,d\},\{b,e,f\},\{e,g\}\}
\]
over the set $X=\{a,b,c,d,e,f,g\}$. On the right hand side, we have the simplicial representation of the cover $\Cov{1}$ of the first joint scenario. 
\begin{figure}[htbp]
\centering
\includegraphics[scale=0.6]{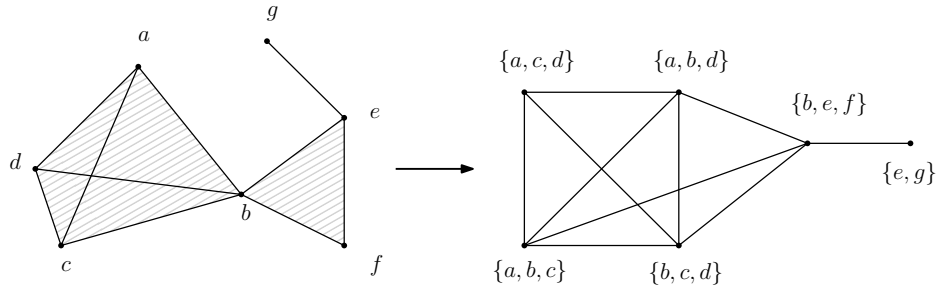}
\caption{A measurement scenario (left) and its first joint version (right). Note that the tetrahedron $\{a,b,c,d\}$ is hollow.}\label{fig: joint}
\end{figure}
Notice that, despite the simplicial complex of the original scenario has dimension $2$, the complex of the first joint scenario is a graph. This is not a coincidence, and it can be easily verified from the definition that every joint scenario has a one-dimensional simplicial complex representation, which can be seen as a graph.

We will often need to repeat the procedure of modifying the original scenario into its joint version. This leads to the following recursive definition:
\begin{defn}\label{defn: k-th joint scenario}
Let $\scenario$ be a measurement scenario, and $k\ge 1$ an integer. We will adopt the following convention:
\[
\scenario^{(0)}:=\scenario.
\]
Then, the \emph{$k$-th joint scenario} of $\scenario$, denoted by $\scenario^{(k)}$, is defined as
\[
\scenario^{(k)}:=\left(\scenario^{(k-1)}\right)^{(1)}.
\]
In other words, it is the first joint scenario of the $(k-1)$--th joint scenario.
\end{defn}
Proposition \ref{prop: scenario} ensures that all the higher-level joint scenarios are well-defined.

\begin{rem}\label{rem: single context}
Consider a scenario $\scenario$. If there exists a $k\ge 0$ such that $\Cov{k}$ contains a single context $\C$, then this necessarily implies that $\scenario$ is \emph{acyclic} in the sense of \cite{Rui}, which means that its associated simplicial complex can be reduced to the empty set by Graham reduction. A result from \cite{Rui}, obtained through an adaptation of Vorob'ev's theorem \cite{Vorobev}, shows that it is impossible to witness contextual behavior in acyclic scenarios. Therefore, from now on, we will always assume $|\Cov{k}|\ge 2$ for all $k\ge 0$.
\end{rem}

\subsection{Joint models}
To an empirical model on a scenario, one can associate an empirical model on the first joint scenario.

\begin{defn}
Let $\S$ be an empirical model on a scenario $\scenario$. The \emph{first joint model} $\Model{1}$ is a possibilistic empirical model on the scenario $\scenario^{(1)}$, defined as follows: for all $\U\subseteq \Meas{1}$, we have
\[
\Model{1}(\U):=\setdef{(s_C)_{C\in \U}\in\prod_{C\in \U}\S(C)}{s_C\mid_{C\cap C'}=s_{C'}\mid_{C\cap C'}~\forall C,C'\in\U}.
\]
The restriction maps are inherited from $\Event{1}$.
\end{defn}
Note that, in particular, for elements $\C=\{C,C'\}\subseteq\M$ of the cover $\Cov{1}$, $\Model{1}(\C)$ coincides with the following pullback:

\begin{center}
\begin{tikzpicture}[auto]
\matrix (m) [matrix of math nodes, row sep=2.5em, column sep=2.5em, text height=1.5ex, text depth=0.25ex]
{
\Model{1}(\C) & & \S(C)\\
& \scalebox{1.5}{$\lrcorner$} &\\
\S(C') & & \S(C\cap C')\\
};
\path[-stealth, font=\scriptsize]
(m-1-1) edge node[auto] {} (m-1-3)
(m-1-1) edge node[auto] {} (m-3-1)
(m-1-3) edge node[auto] {$\rho_{C\cap C'}^C$} (m-3-3)
(m-3-1) edge node[auto] {$\rho_{C\cap C'}^{C'}$} (m-3-3);
\end{tikzpicture}
\end{center}


The following proposition shows that the first joint model is a well-defined concept.

\begin{proposition}\label{prop: model}
Let $\S$ be an empirical model. Then, $\Model{1}$ is a well-defined empirical model.
\end{proposition}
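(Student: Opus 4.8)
The plan is to verify directly that $\Model{1}$ meets the three defining requirements of an empirical model on the scenario $\scenario^{(1)}$, after first confirming that it is genuinely a subpresheaf of the sheaf of events $\Event{1}$. The latter is routine: the assignment $\U\mapsto\Model{1}(\U)$ is the subset of $\Event{1}(\U)=\prod_{C\in\U}\E(C)$ cut out by pairwise compatibility conditions, and restricting a compatible tuple to a smaller index set $\U'\subseteq\U$ preserves those conditions (the conditions for $\U'$ are a subset of those for $\U$), so the projection restriction maps of $\Event{1}$ corestrict to $\Model{1}$. The decisive structural simplification I will exploit throughout is that every context of $\Cov{1}$ is a two-element set $\{C,C'\}$ with $C\cap C'\neq\emptyset$, so that $\Model{1}(\{C,C'\})$ is exactly the pullback of $\rho_{C\cap C'}^{C}$ and $\rho_{C\cap C'}^{C'}$ displayed after the definition.

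For condition \ref{cond: 1}, given $\C=\{C,C'\}\in\Cov{1}$ I pick any $s_C\in\S(C)$ (nonempty since $\S$ is an empirical model) and use that $\S$ is flasque beneath the cover, with $C\cap C'\subseteq C'\in\M$, to lift $s_C\mid_{C\cap C'}$ along the surjection $\rho_{C\cap C'}^{C'}$ to some $s_{C'}\in\S(C')$; the resulting pair lies in the pullback $\Model{1}(\C)$, which is therefore nonempty. Condition \ref{cond: flasque} reduces to the same lifting argument: because the contexts of $\Cov{1}$ have only two elements, the only nontrivial restriction maps to check are the projections $\Model{1}(\{C,C'\})\to\Model{1}(\{C\})\cong\S(C)$ and their mirror images, and surjectivity of these is precisely the statement that any $s_C\in\S(C)$ extends to a compatible pair, which is again flasqueness of $\S$. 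Maps onto $\Model{1}(\emptyset)$ are trivially surjective.

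The heart of the argument is condition \ref{cond: 3}. Here I start from a compatible family $\{s_\C\in\Model{1}(\C)\}_{\C\in\Cov{1}}$ and build a global section $(t_C)_{C\in\M}\in\Model{1}(\M)$. The key observation is that compatibility of the family over $\Cov{1}$ is precisely agreement of components: if two contexts $\C,\C'\in\Cov{1}$ both contain the original context $C$, then $C\in\C\cap\C'$, so the defining condition $s_\C\mid_{\C\cap\C'}=s_{\C'}\mid_{\C\cap\C'}$ forces the $C$-components of $s_\C$ and $s_{\C'}$ to coincide. Since connectivity and $|\M|\ge 2$ guarantee that each $C\in\M$ lies in at least one pair of $\Cov{1}$ (by the argument in the proof of Proposition \ref{prop: scenario}), I may define $t_C$ unambiguously as the $C$-component of $s_\C$ for any $\C\ni C$. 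It then remains to check that $(t_C)_{C\in\M}\in\Model{1}(\M)$, i.e. is a genuine compatible family of $\S$: for $C,C'\in\M$ with $C\cap C'=\emptyset$ the condition is vacuous, whereas for $C\cap C'\neq\emptyset$ the pair $\{C,C'\}$ is itself a context of $\Cov{1}$, and membership $s_{\{C,C'\}}\in\Model{1}(\{C,C'\})$ directly encodes $t_C\mid_{C\cap C'}=t_{C'}\mid_{C\cap C'}$. Finally $(t_C)$ restricts to the given $s_\C$ on each $\C$ by construction, and it is the unique such section because $\Model{1}$ is a subpresheaf of $\Event{1}$.

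The main obstacle, and the point most worth spelling out, is exactly this last verification: recovering the full pairwise compatibility required of a global section of $\Model{1}$ from the weaker-looking data of a family compatible only over the pair-cover $\Cov{1}$. The design of $\Model{1}$ is what makes this work, since the nontrivial compatibility constraints between $C$ and $C'$ are precisely those with $C\cap C'\neq\emptyset$, and these are encoded inside the values $\Model{1}(\{C,C'\})$ rather than in the gluing conditions between distinct contexts. I will therefore treat the empty-intersection case explicitly and invoke connectivity only where it is genuinely needed, namely to ensure every $C$ appears in some pair so that $t_C$ is defined.
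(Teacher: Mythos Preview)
Your proof is correct and follows essentially the same route as the paper: the subpresheaf check, conditions \ref{cond: 1} and \ref{cond: flasque} via flasqueness of $\S$, and condition \ref{cond: 3} by reading off a well-defined $(t_C)_{C\in\M}$ from the compatible family over $\Cov{1}$ and observing that its pairwise compatibility is encoded in the values $\Model{1}(\{C,C'\})$. Your treatment is somewhat more explicit (handling the empty-intersection case and the restriction-to-$s_\C$ verification), but there is no substantive difference in strategy.
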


\begin{proof}
First of all, note that $\Model{1}$ is a subpresheaf of $\Event{1}$. Indeed, 
\[
\Model{1}(\U)\subseteq\prod_{C\in\U}\S(C)\subseteq\prod_{C\in\U}\E(C)=\Event{1}(\U).
\]
Now, we need to verify conditions \ref{cond: 1}, \ref{cond: flasque} and \ref{cond: 3} of the definition of an empirical model given in section \ref{sec: basics}.

\begin{enumerate}
\item Let $\C=\{C,C'\}\in\Cov{1}$. Because $\S$ is an empirical model, we now that $\S(C)\neq\emptyset$, given that $C\in\M$. Let $s_C\in\S(C)$. Since $\S$ is flasque beneath the cover, and because $C\cap C'\subseteq C'\in\M$, the restriction map $\rho_{C\cap C'}^{C'}:\S(C')\rightarrow\S(C\cap C')$ is surjective. Therefore, there exists $s_{C'}\in\S(C')$ such that 
\[
\rho_{C\cap C'}^{C'}(s_{C'})=s_{C'}\mid_{C\cap C'}=s_C\mid_{C\cap C'}.
\]
Hence, $(s_C,s_{C'})\in\Model{1}(\{C, C'\})$. 
\item Let $\U\subseteq \U'\subseteq \C$ for some context $\C=\{C,C'\}\in\Cov{1}$. There are three cases,\footnote{If $\U=\emptyset$ or $\U'=\emptyset$, then the condition is trivially verified, as $\rho_\emptyset^\U:: s\mapsto \ast$ for all $s\in\Model{1}(\U)$.} of whom only one is non-trivial:

\begin{itemize}
\item $\U=\U'=\{C\}$ (or $\{C'\}$). In this case, $\rho^\U_{\U'}$ is the identity, which is obviously surjective.
\item $\U=\U'=\C$. Once again, $\rho^\U_{\U'}$ is the identity.
\item Suppose $\U=\{C\}$, and $\U'=\C$. Let $s_C\in\S(C)$. Becuase $\S$ is flasque beneath the cover, the restriction map $\rho_{C\cap C'}^{C'}:\S(C')\rightarrow\S(C\cap C')$ is surjective. Hence, there exists a $s_{C'}\in\S(C')$ such that $s_{C'}\mid_{C\cap C'}=s_C\mid_{C\cap C'}$. Thus, $(s_C,s_{C'})\in\Model{1}(\C)$, and
\[
(s_C,s_{C'})\mid_\U=(s_C,s_{C'})\mid_{\{C\}}=s_C,
\]
which shows that $\rho_\U^{\U'}$ is surjective.
\end{itemize}

\item Let $F:=\left\{(s_C,s_{C'})_\C\right\}_{\C\in\Cov{1}}$ be a compatible family\footnote{A more precise notation would be $\left\{(s_C,s_{C'})_{\{C, C'\}}\right\}_{\{C, C'\}\in\Cov{1}}$, but we will often use the one we adopt here to simplify notation.} for $\Model{1}$, which means that $(s_C,s_{C'})_\C\in\Model{1}(\C)$ for all $\C\in\Cov{1}$, and
\[
(s_C,s_{C'})\mid_{\C\cap \D}=(s_D,s_{D'})\mid_{\C\cap \D}
\]
for all $\C=\{C,C'\}$ and $\D=\{D,D'\}$ in $\Cov{1}$.\footnote{Explicitly, excluding the trivial case where $\C\cap \D=\emptyset$, compatibility implies that, if $\C\cap \D=\{C\}$, e.g. in the case where $C=D$, we must have $s_C=s_D$ (or similarly for the other cases). In other words, $F$ cannot contain two different local sections of $\S$ at the same context.\label{footnote: 1}}
The family $F$ induces the global section 
\[
g:=(s_C)_{C\in\M}\in\prod_{C\in\M}\S(C)\subseteq\Model{1}\left(\Meas{1}\right),
\]
which is well-defined by the remark in footnote \ref{footnote: 1}. 
The fact that $g_C\mid_{C\cap C'}=g_{C'}\mid_{C\cap C'}$ for all $C,C'\in \M$ is trivially verified given that $g_C=s_C$ and $(s_C,s_{C'})\in\Model{1}(\{C,C'\})$. 
\end{enumerate}
\end{proof}

We will often need to repeat the procedure of taking the first joint model, which leads to the following definition (cf. Definition \ref{defn: k-th joint scenario}).

\begin{defn}
Let $\S$ be an empirical model on a scenario $\scenario$ and let $k\ge 1$ be an integer. We will adopt the convention $\Model{0}:=\S$. 
Then, the \emph{$k$-th joint model of $\S$}, denoted by $\Model{k}$, is defined by
\[
\Model{k}:=\left(\Model{k-1}\right)^{(1)}.
\]
In other words, it is the first joint model of the $(k-1)$-th joint model of $\S$.
\end{defn}

Proposition \ref{prop: model} guarantees that all the higher-level joint models are well-defined concepts. 

We end this section with two important remarks.

\begin{rem}\label{rem: subtlety}
Consider an emirical model $\S$ on a scenario $\scenario$. There is a key subtlety in the definition of the joint models of $\S$ which we will exploit in some of the proofs in Section \ref{sec: main}. Let $C\in\Cov{k-1}$.
The subtlety consists in the following equality, which simply follows by definition:
\begin{equation}\label{equ: subtlety}
\Model{k}(\{C\})=\Model{k-1}(C).
\end{equation}
Consider two contexts $\C_1=\{C_1^1,C_1^2\},\C_2=\{C_2^1,C_2^2\}\in\Cov{k}$, such that $\C_1\cap \C_2\neq\emptyset$. W.l.o.g. we can suppose that $\C_1\cap\C_2=\{C_1^2\}$ (i.e. $C_1^2=C_2^1$). Suppose we have a section $s_{\C_1}\in\Model{k}(\C_1)$. Then, because of \eqref{equ: subtlety}, the restricted section 
\[
s_{\C_1}\mid_{\C_1\cap \C_2}=s_{\C_1}\mid_{\{C_1^2\}}
\]
can be seen both as an element of $\Model{k}(\{C_1^2\})$, or, equivalently, as an element of $\Model{k-1}(C_1^2)$. In the latter case, we will denote the restricted section as
\[
s_{\C_1}\mid_{C_1^2}\in \Model{k-1}(C_1^2).
\]

\begin{rem}\label{rem: flatten}
Let $\S$ be an empirical model on a scenario $\scenario$. By definition, the possible sections of $\Model{1}$ are pairs of sections of $\S$. Similarly, sections of $\Model{2}$ are \emph{pairs of pairs} of sections of $\S$. In general, sections of $\Model{k}$ are \emph{pairs of pairs ... of pairs} ($k$ times) of sections of $\S$. For our purposes, given a section $s$ of $\S$, we will need to list those sections of $\Model{k}$ that \emph{contain} $s$. To do this, we will use the $\flatten$ function, whose name is borrowed from popular programming languages. This function takes a section $t_\C\in\Model{k}(\C)$ (which is a pair of pairs ... of pairs (k times) of sections of $\Model{k}$) as argument and returns a single set containing all the sections of $\Model{k}$ that appear in $t_\C$.
For instance, for $k=3$, we have
\[
\flatten\left[(((s_1,s_2),(s_3,s_4)),((s_5,s_6),(s_7,s_8)))\right]=\{s_1,s_2,s_3,s_4,s_5,s_6,s_7,s_8\}
\]

\end{rem}

\end{rem}

\subsection{Interpretation and examples}\label{sec: interpretation}
At the end of the previous section, we briefly sketched our strategy to avoid false positives, which consists of considering multiple compatible local sections at the same time, instead of focusing on a single one. The notion of joint model perfectly embodies this idea. As discussed above, local sections of the first joint model of an empirical model $\S$ are pairs of sections of $\S$ above adjacent contexts. This allows one to `force' the selection of the sections on adjacent contexts in the original model, thus reducing the chances of the existence of a false positive. Higher-level joint models further refine this approach and allow to consider three, four, $k$ compatible sections at the same time. 
These statements will be made precise in Section \ref{sec: main}, but it is worth giving some examples that will guide us through the technical results. 

Let us start by illustrating the first joint model of the Hardy model \ref{tab: Hardy}. Recall that $X=\{a_1,a_2,b_1,b_2\}$, $\M=\{\{a_1,b_1\},\{a_1,b_2\},\{a_2,b_1\},\{a_2,b_2\}\}$ and $O_m=\{0,1\}$ for all $m\in X$. Let $C_1:=\{a_1,b_1\}$, $C_2:=\{a_1,b_2\}$, $C_3:=\{a_2,b_1\}$ and $C_4:=\{a_2,b_2\}$. Then we have $\Meas{1}=\M$ and
\[
\Cov{1}=\{\{C_1,C_2\},\{C_2,C_4\},\{C_3,C_4\},\{C_1,C_3\}\}.
\]
In Table \ref{tab: enum1}, we introduce an enumeration of the possible sections for the Hardy model (blank spaces correspond to impossible sections).
\begin{table}[htbp]
\centering
\begin{tabular}{c c | c c c c}
\hline
$A$ & $B$ & $(0,0)$ & $(1,0)$ & $(0,1)$ & $(1,1)$\\
\hline
$a_1$ & $b_1$ & $s_1$ & $s_2$ & $s_3$ & $s_4$\\
$a_1$ & $b_2$ &  &  $s_5$ &  $s_6$ &  $s_7$\\
$a_2$ & $b_1$ &   &  $s_8$ &  $s_9$ &  $s_{10}$\\
$a_2$ & $b_2$ &  $s_{11}$ &  $s_{12}$ &  $s_{13}$ & \\
\end{tabular}
\caption{An enumeration of the possible sections of the Hardy model.}\label{tab: enum1}
\end{table}
Thanks to this notation, we can represent the planar bundle diagram for the first joint model of the Hardy model in Figure \ref{fig: FJM Hardy}

\begin{figure}[htbp]
\centering
\includegraphics[scale=0.5]{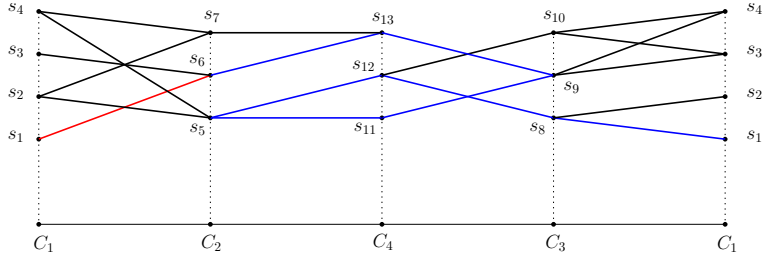}
\caption{The first joint model of the Hardy model. In red, the only section containing $s_1$ in the context $\{C_1, C_2\}$. In blue, a cohomology loop containing $s_1$.}\label{fig: FJM Hardy}
\end{figure}
Compare this to Figure \ref{fig: Hardy}, where we highlighted in red the section $s_1$, which is not part of any compatible family. The only section in the first joint model containing $s_1$ is $(s_1,s_6)$, marked in red in Figure \ref{fig: FJM Hardy}. Notice that this section is not part of any compatible family in the joint model either. In Figure \ref{fig:  Hardy counter}, we provided a cohomology loop containing $s_1$, which is responsible for the existence of a false positive. Note that, in the case of the first joint model, it is no longer possible to create `Z' shaped paths above a single context (this fact is not a coincidence, as we will se in Lemma \ref{lem: no Z} and more generally in Theorem \ref{thm: main}), however, it is still possible to find a more complex cohomology loop containing $(s_1,s_6)$, namely
\[
\begin{split}
\{
(s_1,s_6),(s_6,s_{13})-(s_5,s_{11})+(s_5,s_{12}), (s_{13},s_9)-(s_{11},s_9)+(s_{12},s_8),(s_8,s_1)
\},
\end{split}
\]
which is highlighted in blue in Figure \ref{fig: FJM Hardy}. 

Let us now consider the model of Table \ref{tab: model}. In Table \ref{tab: enum2}, we give an enumeration of its possible sections.

\begin{table}[htbp]
\centering
\begin{tabular}{c c | c c c c}
\hline
$A$ & $B$ & $(0,0)$ & $(1,0)$ & $(0,1)$ & $(1,1)$\\
\hline
$a_1$ & $b_1$ & $s_1$ &  & & $s_2$\\
$a_1$ & $b_2$ & $s_3$ &  & $s_4$  &  $s_5$\\
$a_2$ & $b_1$ &  $s_6$ &  &   &  $s_7$\\
$a_2$ & $b_2$ &   &  $s_8$ &  $s_9$ & \\
\end{tabular}
\caption{An enumeration of the possible sections of the model \ref{tab: model}.}\label{tab: enum2}
\end{table}
With this enumeration, we illustrate the first joint model as a planar bundle diagram in Figure \ref{fig: FJM model}.

\begin{figure}[htbp]
\centering
\includegraphics[scale=0.5]{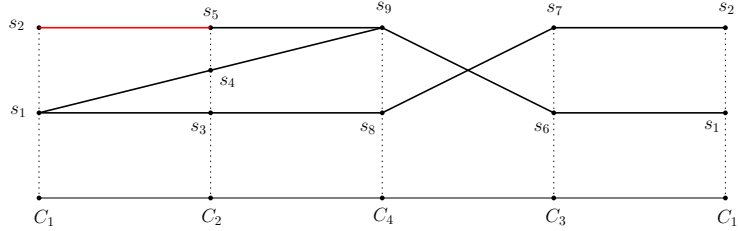}
\caption{The first joint model of the model given by Table \ref{tab: model}. In red, the section S}\label{fig: FJM model}
\end{figure}
We have already shown that $s_2$ is not part of any compatible family, but it is part of a cohomology loop, which gives rise to a false positive. In the joint model, the only section containing $s_2$ is $(s_2, s_5)$. Note that not only $(s_2, s_5)$ is not part of any compatible family, but it appears not to be part of any cohomology loop either. We have successfully removed the `Z' path responsible for the false positive. This fact perfectly reflects the discussion on this model carried out at the end of section \ref{sec: false positives}. By imposing the joint selection of $s_1$ \emph{and} $s_5$, we have successfully removed the false positive. A formal proof of this fact will be given in Section \ref{sec: main}. 

It is fairly easy to see that the fact that the first joint model is logically contextual at $(s_2, s_5)$, which is the only section containing $s_2$, implies that the underlying model is logically contextual at $s_2$. However, the relation between the contextual properties of joint models and the original ones may not be immediately clear. We will give all the details about this question in the following section.

\section{The contextuality of joint models}\label{sec: contextuality}
By looking at the examples of the previous sections, a natural question to ask is: what conclusions can we draw on an empirical model by looking at the contextual properties of its joint models? The answer is given by the following results.

\begin{proposition}\label{prop: contextuality of FJM}
Consider an empirical model $\S$ on a measurement scenario $\scenario$. Let $C\in \M$ and $s\in\S(C)$. The following are equivalent.
\begin{enumerate}
\item The model $\S$ is logically contextual at $s$. \label{stat1}
\item There exists a $C'\in\M$ with $C\cap C'\neq\emptyset$ such that, for all $t\in\S(C')$ verifying $t\mid_{C\cap C'}=s\mid_{C\cap C'}$, we have $\LCs[\Model{1}]{(s,t)}$ \label{stat2}
\item For all $C'\in\M$ with $C\cap C'\neq\emptyset$, for all $t\in\S(C')$ verifying $t\mid_{C\cap C'}=s\mid_{C\cap C'}$, we have $\LCs[\Model{1}]{(s,t)}$. \label{stat3}
\end{enumerate}
\end{proposition}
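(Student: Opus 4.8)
The plan is to prove the three statements equivalent through the cyclic chain \ref{stat1}$\Rightarrow$\ref{stat3}$\Rightarrow$\ref{stat2}$\Rightarrow$\ref{stat1}, all of which I would reduce to a single bridging observation relating compatible families of $\S$ to those of $\Model{1}$. Concretely, I would first establish the following correspondence: for a section $(s,t)\in\Model{1}(\{C,C'\})$, the pair $(s,t)$ is part of a compatible family for $\Model{1}$ if and only if there exists a compatible family $\{s_D\in\S(D)\}_{D\in\M}$ for $\S$ with $s_C=s$ and $s_{C'}=t$. Once this is available, each implication becomes a short piece of logical bookkeeping.

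To prove the correspondence I would argue in both directions. Given a compatible family $\{s_D\}_{D\in\M}$ for $\S$ with $s_C=s$ and $s_{C'}=t$, I would form the family $\{(s_D,s_{D'})_{\{D,D'\}}\}_{\{D,D'\}\in\Cov{1}}$. Each pair $(s_D,s_{D'})$ lies in $\Model{1}(\{D,D'\})$ because $s_D\mid_{D\cap D'}=s_{D'}\mid_{D\cap D'}$ by compatibility of $\{s_D\}$, and the resulting family is compatible for $\Model{1}$ because the same section $s_D$ is reused at every occurrence of a context $D$; it manifestly contains $(s,t)$ at $\{C,C'\}$. Conversely, a compatible family for $\Model{1}$ induces, exactly as in the proof of Proposition \ref{prop: model}, a global section $g=(s_D)_{D\in\M}$, which is precisely a compatible family for $\S$; if the $\Model{1}$-family contains $(s,t)$ at $\{C,C'\}$, then $g$ satisfies $s_C=s$ and $s_{C'}=t$.

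With the correspondence in hand, I would close the three implications as follows. For \ref{stat1}$\Rightarrow$\ref{stat3}: if $s$ belongs to no compatible family of $\S$, then for any $C'$ with $C\cap C'\neq\emptyset$ and any $t\in\S(C')$ with $t\mid_{C\cap C'}=s\mid_{C\cap C'}$ (so that $(s,t)$ is a genuine section of $\Model{1}(\{C,C'\})$), the correspondence forbids $(s,t)$ from lying in a compatible family of $\Model{1}$, giving $\LCs[\Model{1}]{(s,t)}$. For \ref{stat3}$\Rightarrow$\ref{stat2}: by the standing assumption $|\M|\ge 2$ together with connectedness of $\M$, there is at least one $C'\in\M$ with $C\cap C'\neq\emptyset$ and $C'\neq C$, so \ref{stat2} is simply an instance of \ref{stat3}. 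For \ref{stat2}$\Rightarrow$\ref{stat1}: if $s$ were part of a compatible family $\{s_D\}$ of $\S$, then taking $t:=s_{C'}$ for the witnessing $C'$ of \ref{stat2} yields $t\mid_{C\cap C'}=s\mid_{C\cap C'}$ and, by the correspondence, places $(s,t)$ in a compatible family of $\Model{1}$, contradicting \ref{stat2}.

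The main obstacle, and the only place requiring genuine care, is verifying that the family built from $\{s_D\}$ really is compatible as a family for $\Model{1}$: one must check the agreement $(s_D,s_{D'})\mid_{\{D,D'\}\cap\{E,E'\}}=(s_E,s_{E'})\mid_{\{D,D'\}\cap\{E,E'\}}$ across all overlapping pairs in $\Cov{1}$, using the subtlety \eqref{equ: subtlety} that a nonempty overlap of two distinct contexts of $\Cov{1}$ is a singleton $\{D\}$ identified with the original context $D\in\M$. This reduces to observing that both sides equal $s_D$, which holds precisely because the original family assigns a single section to each context of $\M$ (cf.\ footnote \ref{footnote: 1}); the remaining bookkeeping, namely the existence of an adjacent $C'$ and the non-emptiness of the set of admissible $t$ via flasqueness beneath the cover, is routine.
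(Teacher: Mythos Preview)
Your proposal is correct and follows essentially the same route as the paper. The paper proves \ref{stat3}$\Rightarrow$\ref{stat2} trivially and then shows $\neg$\ref{stat1}$\Rightarrow\neg$\ref{stat2}$ and $\neg$\ref{stat3}$\Rightarrow\neg$\ref{stat1} by exactly the two constructions you isolate in your correspondence lemma: building $\{(s_D,s_{D'})\}_{\{D,D'\}\in\Cov{1}}$ from a compatible family $\{s_D\}_{D\in\M}$ and, conversely, extracting the global section $(s_D)_{D\in\M}$ from a compatible family of $\Model{1}$. The only difference is organizational: you factor the bijection between compatible families into a standalone statement and then read off all three implications, whereas the paper inlines each direction where it is needed; the mathematical content and the verifications (well-definedness via footnote~\ref{footnote: 1}, compatibility via \eqref{equ: subtlety}, existence of an adjacent $C'$ via connectedness and $|\M|\ge 2$) are identical.
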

\begin{proof}
The fact that \ref{stat3} implies \ref{stat2} is trivial. 
\begin{itemize}

\item \ref{stat2} $\Rightarrow$ \ref{stat1}: We will prove $\neg$\ref{stat1} $\Rightarrow\neg$\ref{stat2}. Suppose $\neg\LCs{s}$. Then there exists a family $F:=\{s_C\in\S(C)\}_{C\in\M}$, compatible for $\S$, such that $s_C=s$. We want to show that, for all $C'\in\M$ with $C\cap C'\neq\emptyset$, there exists $t\in\S(C')$ verifying $t\mid_{C\cap C'}=s\mid_{C\cap C'}$ such that $\neg\LCs[\Model{1}]{(s,t_{C'})}$. 

Consider the family 
\[
F':=\left\{(s_K,s_{K'})\in\Model{1}(\{K,K'\})\right\}_{\{K,K'\}\in\Cov{1}}.
\] 
This family is well-defined (i.e. $(s_K,s_{K'})$ is indeed in $\Model{1}(\{K,K'\})$) by compatibility of $F$, and it is compatible for $\Model{1}$ by definition. Let $C'\in\M$ with $C\cap C'\neq\emptyset$, and consider $t:=s_{C'}\in\S(C')$. Then $(s,t)=(s_C,s_{C'})\in F'$, which proves that $t\mid_{C\cap C'}=s\mid_{C\cap C'}$ (as $t=s_{C'}$ and $s=s_C$), and $\neg\LCs[\Model{1}]{(s,t_{C'})}$.

\item \ref{stat1} $\Rightarrow$ \ref{stat3}: We will prove $\neg$\ref{stat3} $\Rightarrow\neg$\ref{stat1}. Suppose there exists a $C'$ with $C\cap C'\neq\emptyset$, such that there exists a $t\in\S(C')$, with $t\mid_{C\cap C'}=s\mid_{C\cap C'}$, verifying $\neg\LCs[\Model{1}]{(s,t)}$. This means that there exists a family 
\[
F:=\left\{(v_K,v_{K'})\in\Model{1}(\{K,K'\})\right\}_{\{K,K'\}\in\Cov{1}},
\]
compatible for $\Model{1}$, such that $(v_C,v_{C'})=(s,t)$. Consider the family $F':=\{v_K\in\S(K)\}_{K\in\M}$. This family contains precisely one local section for each context of $\M$ by connectedness of the cover. Moreover, each such global section is well-defined by compatibility of $F$ (see footnote \ref{footnote: 1}). F' is a compatible family for $\S$. Indeed, given $K,K'\in\M$, because $(v_K,v_{K'})\in \Model{1}(\{K,K'\})$, we must have $v_K\mid_{K\cap K'}=v_{K'}\mid_{K\cap K'}$. Therefore, because $v_C=s$, the section $s$ is contained in the compatible family $F'$, proving that $\neg\LCs{s}$. 

\end{itemize}

\end{proof}

\begin{corollary}\label{cor: SC}
Let $\S$ be an empirical model on a scenario $\scenario$. Then $\S$ is strongly contextual if and only if $\Model{1}$ is strongly contextual.
\end{corollary}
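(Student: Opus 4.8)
The plan is to deduce Corollary~\ref{cor: SC} directly from Proposition~\ref{prop: contextuality of FJM}, unwinding the definition of strong contextuality in terms of logical contextuality at every section. Recall that $\S$ is strongly contextual precisely when $\LCs{s}$ holds for \emph{every} $C\in\M$ and every $s\in\S(C)$; likewise $\Model{1}$ is strongly contextual precisely when $\LCs[\Model{1}]{u}$ holds for every $\U\in\Cov{1}$ and every $u\in\Model{1}(\U)$. Since every context of $\Cov{1}$ has the form $\{C,C'\}$ with $C\cap C'\neq\emptyset$, and every section $u\in\Model{1}(\{C,C'\})$ is a compatible pair $(s,t)$ with $s\in\S(C)$, $t\in\S(C')$ and $t\mid_{C\cap C'}=s\mid_{C\cap C'}$, the two conditions are phrased over exactly the data to which the proposition applies.

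First I would prove the forward direction ($\S$ strongly contextual $\Rightarrow$ $\Model{1}$ strongly contextual). Take an arbitrary section $(s,t)\in\Model{1}(\{C,C'\})$; I must show $\LCs[\Model{1}]{(s,t)}$. By hypothesis $\S$ is logically contextual at $s$, i.e.\ statement~\ref{stat1} of Proposition~\ref{prop: contextuality of FJM} holds for $s$. Invoking the implication \ref{stat1}$\Rightarrow$\ref{stat3}, I get that for \emph{all} $C'\in\M$ with $C\cap C'\neq\emptyset$ and all $t\in\S(C')$ with $t\mid_{C\cap C'}=s\mid_{C\cap C'}$, we have $\LCs[\Model{1}]{(s,t)}$. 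Since our chosen pair $(s,t)$ is exactly of this form, $\LCs[\Model{1}]{(s,t)}$ follows. As $(s,t)$ was arbitrary, $\Model{1}$ is strongly contextual.

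For the converse ($\Model{1}$ strongly contextual $\Rightarrow$ $\S$ strongly contextual), I would take an arbitrary $C\in\M$ and $s\in\S(C)$ and show $\LCs{s}$, using the implication \ref{stat2}$\Rightarrow$\ref{stat1}. To apply it I need to exhibit some $C'\in\M$ with $C\cap C'\neq\emptyset$ such that every $t\in\S(C')$ with $t\mid_{C\cap C'}=s\mid_{C\cap C'}$ satisfies $\LCs[\Model{1}]{(s,t)}$. By connectedness of $\M$ (together with the standing assumption $|\M|\ge 2$ from Remark~\ref{rem: single context}), there exists at least one $C'\neq C$ in $\M$ with $C\cap C'\neq\emptyset$; fix such a $C'$. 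For any $t\in\S(C')$ with $t\mid_{C\cap C'}=s\mid_{C\cap C'}$, the pair $(s,t)$ is a genuine element of $\Model{1}(\{C,C'\})$, and the strong contextuality of $\Model{1}$ gives $\LCs[\Model{1}]{(s,t)}$ immediately. Thus statement~\ref{stat2} holds for $s$, and \ref{stat2}$\Rightarrow$\ref{stat1} yields $\LCs{s}$. Since $C$ and $s$ were arbitrary, $\S$ is strongly contextual.

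The only subtlety—and the step I would watch most carefully—is ensuring in the converse that a suitable neighbouring context $C'$ actually exists, so that the quantifier in statement~\ref{stat2} is not vacuous over an empty set of $C'$. This is guaranteed by connectedness and $|\M|\ge 2$, exactly as used in the proof of Proposition~\ref{prop: scenario}. Everything else is a direct translation between the ``for all sections'' formulation of strong contextuality and the pointwise logical contextuality statements of Proposition~\ref{prop: contextuality of FJM}, so no further calculation is required.
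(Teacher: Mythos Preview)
Your proposal is correct and follows essentially the same approach as the paper: both directions are reduced to Proposition~\ref{prop: contextuality of FJM}, using \ref{stat1}$\Rightarrow$\ref{stat3} for the forward implication and \ref{stat2}$\Rightarrow$\ref{stat1} for the converse, with connectedness and $|\M|\ge 2$ invoked to guarantee the existence of a neighbouring context $C'$.
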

\begin{proof}
Suppose $\SCs$. Consider two contexts $C,C'\in\M$ such that $C\cap C'\neq\emptyset$ (these always exist by connectedness of $\M$ and the fact that $|\M|\ge 2$). Take an arbitrary section $(s_C,t_{C'})\in\Model{1}(\{C,C'\})$ (there is at least one such a section by condition \ref{cond: 1} of the definition of an empirical model). We want to show that $\Model{1}$ is logically contextual at $(s_C,t_{C'})$. Since $\S$ is strongly contextual, it is in particular logically contextual at $s_C$. By Proposition \ref{prop: contextuality of FJM}, this implies that, for all $C'\in\M$ with $C\cap C'\neq\emptyset$, for all $t\in\S(C')$ verifying $t\mid_{C\cap C'}=s_C\mid_{C\cap C'}$, we have $\LCs[\Model{1}]{(s_C,t)}$. In particular, if we take $t:=t_{C'}$, we have $\LCs[\Model{1}]{(s_C,t_{C'})}$. 

For the converse, suppose $\SCs[\Model{1}]$. Let $C\in\M$ and take an arbitrary section $s\in\S(C)$. Let $C'\in\M$ such that $C\cap C'\neq\emptyset$ (these always exist by the usual assumptions). Because $\SC(\Model{1})$, we know that for all $t\in\S(C')$ verifying $t\mid_{C\cap C'}=s\mid_{C\cap C'}$, we have $\LCs[\Model{1}]{(s,t)}$. By Proposition \ref{prop: contextuality of FJM}, we conclude that $\LCs{s}$. 
\end{proof}

Proposition \ref{prop: contextuality of FJM} motivates the following definition.

\begin{defn}\label{defn: LCk}
Let $\S$ be an empirical model on a scenario $\scenario$. Let $s\in\S(C)$ be a local section at some context $C\in\M$, and $k\ge 1$. We say that \emph{$\S$ is $\LCk{k}$ at $s$}, and write $\LCk{k}(\S,s)$, if we have $\LC\left(\Model{k}, t\right)$ for all local section $t$ of $\Model{k}$ such that $s\in\flatten(t)$ (cf. Remark \ref{rem: flatten}).
\end{defn}

By applying simple inductive arguments, we immediately have the following additional corollaries of Proposition \ref{prop: contextuality of FJM}:

\begin{corollary}\label{cor: LCk}
Let $\S$ be an empirical model on a scenario $\scenario$. Let $C\in \M$ and $s\in\S(C)$. Then the following are equivalent:
\begin{enumerate}
\item $\S$ is logically contextual at $s$.
\item There exists a $k\ge 1$ such that $\LCk{k}(\S,s)$
\item $\LCk{k}(\S,s)$ for all $k\ge 1$. 
\end{enumerate}
\end{corollary}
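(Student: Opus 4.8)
The plan is to deduce the corollary from Proposition \ref{prop: contextuality of FJM} by an induction on $k$. First I would observe that it suffices to prove the single biconditional $\LCs{s}\iff\LCk{k}(\S,s)$ for every $k\ge 1$. Granting this, the three listed statements are equivalent for formal reasons: statement (1) implies statement (3) by quantifying the biconditional over all $k$; statement (3) implies statement (2) trivially (any $k$, e.g. $k=1$, witnesses it); and statement (2) implies statement (1) by instantiating the biconditional at the witnessing value of $k$. So the whole content is the family of biconditionals, one per level.

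For the base case $k=1$ I would simply unfold Definition \ref{defn: LCk}. Under the standing assumption $|\Cov{k}|\ge 2$, a context of $\Cov{1}$ is a pair $\{C,C'\}$ with $C\cap C'\neq\emptyset$, and a local section $t\in\Model{1}(\{C,C'\})$ with $s\in\flatten(t)$ is exactly a compatible pair having $s$ as one of its two components, i.e.\ $t=(s,t')$ with $t'\in\S(C')$ and $t'\mid_{C\cap C'}=s\mid_{C\cap C'}$. Reading this off, the condition $\LCk{1}(\S,s)$ becomes verbatim statement \ref{stat3} of Proposition \ref{prop: contextuality of FJM}, so the equivalence \ref{stat1}$\iff$\ref{stat3} furnishes $\LCs{s}\iff\LCk{1}(\S,s)$.

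For the inductive step the key point is that $\Model{k}=(\Model{k-1})^{(1)}$ and that $\Model{k-1}$ is itself a genuine empirical model by Proposition \ref{prop: model}, so the base-case equivalence applies with $\Model{k-1}$ in the role of $\S$. This yields, for every context $\C'\in\Cov{k-1}$ and every $u\in\Model{k-1}(\C')$,
\[
\LCs[\Model{k-1}]{u}\iff \Big(\,\LCs[\Model{k}]{t}\ \text{for every } t\in\Model{k} \text{ having } u \text{ as a component}\,\Big).
\]
I would then feed this into the definition of $\LCk{k-1}(\S,s)$ and use the recursive nature of $\flatten$: since any $t\in\Model{k}(\{\C_1,\C_2\})$ is a pair $(t_1,t_2)$ of $\Model{k-1}$-sections with $\flatten(t)=\flatten(t_1)\cup\flatten(t_2)$, we have $s\in\flatten(t)$ precisely when some component $u$ of $t$ satisfies $s\in\flatten(u)$. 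Substituting the displayed equivalence turns $\LCk{k-1}(\S,s)$ into ``$\LCs[\Model{k}]{t}$ for all $t$ having a component $u$ with $s\in\flatten(u)$,'' and collapsing the nested quantifiers over $u$ and $t$ into a single quantifier over $t$ gives exactly $\LCk{k}(\S,s)$. Chaining $\LCk{k}\iff\LCk{k-1}\iff\cdots\iff\LCk{1}\iff\LCs{s}$ closes the induction.

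I expect the main obstacle to be bookkeeping rather than conceptual. One must check that the component sections $u$ extracted from a $t\in\Model{k}$ always live over \emph{contexts} of $\Cov{k-1}$, so that Proposition \ref{prop: contextuality of FJM} genuinely applies to them; keep the multi-level $\flatten$ consistent so that the union identity $\flatten(t)=\bigcup_{u}\flatten(u)$ over the components is invoked at the correct level; and justify the quantifier manipulation using precisely the set equality $\{t:s\in\flatten(t)\}=\{t:\exists\,u\ \text{a component of}\ t,\ s\in\flatten(u)\}$. The order of the two entries in a pair (whether $u$ is the first or second component, and likewise the $(s,t)$ versus $(t,s)$ discrepancy with statement \ref{stat3}) is immaterial because $\flatten$ is insensitive to it, but this should be noted explicitly rather than glossed over.
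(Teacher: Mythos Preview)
Your proposal is correct and follows exactly the approach the paper intends: the paper merely states that the corollary follows ``by applying simple inductive arguments'' from Proposition~\ref{prop: contextuality of FJM}, and your write-up supplies precisely those arguments, unwinding $\LCk{1}$ to match statement~\ref{stat3} and then inducting via $\Model{k}=(\Model{k-1})^{(1)}$ together with Proposition~\ref{prop: model}.
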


\begin{corollary}\label{cor: SC2}
Let $\S$ be an empirical model on a scenario $\scenario$. The following are equivalent:
\begin{enumerate}
\item $\S$ is strongly contextual.
\item There exists a $k\ge1$ such that $\Model{k}$ is strongly contextual
\item $\Model{k}$ is strongly contextual for all $k\ge 0$
\end{enumerate}
\end{corollary}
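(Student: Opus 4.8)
The plan is to reduce everything to an iterated application of Corollary~\ref{cor: SC}, which already handles the single-step comparison between level $k$ and level $k-1$. The crucial observation is that Corollary~\ref{cor: SC} is stated for an \emph{arbitrary} empirical model, and by Proposition~\ref{prop: model} each $\Model{k-1}$ is itself a well-defined empirical model. Hence I may apply the corollary with $\Model{k-1}$ playing the role of $\S$: since $\Model{k}=\left(\Model{k-1}\right)^{(1)}$ by definition, this yields that $\Model{k-1}$ is strongly contextual if and only if $\Model{k}$ is strongly contextual.

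First I would prove, by induction on $k\ge 0$, the auxiliary claim that $\Model{k}$ is strongly contextual if and only if $\S=\Model{0}$ is strongly contextual. The base case $k=0$ is immediate from the convention $\Model{0}=\S$. For the inductive step, assuming the equivalence for $k-1$, the single-step equivalence described above chains with the inductive hypothesis to give the equivalence for $k$. Throughout, the standing assumption $|\Cov{k}|\ge 2$ from Remark~\ref{rem: single context} guarantees that the hypotheses of Corollary~\ref{cor: SC} (connectedness of the cover together with at least two contexts) are met at every level, so each application is legitimate.

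With the auxiliary claim in hand, the three stated equivalences follow immediately. Statement~(1) is exactly the $k=0$ instance of the claim, and statement~(3) is the claim quantified over all $k\ge 0$; these are therefore equivalent. The implication (3)~$\Rightarrow$~(2) is trivial, since one may simply take $k=1$. Finally, (2)~$\Rightarrow$~(1) holds because if $\Model{k}$ is strongly contextual for some particular $k\ge 1$, the auxiliary claim applied at that $k$ gives that $\S$ is strongly contextual.

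I expect no serious obstacle here: the substantive content lives entirely in Corollary~\ref{cor: SC}, and what remains is a bookkeeping induction of the same flavour as Corollary~\ref{cor: LCk}. The only point requiring a moment of care is verifying that Corollary~\ref{cor: SC} genuinely applies at each level — that is, that $\Model{k-1}$ is an empirical model on a connected cover with at least two contexts — which is precisely what Proposition~\ref{prop: model} and the standing assumption recorded in Remark~\ref{rem: single context} secure.
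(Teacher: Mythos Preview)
Your proposal is correct and matches the paper's own treatment: the paper does not give an explicit proof of this corollary, merely remarking that it follows ``by applying simple inductive arguments'' from the one-step result (Corollary~\ref{cor: SC}), which is exactly the reduction you carry out. Your added care in invoking Proposition~\ref{prop: model} and Remark~\ref{rem: single context} to justify applying Corollary~\ref{cor: SC} at each level is a welcome bit of precision that the paper leaves implicit.
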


%

Note that there is no need to extend Definition \ref{defn: LCk} to strong contextuality as this would be equivalent to regular strong contextuality by Corollary \ref{cor: SC2}.

\section{Cyclic models and their properties}\label{sec: main}

Before we prove the main results of the paper, we will need to introduce the notions of \emph{path}, \emph{cycle}, and \emph{cyclic model}, and thoroughly inspect their properties. We start with an important remark:

\begin{rem}
Let $\scenario$ be a measurement scenario. By definition, for each $k\ge 1$, the contexts of $\Cov{k}$ are sets of contexts of $\Cov{k-1}$. In order to avoid confusion between the contexts of $\Cov{k}$ and those of $\Cov{k-1}$ we will denote them using different calligraphic styles. The typical hierarchy we will use is the following: 
\[
c\in\Cov{k-2}\rightarrow C\in\Cov{k-1}\rightarrow \C\in\Cov{k}\rightarrow \mathfrak{C}\in\Cov{k+1}.
\]
Note that the hierarchy will always be the same, but we will \emph{not} fix a calligraphic style for a specific $k$, as we will have to deal with many different cases.

\end{rem}

\subsection{Paths and cycles}

Let us inspect some of the properties of joint scenarios. We start by introducing the notion of \emph{path}. 

\begin{defn}\label{defn: sequence}
Let $\scenario$ be a measurement scenario and $n,k\ge 1$. An \emph{$n$-path for $\Cov{k}$} is a set $\D:=\{C_1,\dots, C_n\}\subseteq\Cov{k-1}$ of $n$ distinct contexts of $\Cov{k-1}$ such that $C_i\cap C_{i+1}\neq\emptyset$ for all $1\leq i\leq n-1$. It is called a \emph{cycle} if, in addition, $C_n\cap C_1\neq \emptyset$. An $n$-path $\D$ is called \emph{chordal} if there exist two non-consecutive indices $i,j$, with $\{i,j\}\neq\{1,n\}$, such that $C_i\cap C_j\neq\emptyset$.  
\end{defn}

We can think of an $n$-path for $\Cov{k}$ as a sequence of distinct vertices in the graph generated by $\Cov{k}$. This corresponds to the graph-theoretic notion of \emph{simple path}. 
Similarly, (chordal) cycles for $\Cov{k}$ correspond to \emph{(chordal) simple cycles} in graph theory. 
In Figure \ref{fig: paths examples}, we give some graphical examples.

\begin{figure}[htbp]
\centering
\includegraphics[scale=0.6]{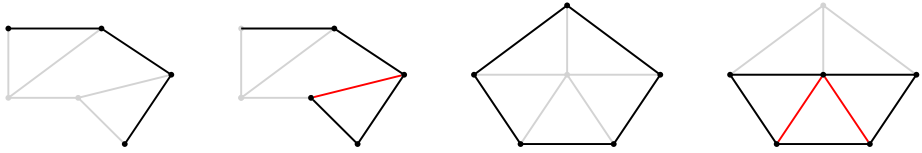}
\caption{Different types of paths for $\Cov{k}$. The grey graph represents $\Cov{k}$. From left to right: a chordless $4$-path, a chordal $4$-path (chord highlighted in red), a chordless $5$-cycle and a chordal $5$-cycle (chords highlighted in red).}\label{fig: paths examples}
\end{figure}

\begin{rem}
In graph theory, a simple path can be equivalently described by the sequence of edges connecting the vertices. Similarly, an $n$-path $\D_\bullet=\{C_1,\dots, C_n\}\subseteq\Cov{k-1}$ for $\Cov{k}$ can be specified by the set
\[
\underline{\mathfrak{D}}=\{\{C_1,C_2\},\{C_2,C_3\},\dots,\{C_{n-1},C_n\}\}\subseteq\Cov{k},
\]
containing contexts of $\Cov{k}$, i.e. edges of the graph generated by $\Cov{k}$. The set $\underline{\mathfrak{D}}$ will be referred to as the \emph{edge representation} of the path $\D_\bullet$. To avoid confusion, from now on, we will denote $\D_\bullet$ for the vertex representation and $\underline{\mathfrak{D}}$ for the edge representation.
\end{rem}

\subsubsection{$3$-cycles: proper and improper}

Cycles for $\Cov{k}$ of size $3$ present some peculiarities that deserve to be discussed in details in order to avoid confusion. The reason is that, although they are technically chordless, one of their edges could be seen as a chord connecting the remaining two. A key aspect of chordless $n$-cycles for $\Cov{k}$ of size $n\ge 4$, which will be proved in Proposition \ref{prop: cycle2}, is that they must be generated by $n$-cycles for $\Cov{k-1}$. This is not generally true for $3$-cycles. Indeed, we could potentially have a $3$-cycle $\D_\bullet=\{C_1,C_2,C_3\}$ for $\Cov{k}$ which is generated by a star-shaped configuration of the $C_i$'s, seen as edges of $\Cov{k-1}$, as shown in Figure \ref{fig: triangle}.

\begin{figure}[htbp]
\centering
\includegraphics[scale=0.6]{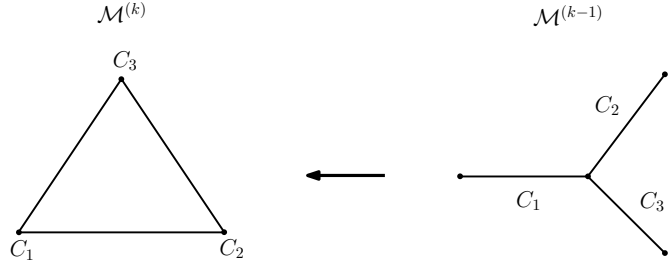}
\caption{A non proper $3$-cycle $\D_\bullet=\{C_1,C_2,C_3\}$ for $\Cov{k}$.}\label{fig: triangle}
\end{figure}

We will refer to this kind of $3$-cycles as \emph{improper $3$ cycles for $\Cov{k}$}. On the other hand, a \emph{proper $3$-cycle} $\D_\bullet=\{C_1,C_2,C_3\}\subseteq\Cov{k-1}$ for $\Cov{k}$ is a $3$-cycle for $\Cov{k}$ such that $\underline{D}=\{C_1,C_2,C_3\}$ is a $3$-cycle for $\Cov{k-1}$, as shown in Figure \ref{fig: proper triangle}.

\begin{figure}[htbp]
\centering
\includegraphics[scale=0.6]{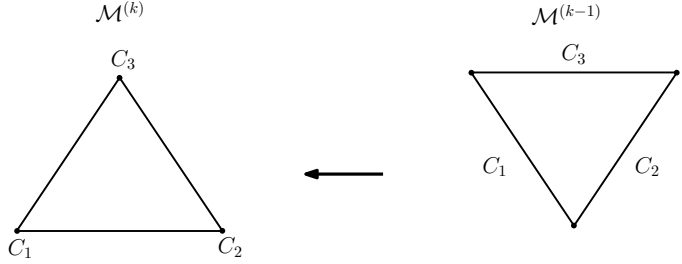}
\caption{A proper $3$-cycle $\D_\bullet=\{C_1,C_2,C_3\}$ for $\Cov{k}$.}\label{fig: proper triangle}
\end{figure}

\subsubsection{Fundamental properties of paths and cycles}

In this paragraph, we will present some key properties of paths and cycles that will play a crucial role in the proofs of the main results of the paper.
The first proposition, for instance, shows that paths and cycles are preserved when taking the joint version of a scenario, in the sense that they naturally give rise to paths and cycles in the new scenario.

\begin{proposition}\label{prop: cycle1}
Let $\scenario$ be a measurement scenario. Let $k\ge1$, and let $\underline{\D}:=\{C_1,\dots, C_n\}\subseteq\Cov{k}$ be an $n$-path for $\Cov{k}$. Then, the set 
\[
\underline{\mathfrak{D}'}=\{\K_1,\dots, \K_{n-1}\}:=\{\{C_1,C_2\},\{C_2,C_3\},\dots,\{C_{n-1},C_n\}\}
\]
is an $(n-1)$-path for $\Cov{k+1}$. 
Moreover, if $\underline{\D}$ is a cycle, then 
\[
\underline{\mathfrak{D}'} =\{\K_1,\dots, \K_n\}:=\{\{C_1,C_2\},\{C_2,C_3\},\dots,\{C_{n-1},C_n\},\{C_n,C_1\}\} 
\]
is a chordless $n$-cycle for $\Cov{k+1}$.
\end{proposition}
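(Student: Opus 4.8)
The plan is to verify the defining incidence conditions of Definition \ref{defn: sequence} directly for the edge representation, with the whole argument resting on a single hypothesis recorded in that definition: the underlying vertices $C_1,\dots,C_n$ are \emph{pairwise distinct} contexts of $\Cov{k-1}$. Throughout I would read indices modulo $n$, setting $C_{n+1}:=C_1$ so that $\K_i:=\{C_i,C_{i+1}\}$ makes sense for $i=1,\dots,n$ in the cycle case and for $i=1,\dots,n-1$ in the path case.

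First I would check that each $\K_i$ is a legitimate context of $\Cov{k}$: the path condition gives $C_i\cap C_{i+1}\neq\emptyset$ and distinctness gives $C_i\neq C_{i+1}$, so $\{C_i,C_{i+1}\}$ is an unordered pair of distinct intersecting elements of $\Cov{k-1}$, hence lies in $\Cov{k}$ by Definition \ref{defn: joint scenario}. Distinctness of the $\K_i$ among themselves then follows because $\{C_i,C_{i+1}\}=\{C_j,C_{j+1}\}$, combined with the distinctness of the $C_\ell$, forces $\{i,i+1\}=\{j,j+1\}$ and hence $i=j$. The path claim now reduces to the consecutive incidences $\K_i\cap\K_{i+1}\ni C_{i+1}$, which are nonempty, so $\underline{\mathfrak{D}'}=\{\K_1,\dots,\K_{n-1}\}$ is an $(n-1)$-path for $\Cov{k+1}$; the count $n-1$ is simply the number of edges of an $n$-vertex path.

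For the cycle case (with $n\ge 3$, as is standard) the extra hypothesis $C_n\cap C_1\neq\emptyset$ legitimises $\K_n=\{C_n,C_1\}$ and supplies the closing incidences $\K_{n-1}\cap\K_n\ni C_n$ and $\K_n\cap\K_1\ni C_1$, so $\{\K_1,\dots,\K_n\}$ is an $n$-cycle for $\Cov{k+1}$. The substantive step is chordlessness, and here I would compute, for $i\neq j$, the set intersection
\[
\K_i\cap\K_j=\{C_\ell : \ell\in\{i,i+1\}\cap\{j,j+1\}\},
\]
where the equality again uses distinctness of the $C_\ell$. Thus $\K_i\cap\K_j\neq\emptyset$ exactly when the index blocks $\{i,i+1\}$ and $\{j,j+1\}$ overlap modulo $n$, i.e.\ exactly when $i$ and $j$ are cyclically adjacent. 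A chord requires non-consecutive indices with $\{i,j\}\neq\{1,n\}$, which is precisely the cyclically non-adjacent case; for such $i,j$ the blocks are disjoint and $\K_i\cap\K_j=\emptyset$, so no chord exists.

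I expect the only delicate points to be bookkeeping rather than content: reading indices modulo $n$ so that the closing edge $\K_n$ and the excluded adjacency $\{1,n\}$ are handled uniformly, and noting that the chordless claim is vacuous when $n=3$ (all three index pairs are then cyclically adjacent). Since it is exactly the distinctness of the $C_\ell$ that makes these set-theoretic intersections faithfully mirror the overlaps of the index blocks, I would invoke that hypothesis explicitly at each of the three stages rather than leave it tacit.
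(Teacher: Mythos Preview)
Your proof is correct and follows essentially the same route as the paper's: verify that each $\K_i$ is a context of $\Cov{k}$ via the intersection condition, check distinctness and consecutive overlaps, and argue chordlessness from the pairwise distinctness of the $C_\ell$. The only differences are cosmetic---you phrase the chordlessness step as a direct computation of $\K_i\cap\K_j$ in terms of index blocks, whereas the paper argues by contradiction, and you are slightly more explicit about the bookkeeping (the $n=3$ vacuity, the need for $C_i\neq C_{i+1}$).
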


\begin{proof}
The elements of $\underline{\mathfrak{D}}'$ are all distinct because the elements of $\D$ are all distinct. Moreover, $\underline{\mathfrak{D}}'\subseteq\Cov{k}$ because $C_i\cap C_{i+1}\neq\emptyset$ for all $1\leq i\leq n-1$. 
We have $\K_i\cap \K_{i+1}=\{C_i,C_{i+1}\}\cap\{C_{i+1},C_{i+2}\}=\{C_{i+1}\}\neq\emptyset$ for all $1\leq i\leq n-2$. If $\underline{\D}$ is a cycle, then $C_n\cap C_1\neq\emptyset$, thus $\{C_n,C_1\}\in\Cov{k}$, and we have $\K_n\cap \K_1=\{C_n,C_1\}\cap\{C_1,C_2\}=\{C_1\}\neq\emptyset$, which proves that $\underline{\mathfrak{D}}'$ is a cycle. To prove that it is chordless, suppose by contradiction that there exist two non-consecutive indices $i,j$, with $\{i,j\}\neq\{1,n\}$, such that $\K_i\cap \K_j\neq\emptyset$. Then $\{C_i,C_{i+1}\}\cap\{C_j,C_{j+1}\}\neq\emptyset$, which contradicts the fact that the $C_i$'s are all distinct. 
\end{proof}

Consider a measurement scenario $\scenario$ and let $k\ge 2$, $n\ge 3$. Let $\D_\bullet:=\{C_1,\dots, C_n\}\subseteq\Cov{k-1}$ be an $n$-cycle for $\Cov{k}$. By definition of a cycle, we know that there exist $k_1,\dots k_n\in\Cov{k-2}$ such that $\{k_n\}=C_n\cap C_1$ and $\{k_i\}=C_i\cap C_{i+1}$ for all $1\leq i\leq n-1$. With this notation, we can prove the following proposition.

\begin{proposition}\label{prop: cycle2}
Let $\scenario$ be a measurement scenario, let $k\ge 2$ and $n\ge 3$. Let $\D_\bullet:=\{C_1,\dots, C_n\}\subseteq\Cov{k-1}$ be chordless $n$-cycle for $\Cov{k}$ such that it is not an improper $3$-cycle. Then the set 
\[
D_\bullet'=\{k_1,\dots k_n\}\subseteq\Cov{k-2}
\]
is an $n$-cycle for $\Cov{k-1}$.
\end{proposition}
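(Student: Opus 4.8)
The plan is to exploit the fact that, at every level, a context of $\Cov{k-1}$ is a two-element subset of $\Cov{k-2}$, so the situation is essentially graph-theoretic: regarding $\Cov{k-2}$ as a vertex set in which two contexts are adjacent exactly when they intersect, the elements of $\Cov{k-1}$ are the edges of this intersection graph, and $\D_\bullet$ records a chordless cycle in the associated line graph. First I would fix notation along the lines already set up before the statement: since $C_i,C_{i+1}\in\Cov{k-1}$ are distinct two-element sets with nonempty intersection, that intersection is a single element $k_i$, so $k_i\in C_i\cap C_{i+1}$ and in particular $k_i\in C_{i+1}$ and $k_{i+1}\in C_{i+1}$ for every cyclic index $i$ (indices read mod $n$, with the convention from Definition \ref{defn: sequence} that $1$ and $n$ count as consecutive).

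The heart of the argument is to show that the $k_i$ are pairwise distinct; everything else is then formal. The key observation is that if $k_i=k_j=:v$ for some $i\neq j$, then $v$ lies in all four contexts $C_i,C_{i+1},C_j,C_{j+1}$, so as soon as two of these four are non-consecutive in the cycle their intersection contains $v$ and is therefore nonempty, contradicting chordlessness. I would organise this by the cyclic distance between $i$ and $j$: if $j\notin\{i-1,i,i+1\}$, then $C_i,C_j$ already form a chord; if $j=i\pm 1$, then for $n\ge 4$ the non-consecutive pair $C_i,C_{i+2}$ (respectively $C_{i-1},C_{i+1}$) contains $v$ and is again a chord. This rules out every collision when $n\ge 4$. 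For $n=3$ these non-consecutive pairs collapse — all index pairs become consecutive — and here the remaining hypothesis enters: if any two of $k_1,k_2,k_3$ coincide, then $v$ lies in all three $C_i$, forcing $k_1=k_2=k_3$, which is precisely the improper $3$-cycle configuration excluded by assumption. Hence in every admissible case the $k_i$ are distinct.

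With distinctness established, the conclusion follows quickly. For each cyclic index $i$, both $k_i$ and $k_{i+1}$ lie in the two-element set $C_{i+1}$ and $k_i\neq k_{i+1}$, so $C_{i+1}=\{k_i,k_{i+1}\}$; since $C_{i+1}\in\Cov{k-1}=\left(\Cov{k-2}\right)^{(1)}$, its two elements have nonempty intersection, i.e. $k_i\cap k_{i+1}\neq\emptyset$ (reading $k_i,k_{i+1}$ as contexts of $\Cov{k-2}$). Applying the same reasoning to $C_1=\{k_n,k_1\}$ gives $k_n\cap k_1\neq\emptyset$. Therefore $D_\bullet'=\{k_1,\dots,k_n\}$ is a set of $n$ distinct contexts of $\Cov{k-2}$ whose consecutive cyclic intersections are nonempty, which is exactly the definition of an $n$-cycle for $\Cov{k-1}$.

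The main obstacle is the distinctness step: it requires a genuinely cyclic case analysis, and one must be careful with the convention that $1$ and $n$ are consecutive, so that the chordless hypothesis is invoked only for truly non-adjacent pairs (in particular the pair $\{1,n\}$ must be excluded from being treated as a chord). The degeneration at $n=3$ is the single place where the \emph{not improper} hypothesis is used, and pinning down precisely why it is needed there but not for $n\ge 4$ is the delicate point of the proof.
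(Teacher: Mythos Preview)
Your proposal is correct and follows essentially the same route as the paper: first establish that the $k_i$ are pairwise distinct by turning any coincidence into a forbidden chord (or, for $n=3$, into the improper star configuration), then read off $C_{i+1}=\{k_i,k_{i+1}\}$ and use membership in $\Cov{k-1}$ to get $k_i\cap k_{i+1}\neq\emptyset$. The only cosmetic difference is that you organise the distinctness step by cyclic distance and spell out the $n=3$ case explicitly, whereas the paper treats non-consecutive and consecutive indices separately and folds the $n=3$ degeneration into the phrase ``chordless and not an improper $3$-cycle''; the underlying argument is identical.
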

\begin{proof}
First of all, we need to verify that the $k_i$'s are all disinct. Suppose there are two indices $1\leq i,j\leq n$ such that $k_i=k_j$. Then $i$ and $j$ must be consecutive because otherwise we would have $C_i\cap C_j=\{k_i\}\neq\emptyset$, which contradicts the fact that $\D_\bullet$ is chordless. Thus, we only need to prove that $k_i\neq k_n$ and that $k_i\neq k_{i+1}$ for all $1\leq i\leq n-1$. Suppose $1\leq i\leq n-2$ and assume $k_i=k_{i+1}$, then we have 
\[
C_i\cap C_{i+1}=\{k_i\}=\{k_{i+1}\}=C_{i+1}\cap C_{i+2},
\]
which implies that $C_i\cap C_{i+2}=\{k_i\}\neq\emptyset$ which contradicts the fact that $\D_\bullet$ is chordless and not an improper $3$-cycle. We can prove that $k_{n-1}\neq k_n$ and $k_n\neq k_1$ with the same argument. 

We are only left to prove that consecutive $k_i$'s intersect. Because the $k_i$'s are all distinct, we know that $C_i=\{k_{i-1},k_i\}$ for all $2\leq i\leq n$, and $C_1=\{k_n,k_1\}$. Since $C_1,\dots, C_n\in\Cov{k-1}$, this implies that $k_{i-1}\cap k_i\neq\emptyset$ for all $2\leq i\leq n$, and $k_1\cap k_n\neq\emptyset$. 
\end{proof}

We define the notion of \emph{cyclic scenario}. 

\begin{defn}\label{defn: cyclic scenario}
A measurement scenario $\scenario$ is called \emph{cyclic} if $\Cov{1}$ is a chordless cycle (in edge representation).
\end{defn}

Thanks to Proposition \ref{prop: cycle1}, we immediately have the following: 
\begin{proposition}\label{prop: cyclicn}
Let $\scenario$ be a cyclic scenario, and let $n:=|\M|$. Then $\Cov{k}$ is a chordless $n$-cycle for all $k\ge 1$. In particular, $\scenario^{(l)}$ is cyclic for all $l\ge 0$. 
\end{proposition}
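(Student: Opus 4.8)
The plan is to prove the statement by induction on $k$, with Proposition \ref{prop: cycle1} doing the work in the inductive step; the ``in particular'' clause will then be an immediate rereading of Definition \ref{defn: cyclic scenario}.

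For the base case $k=1$, I would note that ``$\scenario$ cyclic'' means by definition that $\Cov{1}$ is a chordless cycle in edge representation. The only thing left to pin down is its length: since $\Cov{1}$ is a measurement cover of $\Meas{1}=\M$, we have $\bigcup_{M\in\Cov{1}}M=\M$, so the cycle must pass through all $n:=|\M|$ vertices; as a cycle has exactly as many vertices as edges, $\Cov{1}$ is a chordless $n$-cycle.

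For the inductive step, I would assume $\Cov{k}$ is a chordless $n$-cycle and write its vertex representation as a cycle $\{C_1,\dots,C_n\}\subseteq\Cov{k-1}$ for $\Cov{k}$, so that $\Cov{k}$ itself is the set of edges $\{\{C_1,C_2\},\dots,\{C_n,C_1\}\}$. Applying the cycle half of Proposition \ref{prop: cycle1} then yields directly that these $n$ edges $\K_i$ form a chordless $n$-cycle for $\Cov{k+1}$. The one point that needs care — and which I expect to be the main (though mild) obstacle — is to conclude that $\Cov{k+1}$ is \emph{exactly} this cycle and carries no further edges. Here I would use that the vertices of the graph generated by $\Cov{k+1}$ are, by definition, precisely the elements of $\Cov{k}$, i.e. the $n$ edges $\K_i$, and that two of them are adjacent in $\Cov{k+1}$ exactly when they intersect. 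The chordlessness asserted by Proposition \ref{prop: cycle1} says no two non-consecutive $\K_i,\K_j$ intersect, so the only edges of $\Cov{k+1}$ are the consecutive pairs; hence the graph of $\Cov{k+1}$ is the chordless $n$-cycle, closing the induction.

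Finally, for the closing assertion I would simply observe that $\scenario^{(l)}$ being cyclic means, by Definition \ref{defn: cyclic scenario} applied to $\scenario^{(l)}$, that its first joint cover $\Cov{l+1}$ is a chordless cycle; since the induction gives that $\Cov{k}$ is a chordless $n$-cycle for every $k\ge 1$, taking $k=l+1$ covers all $l\ge 0$ (the case $l=0$ being the hypothesis itself).
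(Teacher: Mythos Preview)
Your proposal is correct and follows the same route as the paper, which simply declares the result ``immediate from Proposition~\ref{prop: cycle1}'' without further detail. You spell out exactly the points the paper elides: pinning down the length of the cycle in the base case via the covering condition, and in the inductive step arguing that $\Cov{k+1}$ is not merely a supergraph containing a chordless $n$-cycle but is \emph{equal} to it, using that the vertex set of $\Cov{k+1}$ is precisely $\Cov{k}=\{\K_1,\dots,\K_n\}$ and that chordlessness forbids any extra edges.
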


\begin{rem}\label{rem: notation}
Before we proceed, we shall introduce a convention concerning notation. Suppose we have a chordless $n$-path $\D_\bullet:=\{C_1,\dots, C_n\}$ for $\Cov{k}$, with $n\ge 1$ and $k\ge 2$. Suppose, in addition, that $\D_\bullet$ is not an improper $3$-cycle. Proposition \ref{prop: cycle2} shows that we can relabel the components $c_i^1,c_i^2$ of each $C_i=\{c_i^1, c_i^2\}$ in such a way that $c_i^2=c_{i+1}^1$ for all $1\leq i\leq n-1$. If $\D$ is a cycle, we also have $c_n^2=c_1^1$.\footnote{In other words, we relabel the components of the $C_i$'s in such a way that $k_i=c_i^2$ for all $1\leq i\leq n$, where $k_1,\dots k_n$ are defined as in Proposition \ref{prop: cycle2}}
This notation will be used extensively in many of the proofs of this section. To clarify how it is constructed, we provide a graphical representation in Figure \ref{fig: path}.

\begin{figure}[htbp]
\centering
\includegraphics[scale=0.6]{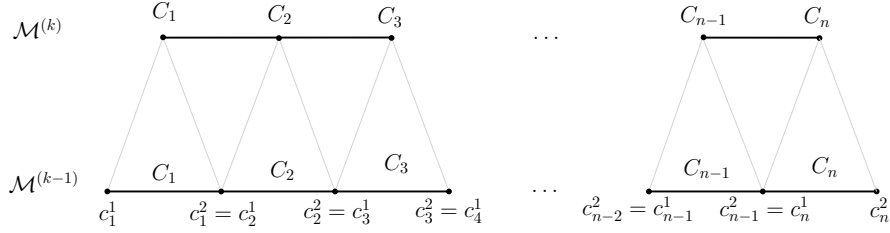}
\caption{The standard notation for $n$-paths.}\label{fig: path}
\end{figure}

%
%
%
%
%

\end{rem}

Suppose we have a cyclic scenario, and a $n$-path $\underline{\mathfrak{D}}:=\{\C_1,\dots \C_n\}\subseteq\Cov{k}$ for $\Cov{k}$, with $n<|\M|$. Because $\scenario$ is chordless, we know by Proposition \ref{prop: cyclicn} that $\Cov{k}$ is a chordless $|\M|$-cycle. Thus, the path $\underline{\mathfrak{D}}$ must be chordless as well, as the existence of a chord for $\underline{\mathfrak{D}}$ would imply the existence of a chord for $\Cov{k}$. Moreover, because $n<|\M|$, we know that $\underline{\mathfrak{D}}$ is not an improper $3$-cycle, thus we can use the notation of Remark \ref{rem: notation}. Using this notation, we formulate the following proposition.

\begin{proposition}\label{prop: cycle3}
Let $\scenario$ be a cyclic scenario, and let $k\ge2$, $2\leq n<|\M|$. Let $\underline{\mathfrak{D}}:=\{\C_1,\dots \C_n\}\subseteq\Cov{k}$ be an $n$-path for $\Cov{k}$. Then the set 
\[
\underline{\D}':=\{K_1',K_1, K_2,\dots K_{n}\},
\]
where $K_1':=C_1^1$ and $K_i:=C_i^2$ for all $1\leq i\leq n$, is an $(n+1)$-path for $\Cov{k-1}$
\end{proposition}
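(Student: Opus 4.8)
The plan is to verify directly that $\underline{\D}'$ satisfies Definition \ref{defn: sequence}: that its $n+1$ elements are pairwise distinct and that consecutive ones intersect. First I would fix notation. Since $\scenario$ is cyclic, Proposition \ref{prop: cyclicn} tells us that $\Cov{k}$ is a chordless $|\M|$-cycle, so the sub-path $\underline{\mathfrak{D}}$ is itself chordless, and because $n<|\M|$ it is a proper sub-arc and in particular not an improper $3$-cycle. Hence Remark \ref{rem: notation} applies and we may relabel each $\C_i=\{C_i^1,C_i^2\}$ so that $C_i^2=C_{i+1}^1$ for $1\le i\le n-1$; with this labelling the proposed vertices $K_1'=C_1^1$ and $K_i=C_i^2$ are exactly the two endpoints together with the $n-1$ shared vertices running along the path.

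The consecutive-intersection condition is the easy half. By construction the consecutive pairs of $\underline{\D}'$ are precisely the contexts of the path: $\{K_1',K_1\}=\{C_1^1,C_1^2\}=\C_1$, and $\{K_i,K_{i+1}\}=\{C_{i+1}^1,C_{i+1}^2\}=\C_{i+1}$ for $1\le i\le n-1$. Since each $\C_i$ is an element of the joint cover, it is by definition a pair of two \emph{distinct}, \emph{intersecting} contexts one level down; this simultaneously shows that consecutive elements of $\underline{\D}'$ have non-empty intersection and that each $\C_i$ contributes two genuinely distinct components, a fact I use below.

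The distinctness of the $n+1$ vertices is the crux, and I would argue it in the style of the proof of Proposition \ref{prop: cycle2}. The preliminary step is to record that $\C_1\cap\C_n=\emptyset$: as $\underline{\mathfrak{D}}$ is a proper sub-arc ($n<|\M|$) of the chordless $|\M|$-cycle $\Cov{k}$, its two endpoints cannot be adjacent, for otherwise the path would close into a cycle strictly shorter than $|\M|$. With this in hand I would establish the auxiliary fact that every vertex $w$ one level down lies in at most two of the contexts $\C_1,\dots,\C_n$, and in two only when they are consecutive: if $w\in\C_a\cap\C_b$ with $a<b$ non-consecutive, then either $\{a,b\}\ne\{1,n\}$, contradicting chordlessness, or $\{a,b\}=\{1,n\}$, contradicting $\C_1\cap\C_n=\emptyset$. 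Distinctness then follows by cases. Two consecutive vertices coincide only if the two components of a single $\C_i$ coincide, which is impossible; and a coincidence $K_i=K_j$ of non-consecutive vertices would place a common vertex in two non-consecutive contexts, i.e.\ produce a chord (or the excluded intersection $\C_1\cap\C_n$). Combining distinctness with the adjacency of the previous paragraph yields that $\underline{\D}'$ is an $(n+1)$-path for $\Cov{k-1}$.

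The main obstacle, relative to the cyclic case of Proposition \ref{prop: cycle2}, is the handling of the two endpoints $K_1'$ and $K_n$, which are not shared vertices and hence are not captured by the intersection bookkeeping used there; it is exactly for these that the hypothesis $n<|\M|$ is indispensable, since it is what removes the wrap-around intersection $\C_1\cap\C_n\ne\emptyset$ that would otherwise let an endpoint collapse onto another vertex. The remaining small-$n$ configurations are absorbed by the same argument, using that two distinct contexts of the joint cover meet in at most one vertex, so that a putative coincidence of endpoints would force two of the $\C_i$ to be equal, contradicting that $\underline{\mathfrak{D}}$ consists of $n$ distinct contexts.
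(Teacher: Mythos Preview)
Your proof is correct and follows essentially the same approach as the paper: reduce distinctness to a chord argument (a coincidence $K_i=K_j$ forces $\C_a\cap\C_b\neq\emptyset$ for non-consecutive $a,b$, contradicting chordlessness of $\Cov{k}$), and read off consecutive intersections directly from the fact that each $\{K_i,K_{i+1}\}$ is one of the $\C_i\in\Cov{k}$. The paper carries this out by the same case split (consecutive vs.\ non-consecutive indices, plus the extra cases involving $K_1'$), citing the argument of Proposition~\ref{prop: cycle2}; your auxiliary fact that every vertex lies in at most two of the $\C_i$, and then only consecutive ones, is just a clean repackaging of that same chord argument. If anything, you are slightly more explicit than the paper about why $n<|\M|$ is needed (to exclude the wrap-around intersection $\C_1\cap\C_n$) and about the residual $n=2$ endpoint case, which the paper subsumes under ``in the same way.''
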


\begin{proof}

The argument we will use to prove that the $K_i$'s are all distinct essentially coincides with the one of Proposition \ref{prop: cycle2}. Let us start by proving that $K_i\neq K_{i+1}$. Suppose $K_i=K_{i+1}$ for some $1\leq i\leq n-1$. Then we have $C_i^2=C_{i+1}^2$, which implies $\C_{i+1}=\{C_{i+1}^1,C_{i+1}^1\}=\{C_{i+1}^1\}$, which is not a context of $\Cov{k}$. We can prove that $K_1'\neq K$ in the same way. Now, suppose there are two non-consecutive indices such that $K_i=K_j$. This implies $C_i^2=C_j^2$. Hence
\[
C_i\cap C_j=\{C_i^1,C_i^2\}\cap \{C_j^1,C_j^2\}=\{C_i^2\}\neq\emptyset,
\]
however, this would imply that the cover $\Cov{k}$, which is chordless by definition of a cyclic scenario, has a chord $\{C_i, C_j\}\in\Cov{k}$, which is obviously a contradiction. We can prove that $K_1'\neq K_j$ for all $2\leq j\leq n$ in the same way. 

We are only left to prove that consecutive $K_i$'s intersect. This can be done following exactly the same argument as in the proof of Proposition \ref{prop: cycle2}.
\end{proof}

\section{The cohomology of cyclic models}

In this section we will formalise the intuitive idea discussed at the end of Section \ref{sec: interpretation}, and generalise it to prove that we can always find a cohomological witness for contextuality in the joint models of a cyclic empirical model. 

\subsection{Preliminaries}

First of all, we need to introduce some preliminary definitions. Let $\S$ be an empirical model on a measurement scenario $\scenario$. We will choose, as a representative for each joint model $\Model{k}$, the presheaf of abelian groups
\begin{equation}\label{equ: representative}
\FModel{k}:=F_{\mathbb{Z}_2}\Model{k}: \P(\Meas{k})^{op}\longrightarrow \textbf{AbGrp}.
\end{equation}
We can now formulate the following definition, which is a natural extension of Definition \ref{defn: LCk} to account for cohomology. 

\begin{defn}\label{defn: CLCk}
Let $\S$ be an empirical model on a measurement scenario $\scenario$, with representative $\F$ as in \eqref{equ: representative}. Let $s$ be a local section of $\S$. In view of the results of Section \ref{sec: contextuality}, we say that \emph{$\S$ is $\CLCk{k}$ at $s$}, and write $\CLCk{k}(\S, s)$, if we have $\CLC(\Model{k},t)$ for every local section $t$ of $\Model{k}$ such that $s\in\flatten(t)$. 
\end{defn}

We can use this definition to extend Theorem \ref{thm: main} to joint models:

\begin{theorem}\label{thm: main2}
Let $\S$ be an empirical model. Given a section $s$ of $\S$, if there exists a $k\ge0$ such that $\CLCk{k}(\S,s)$, then $\LC(\S,s)$. Moreover, $\CSC(\Model{k})\Rightarrow \SC(\S)$. 
\end{theorem}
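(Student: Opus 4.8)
The plan is to reduce the two claims of Theorem \ref{thm: main2} to the already-established Theorem \ref{thm: main} applied to the joint model $\Model{k}$, and then transfer the conclusion back to $\S$ via the equivalences of Section \ref{sec: contextuality}. The key observation is that $\CLCk{k}(\S,s)$ and $\LCk{k}(\S,s)$ are defined in exactly parallel ways (Definitions \ref{defn: CLCk} and \ref{defn: LCk}), namely as a quantification over all local sections $t$ of $\Model{k}$ with $s\in\flatten(t)$, the only difference being that one asks for $\CLC(\Model{k},t)$ and the other for $\LC(\Model{k},t)$. Since Theorem \ref{thm: main} gives $\CLC(\Model{k},t)\Rightarrow\LC(\Model{k},t)$ for the single joint model $\Model{k}$ (which is a genuine empirical model by Proposition \ref{prop: model}, with representative $\FModel{k}=F_{\mathbb{Z}_2}\Model{k}$), this implication lifts across the universal quantifier.

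For the first statement, I would argue as follows. Suppose $\CLCk{k}(\S,s)$ holds for some $k\ge 0$. By Definition \ref{defn: CLCk}, this means $\CLC(\Model{k},t)$ for every local section $t$ of $\Model{k}$ with $s\in\flatten(t)$. Applying Theorem \ref{thm: main} to the empirical model $\Model{k}$, each such $t$ satisfies $\LC(\Model{k},t)$. Hence, by Definition \ref{defn: LCk}, we obtain $\LCk{k}(\S,s)$. Finally, Corollary \ref{cor: LCk} states that $\LCk{k}(\S,s)$ for some (equivalently, any) $k\ge 1$ is equivalent to $\S$ being logically contextual at $s$, so we conclude $\LC(\S,s)$. (The case $k=0$ is immediate since $\Model{0}=\S$ and $\CLC(\S,s)\Rightarrow\LC(\S,s)$ is exactly Theorem \ref{thm: main}.)

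For the second statement, suppose $\CSC(\Model{k})$. By definition of cohomological strong contextuality, $\CLC(\Model{k},t)$ holds for every local section $t$ of $\Model{k}$. Theorem \ref{thm: main} then yields $\SC(\Model{k})$, i.e. $\Model{k}$ is strongly contextual. By Corollary \ref{cor: SC2}, the strong contextuality of $\Model{k}$ for some $k\ge 1$ is equivalent to the strong contextuality of $\S$ itself, so $\SC(\S)$ follows.

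I do not anticipate a genuine analytic obstacle here, since the theorem is essentially a bookkeeping combination of Theorem \ref{thm: main} with the equivalences of Corollaries \ref{cor: LCk} and \ref{cor: SC2}; the content of the paper lives in those earlier results. The one point requiring care is the matching of the quantifier structures in Definitions \ref{defn: LCk} and \ref{defn: CLCk}: I must verify that ``for all $t$ with $s\in\flatten(t)$'' is used identically in both, so that the pointwise implication $\CLC(\Model{k},t)\Rightarrow\LC(\Model{k},t)$ propagates cleanly. I would also state explicitly that $\Model{k}$ is a legitimate empirical model (via Proposition \ref{prop: model}), so that Theorem \ref{thm: main} is applicable to it, and note the harmless edge case $k=0$ where the statement collapses to Theorem \ref{thm: main} directly.
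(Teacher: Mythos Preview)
Your proposal is correct and matches the paper's own proof essentially verbatim: unwind Definition \ref{defn: CLCk}, apply Theorem \ref{thm: main} to each section $t$ of $\Model{k}$ to obtain $\LCk{k}(\S,s)$ (respectively $\SC(\Model{k})$), and then invoke Corollary \ref{cor: LCk} (respectively Corollary \ref{cor: SC2}) to descend to $\S$. Your extra remarks about Proposition \ref{prop: model} justifying the applicability of Theorem \ref{thm: main} to $\Model{k}$ and the trivial $k=0$ case are sound additions but not needed for the argument as the paper presents it.
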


\begin{proof}
Suppose $\CLCk{k}(\S, s)$, i.e. $\CLC(\Model{k},t)$ for every local section $t$ of $\Model{k}$ such that $s\in\flatten(t)$. By Theorem \ref{thm: main}, it follows that $\LC(\Model{k}, t)$ for all $t$ such that $s\in\flatten(t)$. In other words, we have $\LCk{k}(\S, s)$ (cf. Definition \ref{defn: LCk}). By Corollary \ref{cor: LCk}, this implies that $\S$ is logically contextual at $s$. 

Now, suppose $\CSC(\Model{k})$, then, by Theorem \ref{thm: main}, we have $\SC(\Model{k})$. By Corollary \ref{cor: SC2} we conclude that $\SC(\S)$. 
\end{proof}

We now introduce the notion of \emph{partial family}.

\begin{defn}\label{defn: path}
Let $\S$ be an empirical model over a measurement scenario $\scenario$, and $n,k\ge 1$. An \emph{
$n$-partial family for $\FModel{k}$} is a family 
\[
\left\{f_\C\in \FModel{k}(\C)\right\}_{\C\in\underline{\mathfrak{D}}}
\]
over an 
$n$-path $\underline{\mathfrak{D}}=\{\C_1,\dots,\C_n\}\subseteq\Cov{k}$, which is compatible for $\FModel{k}$, and satisfies the following conditions:
\begin{align}
f_{\C_1}\mid_{C_1^1} &\in\Model{k-1}(C_1^1), \label{equ: 1}\\
f_{\C_n}\mid_{C_n^2} &\in\Model{k-1}(C_n^2),\label{equ: 2}
\end{align}
(cf. Remark \ref{rem: notation} for notation).
A partial family is called \emph{standard} if there exists a family $\{s_\C\in\Model{k}(\C)\}_{\C\in\underline{\mathfrak{D}}}$, compatible for $\Model{k}$ such that 
\begin{align}
s_{\C_1}\mid_{C_1^1} &= f_{\C_1}\mid_{C_1^1}, \label{equ: extreme1}\\
s_{\C_n}\mid_{C_n^2} &= f_{\C_n}\mid_{C_n^2}. \label{equ: extreme2}
\end{align}
In this case, the family $\{s_\C\in\Model{k}(\C)\}_{\C\in\underline{\mathfrak{D}}}$ is called the \emph{the standard form of $\left\{f_\C\in \FModel{k}(\C)\right\}_{\C\in\underline{\mathfrak{D}}}$}
\end{defn}
Note that a $1$-partial family for $\FModel{k}$ is simply a single section $f\in\FModel{k}(\C)$ over a context $\C\in\Cov{k}$, which verifies conditions \eqref{equ: 1} and \eqref{equ: 2}.

We have introduced partial families in order to model the typical cohomology false positive. Indeed, non-standard partial families are nothing but partial families of $\FModel{k}$ (i.e. families of linear combinations of sections of $\Model{k}$) that cannot be replaced by simple families of $\Model{k}$, just like a cohomology false positive is a family for $\F$ which cannot be replaced by a family of $\S$. We give some graphical intuition on partial families in Figure \ref{fig: partial families}, to clarify this concept. Throughout the rest of the section, we will show how non-standard families can be suppressed by applying the joint model construction a sufficient amount of times.

\begin{figure}[htbp]
\centering
\includegraphics[scale=0.6]{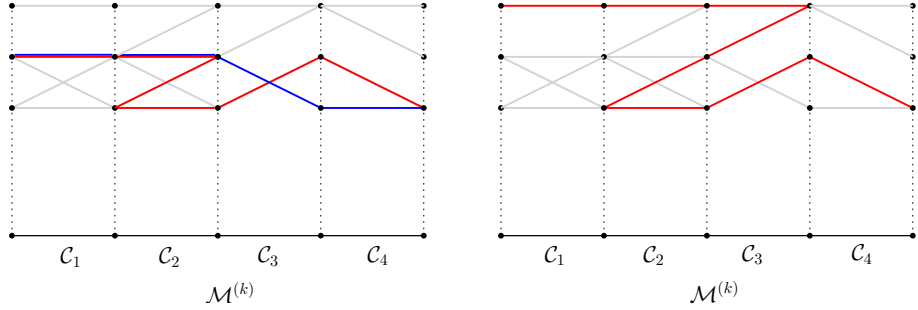}
\caption{Two examples of $4$-partial families for $\FModel{k}$ (in red) over the $4$-path $\{\C_1,\C_2,\C_3,\C_4\}\subseteq \Cov{k}$. On the left, a standard family, with its standard form highlighted in blue. On the right, a non-standard partial family.}\label{fig: partial families}
\end{figure}

\subsection{A complete cohomology invariant for contextuality in cyclic models}

We will now show how to get rid of non-standard partial families. This procedure will require a number of intermediate steps. 

The following lemma is called the no-Z lemma because it formalises the idea, introduced in Section \ref{sec: false positives}, that first joint models do not contain `Z' shaped paths which typically give rise to false positives in cohomology. 

\begin{lemma}[No-Z lemma]\label{lem: no Z}
Let $\S$ be an empirical model over a measurement scenario $\scenario$. Let $k\ge 1$, and $\C=\{C^1,C^2\}\in\Cov{k}$. Every $1$-partial family for $\FModel{k}$ over $\C$ of the form
\begin{equation}\label{equ: Z}
f_\C=(s_1,t_1)+(s_2,t_1)+(s_2,t_2),
\end{equation}
(where $s_i\in\Model{k-1}(C^1)$, $t_i\in\Model{k-1}(C^2)$ for all $i=1,2$) is standard.
\end{lemma}

\begin{proof}
Let $f_\C$ be a $1$-partial family defined by \eqref{equ: Z}. Because both $(s_1,t_1)$ and $(s_2,t_1)$ are in $\Model{k}(\C)$, we know that 
\begin{equation}
s_1\mid_{C^1\cap C^2}=t_1\mid_{C^1\cap C^2}=s_2\mid_{C^1\cap C^2} \label{equ: restriction}
\end{equation}
Moreover, $s_2\mid_{C^1\cap C^2}=t_2\mid_{C^1\cap C^2}$, given that $(s_2,t_2)\in\Model{k}(\C)$. Hence
\[
s_1\mid_{C^1\cap C^2}\stackrel{\eqref{equ: restriction}}{=}s_2\mid_{C^1\cap C^2}=t_2\mid_{C^1\cap C^2}.
\]
Therefore, $(s_1,t_2)\in\Model{k}(\C)$, and we have 
\begin{align*}
(s_1,t_2)\mid_{C^1} &=s_1\mid_{C^1}=s_1\mid_{C^1}+\underbrace{2\cdot s_2\mid_{C^1}}_{=0}=f_\C\mid_{C^1}\\
(s_1,t_2)\mid_{C^2} &=t_2\mid_{C^2}=t_2\mid_{C^2}+\underbrace{2\cdot t_1\mid_{C^2}}_{=0}=f_\C\mid_{C^2},
\end{align*}
which correspond to conditions \eqref{equ: extreme1} and \eqref{equ: extreme2} (we have used the fact that the coefficients are in $\mathbb{Z}_2$, hence $2=0$). This proves that $(s_1,t_2)$ is the standard form of $f_\C$. 
\end{proof}

We will now generalise the no-Z lemma to all the $1$-partial families for $\FModel{k}$. The proof essentially consists of a recursive algorithm which takes a $1$-partial family as input, and outputs a standard form by repeatedly applying the no-Z lemma to the first three segments of the partial family, which -- we show -- are always in a `Z' shape. 

 \begin{lemma}\label{lem: base case}
Let $\S$ be an empirical model on a scenario $\scenario$, and let $k\ge 1$. All the $1$-partial families for $\FModel{k}$ are standard.
\end{lemma}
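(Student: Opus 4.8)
The plan is to give a constructive recursive algorithm that takes an arbitrary $1$-partial family $f_\C$ for $\FModel{k}$ over a single context $\C=\{C^1,C^2\}\in\Cov{k}$ and transforms it into an equivalent standard family, using the no-Z lemma (Lemma \ref{lem: no Z}) as the basic reduction step. Recall that a $1$-partial family is just a single element $f_\C\in\FModel{k}(\C)=F_{\mathbb{Z}_2}\Model{k}(\C)$, so it is a formal $\mathbb{Z}_2$-linear combination of actual sections $(s,t)\in\Model{k}(\C)$, subject to the boundary conditions \eqref{equ: 1} and \eqref{equ: 2}. Since we work over $\mathbb{Z}_2$, such an $f_\C$ is simply a \emph{finite set} of distinct pairs $(s_i,t_j)$ (each appearing with coefficient $1$), and being standard means there is a single genuine section $(s,t)\in\Model{k}(\C)$ whose restrictions to $C^1$ and $C^2$ match $f_\C\mid_{C^1}$ and $f_\C\mid_{C^2}$ respectively.

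First I would set up the combinatorial picture. The pairs appearing in $f_\C$ can be viewed as edges of a bipartite graph whose left vertices are the sections $s\in\Model{k-1}(C^1)$ occurring in $f_\C$ and whose right vertices are the sections $t\in\Model{k-1}(C^2)$ occurring; each pair $(s_i,t_j)\in\Model{k}(\C)$ means $s_i$ and $t_j$ agree on $C^1\cap C^2$. The key structural observation is that all these sections must share the \emph{same} restriction to $C^1\cap C^2$: whenever $(s_i,t_j)$ and $(s_{i'},t_{j'})$ are connected through a chain of shared endpoints, their common overlaps coincide. Combined with the boundary conditions \eqref{equ: 1}--\eqref{equ: 2}, this will let me argue that the graph is connected (or, if not, that I may treat components separately), so that any two occurring sections over $C^1$ and any two over $C^2$ are compatible. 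I would then order the edges of $f_\C$ into a walk $s_1 t_1, s_2 t_1, s_2 t_2,\dots$ so that the first three segments always form a `Z' as in \eqref{equ: Z}; applying Lemma \ref{lem: no Z} collapses those three edges into a single edge $(s_1,t_2)$, strictly reducing the number of terms while preserving the boundary restrictions (here the $\mathbb{Z}_2$ cancellation $2=0$ is what makes the collapse work). Iterating, the recursion terminates in a family with a single term, which by construction is the standard form.

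The main obstacle I expect is the bookkeeping needed to guarantee that the first three terms of the (suitably reordered) family always genuinely present a `Z' shape to which Lemma \ref{lem: no Z} applies — i.e. that after each reduction one can still find two terms sharing a common $C^2$-endpoint (or $C^1$-endpoint) to feed into the lemma. This amounts to showing the reduction never gets `stuck' with an even, perfectly-matched configuration that has no repeated endpoint, and that the boundary conditions \eqref{equ: 1}--\eqref{equ: 2} are compatible with always being able to peel off a `Z'. I would handle this by an induction on the number of terms in $f_\C$: in the base case (one term) there is nothing to prove; in the inductive step I locate any repeated endpoint (which must exist once there are at least three terms forced to share a single overlap value, since the number of distinct sections over one fiber is then smaller than the number of edges), apply the no-Z reduction there to obtain a strictly shorter partial family with identical boundary restrictions, and invoke the induction hypothesis. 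Verifying that the reduced family still satisfies \eqref{equ: 1} and \eqref{equ: 2} is immediate because Lemma \ref{lem: no Z} preserves $f_\C\mid_{C^1}$ and $f_\C\mid_{C^2}$ exactly, so the induction goes through.
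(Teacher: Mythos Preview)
Your approach is essentially the same as the paper's: iterate the no-Z lemma to collapse the formal $\mathbb{Z}_2$-sum down to a single section, using the boundary conditions \eqref{equ: 1}--\eqref{equ: 2} to guarantee that a Z-shape is always available. The paper phrases this as an explicit recursive algorithm rather than an induction on the number of terms, but the content is identical.

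One justification in your sketch needs correcting, however. Your parenthetical reason for why a repeated endpoint must exist---``the number of distinct sections over one fiber is then smaller than the number of edges''---does not follow from anything you have established, and in any case a single repeated endpoint only gives you \emph{two} edges, not a full Z. The correct argument, which is precisely what the paper isolates as its Claims~1 and~2, is a parity argument driven directly by the boundary conditions: since $f_\C\mid_{C^1}=s_1$ is a \emph{single} section, every left vertex $s_i$ with $i\neq 1$ must have even degree in your bipartite graph, and symmetrically on the right. So starting from an edge $(s_1,t_1)$ with $t_1\neq t_l$, the vertex $t_1$ has even degree, forcing a second edge $(s_2,t_1)$; then $s_2$ has even degree, forcing a third edge $(s_2,t_2)$. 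That is what produces the Z, and it uses \eqref{equ: 1}--\eqref{equ: 2} essentially---merely having ``at least three terms'' is not enough on its own. Relatedly, your assertion that ``all these sections must share the same restriction to $C^1\cap C^2$'' holds only within a connected component of the bipartite graph; there may be additional components (necessarily with all vertices of even degree, hence contributing $0$ to both restrictions), which can simply be discarded, but this should be stated.
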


\begin{proof}
A $1$-partial family is a single section $f_\C\in\FModel{k}(\C)$ over a single context $\C=\{C^1,C^2\}\in\Cov{k}$, which verifies conditions \eqref{equ: 1} and \eqref{equ: 2}. We provide an algorithm that constructs a standard form $s_\C\in\Model{k}(\C)$. Let us start by enumerating the possible sections at $C^1$ and $C^2$ by denoting $\Model{k-1}(C^1)=\{s_1,s_2,\dots, s_n\}$, and $\Model{k-1}(C^2)=\{t_1,t_2,\dots, t_m\}$. Let
\[
I:=\{(i,j)\in[n]\times[m]\mid (s_i,t_j)\in\Model{k}(\C)\},
\]
where $[l]:=\{1,2,\dots, l\}$. By definition of $\FModel{k}$, the section $f_\C$ can be written as a formal linear combination of sections in $\Model{k}(\C)$:
\[
f_\C=\sum_{(i.j)\in I} \alpha_{ij}\cdot (s_i,t_j),
\]
where $\alpha_{ij}\in\mathbb{Z}_2$. 

If $f_\C\in\Model{k}(\C)$, then we are done, as $f_\C$ is already in standard form. 
Otherwise, we know by \eqref{equ: 1} that $f_\C\mid_{C^1}\in\Model{k-1}(C^1)$. Therefore, we can assume w.l.o.g. that $f_\C\mid_{C^1}=s_1$. Because of this, there exists a $j_1\in[m]$ such that $\alpha_{1j_1}=1$, and we can assume w.l.o.g. that $j_1=1$, which means that $f_\C$ contains the section $(s_1,t_1)\in\Model{1}(\C)$ in its summands, i.e.
\begin{equation}\label{equ: fc}
f_\C=(s_1,t_1)+\sum_{(i,j)\neq (1,1)}\alpha_{ij}(s_i,t_j)
\end{equation}

By equation \eqref{equ: 2}, we know that $f_\C\mid_{C^2}\in\Model{k-1}(C^2)$ and we can denote $f_\C\mid_{C^2}=t_l$, for some $l\in[m]$. If $l=1$, then we can immediately return $(s_1,t_1)$ as the standard form of $f_\C$. Otherwise we assume $l\neq 1$. 
\\
\begin{claim}{1}
There exists an index $i_1\neq 1$, $i_1\in[n]$ , such that $(i_1, 1)\in I$ and $\alpha_{i_11}=1$. W.l.o.g. we let $i_1=2$.
\end{claim}
~\\
\begin{claimproof}
Suppose \emph{ab absurdo} $\alpha_{i1}=0$ for all $1\neq i\in[n]$ such that $(i,1)\in I$. Then,  given \eqref{equ: fc}, we have
\[
f_\C=(s_1,t_1)+\sum_{j\neq 1}\alpha_{ij}(s_i,t_j).
\]
This implies
\[
f_\C\mid_{C^2}=t_1+\sum_{j\neq 1}\alpha_{ij}t_j,
\]
which always contains the summand $t_1\neq t_l$, and thus can never equal $t_l$, which is a contradiction. 
\end{claimproof}
~\\

Claim 1 shows that $f_\C$ always contains the summand $(s_2,t_1)$, i.e.
\begin{equation}\label{equ: fc2}
f_\C=(s_1,t_1)+(s_2,t_1)+\sum_{\substack{(i,j)\neq (1,1)\\ (i,j)\neq(2,1)}}\alpha_{ij}(s_i,t_j).
\end{equation}

\begin{claim}{2}
There exists an index $j_2\in[m]$, $j_2\neq 1$, such that $(2,j_2)\in I$ and $\alpha_{2j_2}=1$. W.l.o.g. we let $j_2=2$.
\end{claim}
~\\
\begin{claimproof}
Suppose by contradiction that $\alpha_{2j}=0$ for all $1\neq j\in[m]$ such that $(2,j)\in I$. Then, given \eqref{equ: fc2}, we have
\[
f_\C=(s_1,t_1)+(s_2,t_1)+\sum_{\substack{(i,j)\neq (1,1) \\ i\neq 2}}\alpha_{ij}(s_i,t_j).
\]
This implies 
\[
f_\C\mid_{\C^1}=s_1+s_2+\sum_{\substack{(i,j)\neq (1,1) \\ i\neq 2}}\alpha_{ij}s_i,
\]
which always contains the summand $s_2$, and thus can never equal $s_1$, which is a contradiction. 
\end{claimproof}
~\\

Claim 2 shows that $f_\C$ always contains the summand $(s_2,t_2)$, i.e. 
\[
f_\C=(s_1,t_1)+(s_2,t_1)+(s_2,t_2)+\sum_{\substack{(i,j)\neq (1,1)\\ (i,j)\neq(2,1)\\ (i,j)\neq (2,2)}}\alpha_{ij}(s_i,t_j).
\]
Notice how the first three summands are exactly the same as in \eqref{equ: Z}. This means that these first three `steps' of the partial family $f_\C$ are in a `Z' shape. This allows us to apply the no-Z lemma and substitute the $Z$ with a section in $\Model{k}(\C)$, as shown in Figure \ref{fig: Z}. 

\begin{figure}[htbp]
\centering
\includegraphics[scale=0.6]{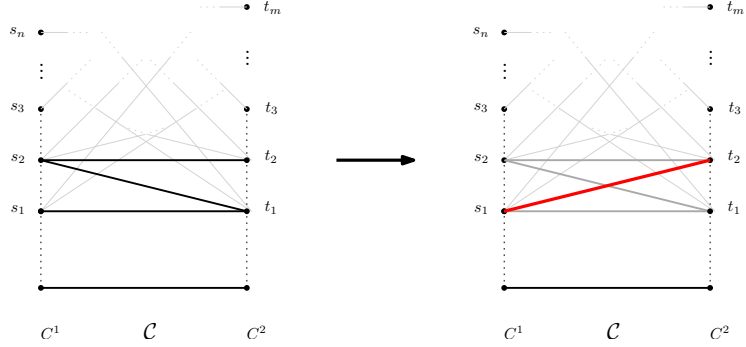}
\caption{A visualisation of the proof. On the left hand side the `Z' shape at the beginning of the partial family. On the right hand side, the substitution of the `Z' with a section of $\Model{k}(\C)$.}\label{fig: Z}
\end{figure}

More formally, by the no-Z lemma (Lemma \ref{lem: no Z}), $(1,2)$ must be in $I$, and the section $(s_1,t_2)\in\Model{k}(\C)$ is the standard form of the partial family $(s_1,t_1)+(s_2,t_1)+(s_2,t_2)$.

If $l=2$, then $(s_1,t_2)$ is the standard form of $f_\C$ and we are done. Otherwise we can re-input the partial family
\[
f'_\C:=(s_1,t_2)+\sum_{\substack{(i,j)\neq (1,1)\\ (i,j)\neq(2,1)\\ (i,j)\neq (2,2)}}\alpha_{ij}(s_i,t_j)
\]
into the algorithm. The algorithm obviously terminates as there is only a finite amount of sections. 
\end{proof}

The following theorem is the key result of the paper. It shows that, on cyclic scenarios, all the $n$-partial families for $\FModel{k}$, where $n\leq k$, can be replaced by a standard form of the same size. In other words, any potential cohomological false positive of size $n\leq k$ can be erased. This fact will lead us to a fundamental result, namely that it is sufficient to take the $(|\M|-1)$-joint model of a cyclic scenario to remove every cohomology false positive with certainty (Corollary \ref{cor: main}). 

\begin{theorem}\label{thm: standard families}
Let $\S$ be an empirical model on a cyclic scenario $\scenario$. Let $k\ge 1$ and let $n$ be such that $n\leq k$ and $n<|\M|$. All the $n$-partial families for $\FModel{k}$ are standard.
\end{theorem}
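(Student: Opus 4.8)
The plan is to argue by induction on the length $n$, taking Lemma \ref{lem: base case} as the base case $n=1$ (which holds at every level $k\ge 1$). For the inductive step I would reduce an $n$-partial family for $\FModel{k}$ over a path $\{\C_1,\dots,\C_n\}\subseteq\Cov{k}$ to a strictly shorter, $(n-1)$-partial family sitting one joint-model level lower, i.e.\ for $\FModel{k-1}$, and then invoke the inductive hypothesis. The hypothesis $n\le k$ is exactly the level budget that makes this well-founded: each reduction lowers both the length and the joint-model level by one, so after $n-1$ steps the recursion terminates at a $1$-partial family for $\FModel{k-n+1}$, and $k-n+1\ge 1$ is precisely $n\le k$. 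Since $\S$ is cyclic, Proposition \ref{prop: cyclicn} ensures every $\Cov{j}$ is a chordless $|\M|$-cycle, so all sub-paths met along the way are automatically chordless and, because $n<|\M|$, never improper $3$-cycles; this is what licenses the standard relabelling of Remark \ref{rem: notation} and the passage between levels furnished by Propositions \ref{prop: cycle2} and \ref{prop: cycle3}.

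Concretely, I would first use the recursive description of joint models — the pullback square defining $\Model{k}(\C)$ together with the identity $\Model{k}(\{C\})=\Model{k-1}(C)$ of \eqref{equ: subtlety} — to read each $f_{\C_i}\in\FModel{k}(\C_i)$ as a $\mathbb{Z}_2$-combination of compatible pairs over the level-$(k-1)$ contexts $C_i^1,C_i^2$. The boundary conditions \eqref{equ: 1} and \eqref{equ: 2} pin the two extreme legs $f_{\C_1}\mid_{C_1^1}$ and $f_{\C_n}\mid_{C_n^2}$ to genuine sections, so the terminal context behaves, in one of its legs, like a $1$-partial family. The engine of the step is the no-Z lemma (Lemma \ref{lem: no Z}): applying it at the terminal context collapses the twist located there, using the $\mathbb{Z}_2$ cancellation $2=0$ exactly as in \eqref{equ: Z}, and it does two things at once — it produces a genuine section of $\Model{k}(\C_n)$ matching $f$ at $C_n^2$, and it forces the adjacent interior restriction to become a single section, so that the first $n-1$ edges now satisfy both endpoint conditions and form an honest $(n-1)$-partial family for $\FModel{k-1}$.

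To that shortened family the inductive hypothesis applies and returns a standard form compatible for $\Model{k-1}$. I would then lift this standard form back up one level through the pairing underlying Propositions \ref{prop: cycle1} and \ref{prop: cycle2}, glue it to the genuine section obtained for the collapsed terminal edge, and verify that the result is a family of honest sections of $\Model{k}$, compatible along $\{\C_1,\dots,\C_n\}$, agreeing with the original data in the two pinned legs \eqref{equ: extreme1} and \eqref{equ: extreme2}. This exhibits the required standard form and closes the induction.

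The step I expect to be the main obstacle is the behaviour at the interior vertices, where the restrictions $f_{\C_i}\mid_{C_i^2}$ are allowed to be genuine $\mathbb{Z}_2$-linear combinations rather than single sections; a naive truncation of the path fails precisely because these interior restrictions need not lie in $\Model{k-1}$, so the first $n-1$ edges are not by themselves a partial family. The whole reason for climbing through the joint models is that one level collapses an interior twist spanning two contexts into a `Z' inside a single context, where Lemma \ref{lem: no Z} can act. The delicate bookkeeping — and the genuine risk of error — lies in transporting the pinning and compatibility relations faithfully across the change of level (the notational device of Remark \ref{rem: subtlety} being what keeps track of which leg of each context is matched), and in checking that the reconstructed family agrees with $f$ at both extremes, rather than merely being some compatible family that happens to contain the right sections.
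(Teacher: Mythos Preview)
Your high-level strategy matches the paper: simultaneous descent in both $n$ and $k$, with Lemma~\ref{lem: base case} as the base. But the mechanism you describe for the inductive step is not right, and as written it contains a genuine gap.

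You propose to apply the no-Z lemma at the terminal context $\C_n$ to ``collapse the twist located there'' and thereby force the adjacent interior restriction to become a single section. This does not work: the no-Z lemma (and more generally Lemma~\ref{lem: base case}) applies only to a $1$-partial family, i.e.\ when \emph{both} legs restrict to genuine sections. For $f_{\C_n}$ you only have $f_{\C_n}\mid_{C_n^2}\in\Model{k-1}(C_n^2)$ from \eqref{equ: 2}; the other leg $f_{\C_n}\mid_{C_n^1}=f_{\C_{n-1}}\mid_{C_{n-1}^2}$ is in general a $\mathbb{Z}_2$-combination, so the lemma does not fire. Nothing in the argument forces that interior restriction to become a single section, and consequently ``the first $n-1$ edges'' do not form a partial family at either level.

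The paper's inductive step does not touch the no-Z lemma at all; that lemma lives entirely inside the base case. The correct reduction is to pass to the \emph{interior vertices} of the path: set $K_i:=C_i^2$ for $1\le i\le n-1$ and $t_{K_i}:=f_{\C_i}\mid_{K_i}\in\FModel{k-1}(K_i)$. These $t_{K_i}$ are allowed to remain linear combinations; what makes $\{t_{K_i}\}$ an $(n-1)$-partial family for $\FModel{k-1}$ is that the pinning conditions \emph{descend one level}: $t_{K_1}\mid_{k_1^1}=(f_{\C_1}\mid_{C_1^1})\mid_{C_1^1\cap C_1^2}$ is genuine because $f_{\C_1}\mid_{C_1^1}$ is, and symmetrically at the other end. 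The inductive hypothesis then returns a standard form $\{s_{K_i}\}$ for $\Model{k-1}$, and the standard form for $P$ is obtained by re-pairing: $g_{\C_1}=(f_{\C_1}\mid_{C_1^1},\,s_{K_1})$, $g_{\C_i}=(s_{K_{i-1}},s_{K_i})$ for $2\le i\le n-1$, and $g_{\C_n}=(s_{K_{n-1}},\,f_{\C_n}\mid_{C_n^2})$. The checks that each $g_{\C_i}\in\Model{k}(\C_i)$ and that \eqref{equ: extreme1}--\eqref{equ: extreme2} hold are then straightforward. Your last paragraph correctly anticipates that interior restrictions stay linear combinations; the point you are missing is that this is fine --- they become the \emph{body} of the lower-level partial family, and it is the level-$(k-2)$ legs of the new extremes that get pinned.
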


\begin{proof}
We will proceed by induction on $k$. The base case $k=1$ has been proved in Lemma \ref{lem: base case}. We will now suppose $k\ge 2$.

Let $P:=\left\{f_\C\in \FModel{k}(\C)\right\}_{\C\in\mathfrak{D}}$ be an $n$-partial family for $\FModel{k}$ over the $n$-path $\underline{\mathfrak{D}}=\{\C_1,\dots,\C_n\}\subseteq\Cov{k}$. Notice that, for $n=1$, the result follows directly from Lemma \ref{lem: base case}. 

Suppose $n\ge 2$. Because $\scenario$ is cyclic, by Proposition \ref{prop: cyclicn} we know that $\Cov{k}$ is a chordless $|\M|$-cycle. Moreover, $\underline{\mathfrak{D}}$ cannot be an improper $3$-cycle, because $n<|\M|$. Thus we can use the notation of Remark \ref{rem: notation}. Let 
\[
\underline{\D}':=\{K_1,K_2,\dots K_{n-1}\}=\{C_1^2, C_2^2, \dots C_{n-1}^2\}\subseteq\Cov{k-1}.
\]
Then $\underline{\D}'$ is an $(n-1)$-path for $\Cov{k-1}$. Indeed, by Proposition \ref{prop: cycle3} we know that $\underline{\D}'\cup\{C_1^1, C_n^2\}$ is an $(n+1)$-path for $\Cov{k-1}$. 

Define a family $P':=\left\{t_{K_i}\in \FModel{k-1}(K_i)\right\}_{i=1}^{n-1}$ by 
\begin{equation}\label{equ: t}
t_{K_i}:=f_{\C_i}\mid_{K_i},
\end{equation}
(cf. Remark \ref{rem: subtlety}). 
~\\
\begin{claim}{1}
The family $P'$ is an $(n-1)$-partial family for $\FModel{k-1}$. 
\end{claim}
~\\
\begin{claimproof}
Let us start by proving that $P'$ is compatible for $\FModel{k-1}$. Let $1\leq i\leq n-2$. We have
\[
\begin{split}
t_{K_i}\mid_{K_i\cap K_{i+1}} & = \left(f_{\C_i}\mid_{K_i}\right)\mid_{K_i\cap K_{i+1}}\stackrel{(\ast)}{=}\left(f_{\C_i}\mid_{\C_i\cap \C_{i+1}}\right)\mid_{K_i\cap K_{i+1}}\\
&\stackrel{(\dagger)}{=}\left(f_{\C_{i+1}}\mid_{\C_i\cap \C_{i+1}}\right)\mid_{K_i\cap K_{i+1}}\stackrel{(\ast)}{=}\left(f_{\C_{i+1}}\mid_{K_i}\right)\mid_{K_i\cap K_{i+1}}\\
&=f_{\C_{i+1}}\mid_{K_i\cap K_{i+1}}=\left(f_{\C_{i+1}}\mid_{K_{i+1}}\right)\mid_{K_i\cap K_{i+1}}\\
&=t_{K_{i+1}}\mid_{K_i\cap K_{i+1}},
\end{split}
\]
where we have used the fact that $\C_i\cap \C_{i+1}=\{C_i^2\}=\{K_i\}$ in the equalities $(\ast)$ (cf. Remark \ref{rem: subtlety}), and the fact that $P$ is compatible for $\FModel{k}$ in equality $(\dagger)$. With the usual notation $K_i:=\{k_i^1, k_i^2\}$, since $\{k_1^1,k_1^2,k_2^2,\dots, k_{n-1}^2\}$ is an $n$-path for $\Cov{k-2}$ by Proposition \ref{prop: cycle3}, we know that $k_1^1$ is the first vertex of the path, which means that $\{k_1^1\}=C_1^1\cap C_1^2$. In view of Remark \ref{rem: subtlety}, we have 
\[
\begin{split}
t_{K_1}\mid_{k_1^1} &=\left(f_{\C_1}\mid_{K_1}\right)\mid_{k_1^1}=\left(f_{\C_1}\mid_{C_1^2}\right)\mid_{k_1^1}= \left(f_{\C_1}\mid_{C_1^2}\right)\mid_{C_1^1\cap C_1^2}=f_{\C_1}\mid_{C_1^1\cap C_1^2}\\
&=\left(f_{\C_1}\mid_{C_1^1}\right)\mid_{C_1^1\cap C_1^2}=\left(f_{\C_1}\mid_{C_1^1}\right)\mid_{k_1^1}
\end{split}
\]
Because $f_{\C_1}\mid_{C_1^1}\in\Model{k-1}(C_1^1)$ by condition \eqref{equ: 1}, we must have
\begin{equation}\label{equ: t1}
t_{K_1}\mid_{k_1^1} =\left(f_{\C_1}\mid_{C_1^1}\right)\mid_{C_1^1\cap C_1^2}\in\Model{k-1}(k_1^1),
\end{equation}
hence $P'$ satisfies \eqref{equ: 1}.

To prove \eqref{equ: 2}, we start by a simple observation, namely that, because $K_{n-1}=C_{n-1}^2$, we have $\{K_{n-1}\}=\C_{n-1}\cap \C_n$. Therefore
\begin{equation}\label{equ: observation}
t_{K_{n-1}}=f_{\C_{n-1}}\mid_{K_{n-1}}=f_{\C_{n-1}}\mid_{C_{n-1}\cap C_n}=f_{\C_n}\mid_{C_{n-1}\cap C_n}=f_{\C_n}\mid_{K_{n-1}},
\end{equation}
where we have used the fact that $P$ is compatible in the third equality. Now, with a similar argument as before, given that $\{k_{n-1}^2\}=C_{n-1}^2\cap C_n^2$, we have
\[
\begin{split}
t_{K_{n-1}}\mid_{k_{n-1}^2} &\stackrel{\eqref{equ: observation}}{=}\left(f_{\C_n}\mid_{K_{n-1}}\right)\mid_{k_{n-1}^2}=\left(f_{\C_n}\mid_{C_{n-1}^2}\right)\mid_{k_{n-1}^2}= \left(f_{\C_n}\mid_{C_{n-1}^2}\right)\mid_{C_{n-1}^2\cap C_n^2}\\
&\stackrel{\phantom{\eqref{equ: observation}}}{=}f_{\C_n}\mid_{C_{n-1}^2\cap C_n^2}=\left(f_{\C_n}\mid_{C_n^2}\right)\mid_{C_{n-1}^2\cap C_n^2}=\left(f_{\C_n}\mid_{C_n^2}\right)\mid_{k_{n-1}^2}.
\end{split}
\]
Because $f_{\C_n}\mid_{C_n^2}\in\Model{k-1}(C_1^1)$ by condition \eqref{equ: 2}, we must have
\begin{equation}\label{equ: tn-1}
t_{K_{n-1}}\mid_{k_{n-1}^2} =\left(f_{\C_n}\mid_{C_n^2}\right)\mid_{C_{n-1}^2\cap C_n^2}\in\Model{k-1}(k_{n-1}^2),
\end{equation}
which means that $P'$ satisfies \eqref{equ: 2}.
\end{claimproof}
~\\

Because $P'$ is an $(n-1)$-partial family for $\FModel{k-1}$, by inductive hypothesis, we know that $P'$ is standard. Let $S:=\{s_{K_i}\in\Model{k-1}(K_i)\}_{i=1}^{n-1}$ be the standard form of $P'$, i.e. $S$ is compatible for $\Model{k-1}$ and it is such that
\begin{align}
s_{K_1}\mid_{k_1^1} &= t_{K_1}\mid_{k_1^1}, \label{equ: st1}\\ 
s_{K_{n-1}}\mid_{k_{n-1}^2} &= t_{K_{n-1}}\mid_{k_{n-1}^2}. \label{equ: st2}
\end{align}
Consider the family $G:=\{g_{\C_i}\in\Model{k}(\C_i)\}_{i=1}^n$
where 
\[
g_{\C_i}:=
\begin{cases}
\left(f_{\C_1\mid_{C_1^1}},s_{K_1}\right) & \text{if } i=1,\\
\left(s_{K_{n-1}}, f_{\C_n}\mid_{C_n^2}\right) & \text{if } i=n,\\
(s_{K_{i-1}},s_{K_i}) & \text{for all } 2\leq i\leq n-1.
\end{cases}
\]
\begin{claim}{2}
The family $G$ is a standard form for $P$. 
\end{claim}
~\\
\begin{claimproof}
First of all, we need to check that $g_{\C_i}$ is indeed an element of $\Model{k}(\C_i)$ for all $1\leq i\leq n$. We have
\[
\left(f_{\C_1}\mid_{C_1^1}\right)\mid_{C_1^1\cap K_1}=\left(f_{\C_1}\mid_{C_1^1}\right)\mid_{C_1^1\cap C_1^2}\stackrel{\eqref{equ: t1}}{=}t_{K_1}\mid_{k_1^1}\stackrel{\eqref{equ: st1}}{=}s_{K_1}\mid_{k_1^1}=
s_{K_1}\mid_{C_1^1\cap K_1}.
\]
Similarly, 
\[
\begin{split}
s_{K_{n-1}}\mid_{K_{n-1}\cap C_n^2}&=s_{K_{n-1}}\mid_{k_{n-1}^2}\stackrel{\eqref{equ: st2}}{=}t_{K_{n-1}}\mid_{k_{n-1}^2}\stackrel{\eqref{equ: tn-1}}{=}\left(f_{\C_n}\mid_{C_n^2}\right)\mid_{C_{n-1}^2\cap C_n^2}\\
&=\left(f_{\C_n}\mid_{C_n^2}\right)\mid_{K_{n-1}\cap C_n^2}.
\end{split}
\]
Finally, let $2\leq i\leq n-1$. We readily have
\[
s_{K_{i-1}}\mid_{K_{i-1}\cap K_i}=s_{K_i}\mid_{K_{i-1}\cap K_i}
\]
by the simple fact that $S$ is compatible for $\Model{k-1}$. 

The fact that $G$ satisfies equations $\eqref{equ: extreme1}$ and $\eqref{equ: extreme2}$ for $P$ trivially follows from the very definition of $G$, indeed
\begin{align*}
g_{\C_1}\mid_{\C_1^1}&=\left(f_{\C_1\mid_{C_1^1}},s_{K_1}\right)\mid_{C_1^1}=f_{\C_1}\mid_{C_1^1};\\
g_{\C_n}\mid_{\C_n^2}&=\left(s_{K_{n-1}}, f_{\C_n}\mid_{C_n^2}\right)\mid_{C_n^2}=f_{\C_n}\mid_{C_n^2}.
\end{align*}
\end{claimproof}
~\\
Thanks to this claim, we have successfully proved that $P$ is standard. 
\end{proof}

We can now introduce a complete cohomology characterisation of logical and strong contextuality for cyclic scenarios:

\begin{theorem}\label{cor: main}
Let $\S$ be an empirical model on a cyclic scenario $\scenario$. Let $C\in\M$ and $s\in\S(C)$. Then we have
\[
\LC(\S, s) \Leftrightarrow \CLCk{n}(\S, s),
\]
where $n:=|\M|-1$.
Moreover, 
\[
\SC(\S)  \Leftrightarrow \CSC(\Model{n})
\]
\end{theorem}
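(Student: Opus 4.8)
The statement is two biconditionals, and in each case one implication is already available: Theorem~\ref{thm: main2} gives $\CLCk{n}(\S,s)\Rightarrow\LC(\S,s)$ and $\CSC(\Model{n})\Rightarrow\SC(\S)$ directly. So the entire task is the two converse implications, and my plan is to prove them by contraposition, converting a \emph{vanishing} cohomology obstruction into an honest compatible family via Theorem~\ref{thm: standard families}. The engine in both cases is the same ``cut-and-standardise'' argument, so I would isolate it once and then feed it into the results of Section~\ref{sec: contextuality}.

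For the logical case I would assume $\neg\CLCk{n}(\S,s)$. By Definition~\ref{defn: CLCk} this yields a context $\C_0\in\Cov{n}$ and a section $t\in\Model{n}(\C_0)$ with $s\in\flatten(t)$ and $\gamma_{\C_0}(t)=0$, so by Proposition~\ref{prop: main} there is a family $\{r_\C\}_{\C\in\Cov{n}}$, compatible for $\FModel{n}$, with $r_{\C_0}=t$. Here I invoke the cyclic hypothesis: by Proposition~\ref{prop: cyclicn} the $|\M|$ contexts of $\Cov{n}$ form a chordless cycle, which I write as $\C_0=\E_1,\E_2,\dots,\E_{|\M|}$ with the component relabelling of Remark~\ref{rem: notation}, so that $E_i^2=E_{i+1}^1$ and $E_{|\M|}^2=E_1^1$. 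Cutting the cycle at $\C_0$ leaves the path $\E_2,\dots,\E_{|\M|}$ of exactly $n=|\M|-1$ contexts, carrying the restricted family $\{r_{\E_i}\}_{i=2}^{|\M|}$.

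The heart of the argument is verifying that this restricted family is an $n$-partial family for $\FModel{n}$, i.e. that it meets the endpoint conditions \eqref{equ: 1}--\eqref{equ: 2}. The two free ends of the path are precisely the vertices $C_0^2=E_2^1$ and $C_0^1=E_{|\M|}^2$ of $\C_0$, and compatibility of $\{r_\C\}$ across the edges $\E_1\cap\E_2=\{C_0^2\}$ and $\E_{|\M|}\cap\E_1=\{C_0^1\}$ forces the two endpoint restrictions to equal $t\mid_{C_0^2}$ and $t\mid_{C_0^1}$. Since $t\in\Model{n}(\C_0)$ is an honest section, these lie in $\Model{n-1}(C_0^2)$ and $\Model{n-1}(C_0^1)$, which is exactly what \eqref{equ: 1}--\eqref{equ: 2} require. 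As $n\le k=n$ and $n<|\M|$, Theorem~\ref{thm: standard families} applies and produces a standard form $\{\tilde r_{\E_i}\in\Model{n}(\E_i)\}_{i=2}^{|\M|}$, compatible for $\Model{n}$ and agreeing with the partial family at both free ends. Gluing $t=r_{\E_1}$ back closes the cycle: the two remaining compatibility checks at $\{C_0^2\}$ and $\{C_0^1\}$ hold by construction, so $\{t\}\cup\{\tilde r_{\E_i}\}_{i=2}^{|\M|}$ is a compatible family for $\Model{n}$ over all of $\Cov{n}$ containing $t$. Hence $\neg\LC(\Model{n},t)$, so $\neg\LCk{n}(\S,s)$, and Corollary~\ref{cor: LCk} delivers $\neg\LC(\S,s)$. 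For the strong case I would note that the very same cut-and-standardise step, run on an \emph{arbitrary} $t$ with $\gamma(t)=0$, shows $\SC(\Model{n})\Rightarrow\CSC(\Model{n})$ (its contrapositive being precisely that a vanishing obstruction yields a compatible family); together with $\CSC(\Model{n})\Rightarrow\SC(\Model{n})$ from Theorem~\ref{thm: main} this gives $\SC(\Model{n})\Leftrightarrow\CSC(\Model{n})$, and Corollary~\ref{cor: SC2} supplies $\SC(\S)\Leftrightarrow\SC(\Model{n})$, closing the chain.

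The main obstacle I anticipate is the bookkeeping in the second and third paragraphs: one must set up the cyclic relabelling of $\Cov{n}$ correctly, track which vertex of $\C_0$ is shared with $\E_2$ and which with $\E_{|\M|}$, and confirm that cutting precisely \emph{at} $\C_0$ (rather than deleting an arbitrary context) is what makes both endpoint conditions inherit honesty from the honest section $t$. Once that is pinned down, every remaining step is a routine appeal to Theorem~\ref{thm: standard families}, Proposition~\ref{prop: main}, and the equivalences of Section~\ref{sec: contextuality}.
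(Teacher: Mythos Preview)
Your proposal is correct and follows essentially the same route as the paper: both cut the chordless $|\M|$-cycle $\Cov{n}$ at $\C_0$, verify that the remaining $n$-path inherits the endpoint conditions \eqref{equ: 1}--\eqref{equ: 2} from the honest section $t$, apply Theorem~\ref{thm: standard families}, and glue $t$ back in before invoking Corollary~\ref{cor: LCk}. Your handling of the strong case via $\SC(\Model{n})\Leftrightarrow\CSC(\Model{n})$ together with Corollary~\ref{cor: SC2} is a slightly cleaner packaging than the paper's (which reduces directly to the $\LC$ argument by picking some $s\in\flatten(t)$), but the content is identical.
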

\begin{proof}
The implications $\CLCk{n}(\S, s)\Rightarrow\LC(\S, s)$ and $\CSC(\Model{n})\Rightarrow \SC(\S)$ have already been proven in Theorem \ref{thm: main2}. 


To prove the converse, we will show that $\neg \CLCk{n}(\S, s)  \Rightarrow \neg\LC(\S, s)$. Suppose $\neg \CLCk{n}(\S, s)$. By Definition \ref{defn: CLCk}, this implies that there exists a context $\C_0\in\Cov{n}$ and a section $t\in\Model{n}(\C_0)$ such that $s\in\flatten(t)$ and $\neg\CLC(\Model{n}, t)$. Thus, there exists a compatible family 
\[
F:=\left\{f_\C\in\FModel{n}(\C)\right\}_{\C\in\Cov{n}}
\]
such that $f_{\C_0}=t$. Because $\scenario$ is cyclic, we know by Proposition \ref{prop: cyclicn} that $\Cov{n-1}$ is also cyclic. Theferefore, 
\[
\Cov{n}=\{\C_0,\C_1,\dots, \C_{n}\}
\]
is a chordless $|\M|$-cycle, which implies that $\{\C_1,\dots, \C_{n}\}$ is a chordless $n$-path (in edge representation) for $\Cov{n}$. 
Let
\[
P:=\left\{f_{\C_i}\in\FModel{n}(\C_i)\right\}_{i=1}^n.
\]
Then $P$ is a $n$-partial family for $\FModel{n}$, indeed it is compatible because $F$ is compatible, and we have, with the usual notation
\[
f_{\C_1}\mid_{C_1^1}=f_{\C_1}\mid_{\C_0\cap\C_1}\stackrel{(\ast)}{=}f_{\C_0}\mid_{\C_0\cap \C_1}=t\mid_{\C_0\cap C_1}\in\Model{n-1}(\C_0\cap \C_1),
\]
and 
\[
f_{\C_{n}}\mid_{C_{n}^2}=f_{\C_{n}}\mid_{\C_{n}\cap \C_0}\stackrel{(\ast)}{=}f_{\C_0}\mid_{\C_{n}\cap \C_0}
=t\mid_{\C_{n}\cap \C_0}\in\Model{n-1}(\C_{n}\cap \C_0),
\]
where we have used the fact that $F$ is compatible in equalities $(\ast)$.
By Theorem \ref{thm: standard families}, we know that $P$ is standard. Thus there exists a family 
\[
P':=\{s_{\C_i}\in\Model{n}(\C_i)\}_{i=1}^n
\]
such that
\[
\begin{split}
s_{\C_1}\mid_{C_1^1} &= f_{\C_1}\mid_{C_1^1}=t\mid_{\C_0\cap C_1},\\
s_{\C_n}\mid_{C_n^2} &= f_{\C_n}\mid_{C_n^2}=t\mid_{\C_{n}\cap \C_0}.
\end{split}
\]
Therefore, the family 
\[
P'\cup\{t\}=\{s_{\C_i}\in\Model{n}(\C_i)\}_{i=1}^n\cup\{t\}
\]
is a compatible family for $\Model{n}$ that contains $t$. Thus we have $\neg \LC(\Model{n},t)$, which means that $\neg\LCk{n}(\S,s)$. It follows from Corollary \ref{cor: LCk} that $\S$ is \emph{not} logically contextual at $s$. 

Suppose now $\neg\CSC(\Model{n})$. Then there exists a section $t$ of $\Model{n}$ such that $\neg\LC(\Model{n}, t)$. Consider an arbitrary section $s$ of $\S$ such that $s\in\flatten(t)$ (such a section always exists by definition of the model $\Model{k}$. Then we have $\neg\LCk{n}(\S,s)$, and we can apply the same argument used before to show that this implies $\neg\LC(\S,s)$, which in turn implies $\neg\SC(\S)$. 
\end{proof}

This theorem tells us that if we want to study the contextuality of a cyclic scenario $\scenario$, it is sufficient to analyse the cohomology of its $(|\M|-1)$-th joint model to assert with certainty which sections give rise to contextual behavior. This is a major step forward, as this method allows us to get rid of all the false positives introduced in section \ref{sec: false positives}, and many others, as we shall see in the rest of the paper.

\subsubsection{Examples}
In this section we will show how this method applies to the well known false positives that have appeared in the litterature, including the ones we have discussed in Section \ref{sec: false positives}.

\paragraph{The model of Table \ref{tab: model}}
We have already shown that the model $\S$ presented in Table \ref{tab: model} displays a cohomology false positive for the section $s_2$ (cf. Table \ref{tab: enum2}), and suggested that it vanishes as soon as we consider its first joint model $\Model{1}$. We can give a formal proof that this is true. In Figure \ref{fig: falsenegative1} we present once again the bundle diagram of the first joint model, where we introduce variables $a,b,c,d,e,f,g,h,i,j\in\mathbb{Z}_2$ that represent the coefficients to give to every section of $\Model{1}$ in order to construct a compatible family for $\FModel{1}$.

\begin{figure}[htbp]
\centering
\includegraphics[scale=0.6]{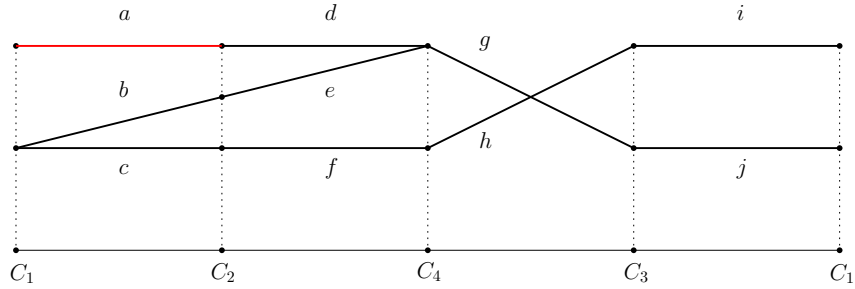}
\caption{The bundle diagram of $\Model{1}$ with the variables in $\mathbb{Z}_2$ corresponding to each section. The seciton $s_2$, responsible for logical contextuality (cf. Section \ref{sec: false positives}) is highlighted in red.}\label{fig: falsenegative1}
\end{figure}
The compatibility conditions of a presumed compatible family for $\FModel{1}$ can be summarised in the following equations:
\[
\begin{aligned}[c]
a &=d,\\
b &=e,\\
c &=f,
\end{aligned}
\qquad\qquad
\begin{aligned}[c]
d\oplus e &= g,\\
f &=h,
\end{aligned}
\qquad\qquad
\begin{aligned}[c]
h &=i,\\
g &=j,
\end{aligned}
\qquad\qquad
\begin{aligned}[c]
i &=a,\\
j &=b\oplus c.
\end{aligned}
\]
Because the family must contain $s_2$, which is marked in red in Figure \ref{fig: falsenegative1}, we must have $a=1$ and $b=c=0$. It follows directly that $e=f=h=i=0$ and that $d=g=j=1$. However, since $j=b\oplus c$, this leads to $1=0\oplus 0 = 0$, which is obviously a contradiction. We have just proved that the cohomology of $\Model{1}$ does detect the logical contextuality of $\S$ at $s_2$. 

Note that in this case, although $|\M|=4$, it was not necessary to take the third joint model of $\S$ to remove the false positive, as suggested by Theorem \ref{cor: main}. Indeed, the bound $|\M|-1$ is the one that gives us absolute certainty about the non-existence of a false positive. However, as we have just shown, it might be sufficient to take a lower level joint model to remove any false positive from the model.

\paragraph{The Hardy model}
The Hardy model (cf. Table \ref{tab: Hardy}) is perhaps the most well-studied example of cohomological false positive for contextuality \cite{Abramsky3, Abramsky2, Caru}. We have already illustrated its bundle diagram in Figure \ref{fig: Hardy}, and showed in Figure \ref{fig: FJM Hardy} that its first joint model still results in a cohomological false positive for the section $s_1$, at which the Hardy model $\S$ is logically contextual. In Figure \ref{fig: SJM Hardy}, we present  the bundle diagram of the second joint model.

\begin{figure}[htbp]
\centering
\includegraphics[scale=0.55]{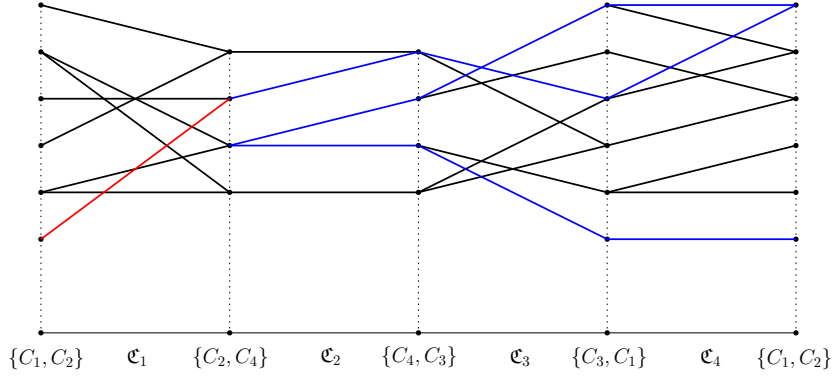}
\caption{The bundle diagram of the second joint model of the Hardy model. The red section is the only section of $\Model{2}$ that contains the original section $s_1$. In blue, a false positive for the red section}\label{fig: SJM Hardy}
\end{figure}

Notice that, even in this case, we still have a compatible family for $\FModel{2}$ containing the only section of $\Model{2}$ that contains $s_1$. Thus, we must consider the third joint model to get rid of the false positive. The third joint model of the Hardy model is presented in Figure \ref{fig: TJM Hardy}, where we have highlighted in red the only section containing $s_1$. 

\begin{figure}[htbp]
\centering
\includegraphics[scale=0.55]{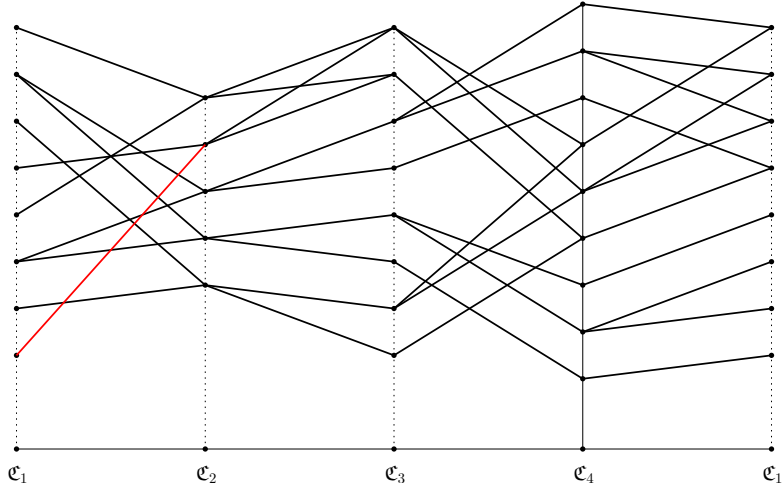}
\caption{The bundle diagram of the third joint model of the Hardy model. The section marked in red is the only section containing $s_1$.}\label{fig: TJM Hardy}
\end{figure}

Since $|\M|=4$, Theorem \ref{cor: main} assures that cohomology does detect the contextuality at the red section. This can be graphically checked by highlighting all the possible attempts to extend the red section to a compatible family for $\FModel{4}$, as shown in Figure \ref{fig: TJMHardy2}.

\begin{figure}[htbp]
\centering
\includegraphics[scale=0.55]{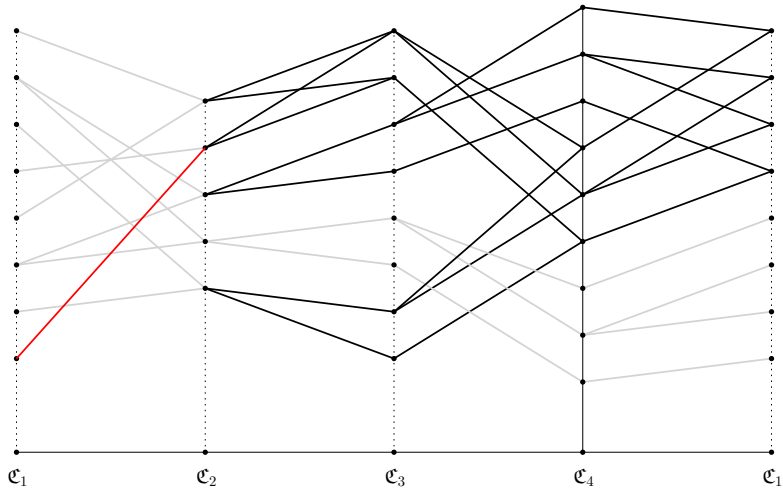}
\caption{It is impossible to extend the red section to a compatible family for $\FModel{4}$.}\label{fig: TJMHardy2}
\end{figure}

This graphical proof can easily be converted into a formal proof following the same idea as in the previous paragraph.
Note that the fact that we had to consider the third joint model of the Hardy model in order to get rid of the false positive shows that the bound $|\M|-1$ of Theorem \ref{thm: main} is tight.

\paragraph{The False positive of \cite{Caru}}

As mentioned before, the false positive of \cite{Caru} (cf. Figure \ref{fig: Gio}) is particularly intereseting because it concerns all the sections of the model. Indeed, the model is cohomologically non-contextual despite being strongly contextual. In other words, there is a cohomology false positive for every single section of the model. In Figure \ref{fig: FJMGio}, we depict the bundle diagram of the first joint model.

\begin{figure}[htbp]
\centering
\includegraphics[scale=0.55]{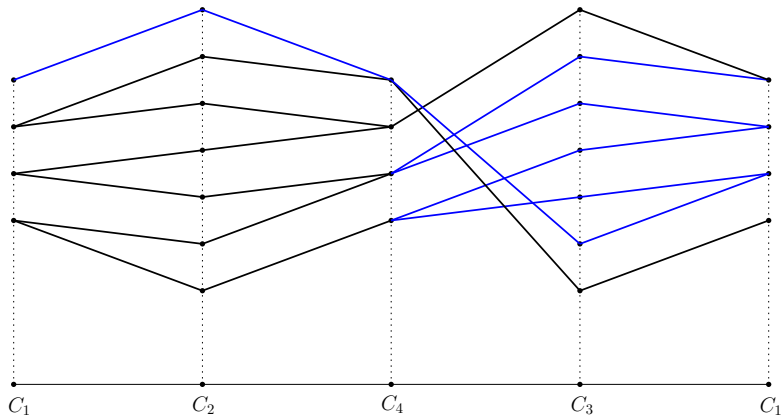}
\caption{The first joint model $\Model{1}$ of the false positive from \cite{Caru}. In blue, a compatible family for $\FModel{1}$.}\label{fig: FJMGio}
\end{figure}

Notice that, for each section, it is still possible to find compatible families for $\FModel{1}$ that contains it, giving rise to a false positive. For example, we have highlighted one such compatible family in blue, which constitutes a false positive for the contextuality of the top section for the context $\{C_1, C_2\}$. 

Therefore, we need to consider the second joint model, whose bundle diagram is depicted in Figure \ref{fig: SJMGio}

\begin{figure}[htbp]
\centering
\includegraphics[scale=0.55]{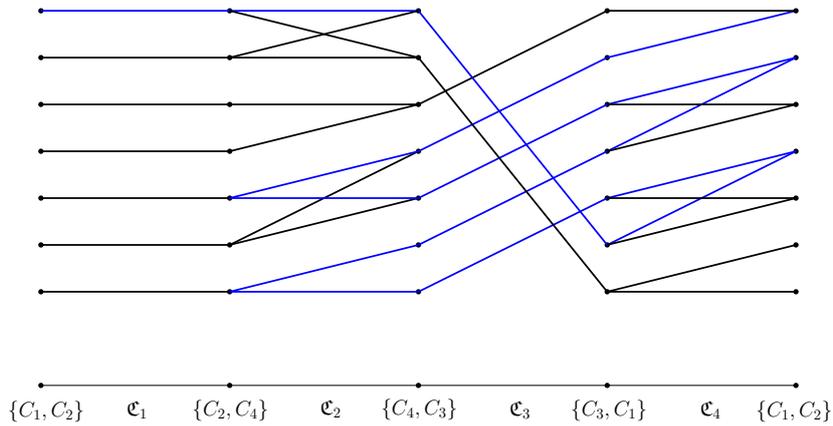}
\caption{The second joint model of the false positive from \cite{Gio}. In blue, a compatible family for $\FModel{2}$.}\label{fig: SJMGio}
\end{figure}

Even in this case, it is still possible to find a cohomology false positive for each section of the model (see e.g. the blue loop highlighted in Figure \ref{fig: SJMGio}). 

In Figure \ref{fig: TJMGio} the bundle diagram of the third joint model is shown. 

\begin{figure}[htbp]
\centering
\includegraphics[scale=0.5]{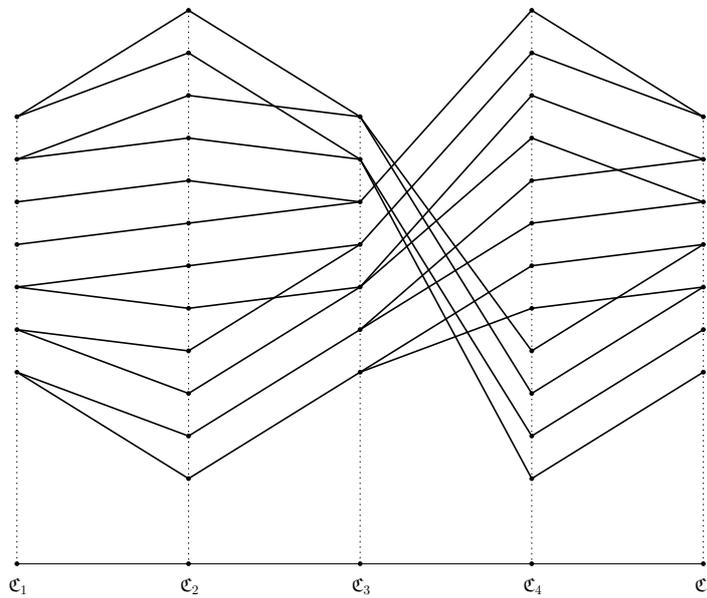}
\caption{The third joint model of the false positive from \cite{Caru}}\label{fig: TJMGio}
\end{figure}

Once again, because $|\M|=4$, we know by Theorem \ref{cor: main}, that cohomology detects contextuality at every section of the model. This can be checked graphically. For example, in Figure \ref{fig: TJMGio2} we show that it is never possible to extend the section marked in red to a compatible family for $\FModel{3}$. The reader can verify that this is true for any section of the model. 

\begin{figure}[htbp]
\centering
\includegraphics[scale=0.5]{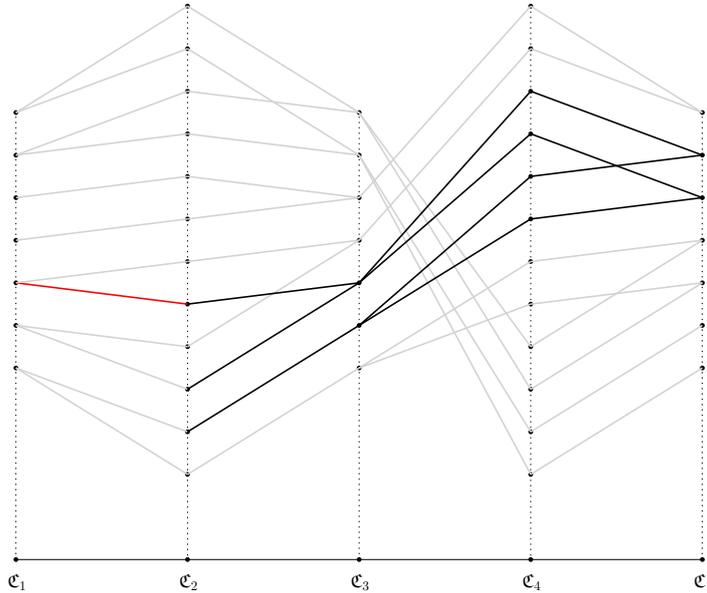}
\caption{A graphical proof of the fact that cohomology of $\FModel{3}$ does detect contextuality at the section marked in red. This is true for any section of the model.}\label{fig: TJMGio2}
\end{figure}

\section{Extending the invariant to general models}
 
In the previous section, we have successfully defined a full cohomology invariant for contextuality for all cyclic models. The goal of this section is to extend this result to arbitrary models. In particular, will show that the invariant can be extended to an extremely vast class of scenarios. This result will lead us to conjecture that the invariant works universally. 

 
Although cyclic models constitute only a fraction of all the possible empirical models, they play a crucial role in the study of contextuality. Indeed, it was proven in \cite{Rui}, thanks to an adaptation of Vorob'ev's theorem \cite{Vorobev}, that a necessary condition for contextuality is the \emph{ciclicity} of the measurement cover. Ciclicity in \cite{Rui} is a notion coming from database theory, and it is not strictly equivalent to the one we introduced in this paper. However, it is easy to prove that any cyclic cover in the sense of \cite{Rui} must contain at least one cyclic subcover in the sense defined here. The fact that the existence of cycles in the cover is necessary for contextuality suggests that contextual features can be observed by focusing uniquely on the cycles. 

To convey this idea, we introduce here the notion of \emph{cyclic contextuality property (CCP)}. The contextual features of models equipped with the CCP can always be recovered by looking at cycles in the cover $\Cov{1}$. 

\begin{defn}
Let $\S$ be an empirical model on a measurement scenario $\scenario$. We say that $\S$ has the \emph{cyclic contextuality property (CCP)} if, for each local section $s$ of $\S$ such that $\LC(\S,s)$, there exists a cycle $\D_\bullet\subseteq\M$ for $\Cov{1}$ (called the \emph{contextual cycle of $s$}) such that $\LC(\S\mid_{\D_\bullet},s)$, where $\S\mid_{\D_\bullet}$ is the model obtained by restricting $\S$ on the subcover $\D_\bullet$. 
\end{defn}

Most empirical models satisfy the CCP. To give an idea of how common this property is, it is sufficient to say that all the models that have appeared in the literature on the sheaf description of contextuality share this property. For instance, it was proven in \cite{AvNTriple} that strong contextuality in quantum models obtained by applying Pauli measurements to stabiliser states can be witnessed by only looking at a cycle of size $4$. 

Remarkably, the cohomology invariant introduced in the previous sections can be immediately extended to all models equipped with the CCP. 

\begin{proposition}
Let $\S$ be a model on a general scenario $\scenario$, and suppose $\S$ has the CCP. For all sections $s$ of $\S$, we have
\[
\LC(\S,s)\Leftrightarrow \CLCk{n-1}(\S,s),
\]
where $n$ denotes the size of the contextual cycle of $s$.
\end{proposition}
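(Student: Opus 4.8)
The plan is to reduce the general case to the cyclic one settled in Theorem~\ref{cor: main}, using the contextual cycle furnished by the CCP as the region in which the obstruction is detected. The backward implication $\CLCk{n-1}(\S,s)\Rightarrow\LC(\S,s)$ needs no cyclicity and is already in hand: it is precisely Theorem~\ref{thm: main2} applied with $k=n-1$, whose proof only invokes $\CLC\Rightarrow\LC$ (Theorem~\ref{thm: main}) together with Corollary~\ref{cor: LCk}, and is therefore insensitive to the shape of $\M$.

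For the forward implication I would first use the CCP to pick a contextual cycle $\D_\bullet\subseteq\M$ of size $n$ with $\LC(\S\mid_{\D_\bullet},s)$; we may assume $\D_\bullet$ is chordless, so that $\S\mid_{\D_\bullet}$ is a cyclic scenario in the sense of Definition~\ref{defn: cyclic scenario} with a cover of size $n$. Theorem~\ref{cor: main} then applies verbatim to $\S\mid_{\D_\bullet}$ and yields $\CLCk{n-1}(\S\mid_{\D_\bullet},s)$. What remains is to transfer this witness to the full model. Here I would first record a commutation lemma, proved by a routine induction on $k$: writing $\R^{(k)}\subseteq\Cov{k}$ for the $k$-th joint cover of $\S\mid_{\D_\bullet}$ regarded as a subcover of $\Cov{k}$, one has $(\S\mid_{\D_\bullet})^{(k)}=\Model{k}\mid_{\R^{(k)}}$ as subpresheaves, and consequently $F_{\mathbb{Z}_2}(\S\mid_{\D_\bullet})^{(k)}=\FModel{k}\mid_{\R^{(k)}}$, since a section over a retained context is left untouched by the restriction.

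The transfer then rests on a per-section monotonicity principle: any compatible family for $\FModel{k}$ over $\Cov{k}$ restricts to a compatible family over any subcover, so a genuine section that fails to extend over a subcover a fortiori fails to extend over all of $\Cov{k}$; equivalently, cohomological logical contextuality of a fixed section is preserved under enlarging the cover. Combined with the commutation lemma, this disposes immediately of every section $t$ of $\Model{n-1}$ with $s\in\flatten(t)$ whose supporting context $\mathfrak{C}_0$ lies in $\R^{(n-1)}$: such a $t$ is a genuine section of $(\S\mid_{\D_\bullet})^{(n-1)}$ flattening to $s$, so $\CLCk{n-1}(\S\mid_{\D_\bullet},s)$ gives $\CLC((\S\mid_{\D_\bullet})^{(n-1)},t)$, and monotonicity upgrades this to $\CLC(\Model{n-1},t)$, as required by the definition of $\CLCk{n-1}(\S,s)$.

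The step I expect to be the main obstacle is the remaining family of sections $t$ with $s\in\flatten(t)$ whose context $\mathfrak{C}_0$ reaches outside the contextual cycle, i.e.\ $\mathfrak{C}_0\notin\R^{(n-1)}$. For such an off-cycle witness the restriction of a putative extending family $F$ to $\R^{(n-1)}$ is still compatible for the cyclic representative, but its components are now $\mathbb{Z}_2$-linear combinations rather than the genuine section $t$, so the direct anchoring used above breaks down. I would attack this by deleting one context of the $n$-cycle $\R^{(n-1)}$ to obtain an $(n-1)$-path and invoking the standard-families theorem (Theorem~\ref{thm: standard families}, available because $\S\mid_{\D_\bullet}$ is cyclic and $n-1<n=|\D_\bullet|$) to replace the restricted family along the path by a genuine standard form; the crux is then to close this path into a genuine compatible loop meeting a section that flattens to $s$, which would contradict $\LC(\S\mid_{\D_\bullet},s)$ through $\LCk{n-1}$ and Corollary~\ref{cor: LCk}. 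Making this closure legitimate --- equivalently, showing that the off-cycle part of $F$ cannot manufacture an obstruction absent from the cycle itself --- is where the hypothesis that $\D_\bullet$ is a \emph{contextual} cycle for $s$ must be exploited in full, and is the hardest point of the argument.
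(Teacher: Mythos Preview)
Your overall strategy coincides with the paper's: the backward implication is Theorem~\ref{thm: main2}, and for the forward one you pass to the contextual cycle $\D_\bullet$, apply Theorem~\ref{cor: main} to the cyclic model $\S\mid_{\D_\bullet}$ to get $\CLCk{n-1}(\S\mid_{\D_\bullet},s)$, and then try to lift this to $\CLCk{n-1}(\S,s)$. That is exactly the skeleton of the paper's argument.

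Where you diverge from the paper is in how seriously you take the lifting step. The paper's entire justification for $\CLCk{n-1}(\S\mid_{\D_\bullet},s)\Rightarrow\CLCk{n-1}(\S,s)$ is the phrase ``which readily implies''; no further argument is given. You, by contrast, unpack the definition and notice that $\CLCk{n-1}(\S,s)$ demands $\CLC(\Model{n-1},t)$ for \emph{every} section $t$ of $\Model{n-1}$ with $s\in\flatten(t)$, and that such $t$ may well sit over a context $\mathfrak{C}_0\notin\R^{(n-1)}$ because $C$ can have neighbours in $\M\setminus\D_\bullet$. Your commutation lemma and monotonicity observation handle the on-cycle case cleanly and constitute more detail than the paper supplies.

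The off-cycle case that you flag as ``the hardest point of the argument'' is not treated in the paper at all: the paper simply does not distinguish the two cases. So the difficulty you identify is not a defect of your proposal relative to the paper --- rather, you have put your finger on a step the paper asserts without proof. Your suggested attack (restrict a putative extending family to $\R^{(n-1)}$, delete one edge to get an $(n-1)$-path, invoke Theorem~\ref{thm: standard families} to standardise, and then close up to contradict $\LC(\S\mid_{\D_\bullet},s)$ via Corollary~\ref{cor: LCk}) is a natural line and mirrors the mechanism inside the proof of Theorem~\ref{cor: main}; the delicate point, as you say, is producing the genuine-section anchors required by conditions~\eqref{equ: 1}--\eqref{equ: 2} for the partial family once the original anchor $t$ lies outside the cycle. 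The paper gives you no help here, so if you can make that closure rigorous you will have strictly improved on the published argument.
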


\begin{proof}
The implication $\CLCk{n-1}(\S,s)\Rightarrow\LC(\S,s)$ follows from Theorem \ref{thm: main2}. Now, suppose $\LC(\S,s)$. Let $\D_\bullet\subseteq\M$ be the contextual cycle of $s$. By definition, we have $\LC(\S\mid_{\D_\bullet},s)$. The model $\S\mid_{\D_\bullet}$ is defined on the cyclic scenario $\D_\bullet$, thus we can apply Theorem \ref{cor: main} to conclude that $\CLCk{n-1}(\S\mid_{\D_\bullet},s)$, which readily implies $\CLCk{n-1}(\S,s)$. 
\end{proof}

Thanks to this simple proposition, we can extend Theorem \ref{cor: main} to models satisfying the CCP over general scenarios. To prove this, we will need the following proposition

\begin{proposition}
Let $\S$ be an empirical model on a scenario $\scenario$. For all $k\ge 0$ and every section $s$ of $\S$, we have
\[
\CLCk{k}(\S,s)\Rightarrow \CLCk{l}(\S,s) ~\forall l\ge k.
\] 
Similarly,
\[
\CSC(\Model{k})\Rightarrow \CSC(\Model{l})~\forall l\ge k.
\]
\end{proposition}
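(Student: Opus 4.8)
The plan is to reduce everything to the single increment $\CLCk{k}(\S,s)\Rightarrow\CLCk{k+1}(\S,s)$ and then iterate, since monotonicity in $l$ follows by induction once one step is secured; the strong-contextuality claim will drop out of the very same construction with the tracking of $s$ omitted. I would argue by contraposition, assuming $\neg\CLCk{k+1}(\S,s)$ and manufacturing a witness for $\neg\CLCk{k}(\S,s)$. The two structural facts I would lean on are, first, the contextwise identity $\FModel{k+1}(\{\C\})=\FModel{k}(\C)$ for $\C\in\Cov{k}$ — which holds because $\FModel{j}=F_{\mathbb{Z}_2}\Model{j}$ is computed context by context and $\Model{k+1}(\{\C\})=\Model{k}(\C)$ by \eqref{equ: subtlety} — and second, Proposition \ref{prop: main}, which I would use in both directions to pass between vanishing obstructions and compatible families for the representative presheaves.

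Concretely, $\neg\CLCk{k+1}(\S,s)$ provides, via Definition \ref{defn: CLCk}, a context $\mathfrak{C}_0=\{\C_0,\C_0'\}\in\Cov{k+1}$ and a section $t=(u_0,u_0')\in\Model{k+1}(\mathfrak{C}_0)$ with $s\in\flatten(t)$ and $\neg\CLC(\Model{k+1},t)$; Proposition \ref{prop: main} then yields a family $\{g_{\mathfrak{C}}\in\FModel{k+1}(\mathfrak{C})\}_{\mathfrak{C}\in\Cov{k+1}}$, compatible for $\FModel{k+1}$, with $g_{\mathfrak{C}_0}=t$. The heart of the proof is a \emph{projection} of this family one level down: for each $\C\in\Cov{k}$ — which lies in at least one edge $\mathfrak{C}\in\Cov{k+1}$ because $\Cov{k}$ is connected with at least two elements — I would set $r_\C:=g_{\mathfrak{C}}\mid_{\{\C\}}\in\FModel{k+1}(\{\C\})=\FModel{k}(\C)$, and check independence of the chosen edge using that any two edges through $\C$ meet exactly in $\{\C\}$, so that compatibility of $\{g_{\mathfrak{C}}\}$ forces agreement.

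The step that genuinely requires care — and which I expect to be the main obstacle — is verifying that the projected family $\{r_\C\}_{\C\in\Cov{k}}$ is compatible for $\FModel{k}$. For $\C,\C'\in\Cov{k}$ with $\C\cap\C'=\{C\}\neq\emptyset$ I would write $g_{\{\C,\C'\}}=\sum_\alpha\lambda_\alpha(u_\alpha,u_\alpha')$ as a $\mathbb{Z}_2$-combination of sections of $\Model{k+1}(\{\C,\C'\})$, so that the two restrictions are the linearised projections $r_\C=\sum_\alpha\lambda_\alpha u_\alpha$ and $r_{\C'}=\sum_\alpha\lambda_\alpha u_\alpha'$. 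The defining pair condition $u_\alpha\mid_C=u_\alpha'\mid_C$ of the joint model then gives $r_\C\mid_{\C\cap\C'}=r_{\C'}\mid_{\C\cap\C'}$ after restricting to $\{C\}$. The real subtlety is only that the contextwise identification must be shown to commute with the restriction maps, so that projecting and then restricting agrees with restricting and then projecting; once that is in place, the built-in compatibility of joint-model sections does all the work, and this is exactly why no spurious cross-context obstruction can survive the passage to a higher joint model.

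Finally I would track the section $s$. Choosing $\mathfrak{C}_0$ as the representative edge for both of its endpoints gives $r_{\C_0}=t\mid_{\{\C_0\}}=u_0$ and $r_{\C_0'}=u_0'$, each a single generator of $\FModel{k}$. Since $s\in\flatten(t)=\flatten(u_0)\cup\flatten(u_0')$, the section $s$ occurs in one component, say $u_0\in\Model{k}(\C_0)$; then $\{r_\C\}$ is a compatible family for $\FModel{k}$ with $r_{\C_0}=u_0$, so Proposition \ref{prop: main} gives $\neg\CLC(\Model{k},u_0)$, whence $\neg\CLCk{k}(\S,s)$, completing the contrapositive. For the strong-contextuality implication I would run the identical projection starting from any section $t$ of $\Model{k+1}$ with $\neg\CLC(\Model{k+1},t)$, obtaining a component $u_0$ of $\Model{k}$ with $\neg\CLC(\Model{k},u_0)$, i.e. $\neg\CSC(\Model{k})$; iterating the increment over $l\ge k$ then yields both stated monotonicity results.
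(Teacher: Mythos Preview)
Your proposal is correct and follows essentially the same route as the paper: both argue the single increment by contraposition, project a compatible family for $\FModel{k+1}$ down to one for $\FModel{k}$ via the identification $\FModel{k+1}(\{\C\})=\FModel{k}(\C)$, verify well-definedness and compatibility of the projection, and then track $s$ into one component of $t$. The only cosmetic difference is that you spell out the compatibility check by expanding $g_{\{\C,\C'\}}$ as a formal $\mathbb{Z}_2$-combination, whereas the paper does it directly with restriction maps; both arguments are equivalent.
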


\begin{proof}
We are going to prove that $\neg\CLCk{k+1}(\S,s)\Rightarrow\neg\CLCk{k}(\S,s)$, and the result will follow by induction. Suppose $\neg\CLCk{k+1}(\S,s)$. Then there exists a context $\C=\{C_1,C_2\}\in\Cov{k+1}$ and a section $t=(t_1,t_2)\in\Model{k+1}(\C)$ (where $t_1\in\Model{k}(C_1)$ and $t_2\in\Model{k}(C_2)$) such that $s\in\flatten(t)$ and $\neg\CLC(\Model{k+1},t)$. In particular, this means that there exists a compatible family
\[
F=\{t_\K\in\FModel{k+1}(\K)\}_{\K\in\Cov{k+1}}
\]
such that $t_\C=t$. Given a context $C\in\Cov{k}$ we know that there exists a $C'\in\Cov{k}$ such that $\{C,C'\}\in\Cov{k+1}$. Let $u_C:=t_{\{C,C'\}}\mid_C$. This is well-defined because, given a different $C''\in\Cov{k}$ such that $\{C,C''\}\in\Cov{k+1}$, we have
\[
\begin{split}
t_{\{C,C''\}}\mid_C&\stackrel{(\ast)}{=}t_{\{C,C''\}}\mid_{\{C\}}=t_{\{C,C''\}}\mid_{\{C,C'\}\cap \{C,C''\}}\stackrel{(\dagger)}{=}t_{\{C,C'\}}\mid_{\{C,C'\}\cap \{C,C''\}}\\
&=t_{\{C,C'\}}\mid_{\{C\}}\stackrel{(\ast)}{=}t_{\{C,C'\}}\mid_C,
\end{split}
\]
where we have used compatibility of $F$ in $(\dagger)$, and applied what discussed in Remark \ref{rem: subtlety} in $(\ast)$. Thus we can define the family
\[
F':=\{u_C\in\FModel{k}(C)\}_{C\in\Cov{k}}.
\]
We can show that $F'$ is a compatible family for $\FModel{k}$ as follows: suppose $C,C'\in\Cov{k}$ and $C\cap C'\neq\emptyset$, then
\[
u_C\mid_{C\cap C'}=\left(t_{\{C,C'\}}\mid_C\right)\mid_{C\cap C'}=t_{\{C,C'\}}\mid_{C\cap C'}=\left(t_{\{C,C'\}}\mid_{C'}\right)\mid_{C\cap C'}=u_{C'}\mid_{C\cap C'}.
\]
Now, because $s\in\flatten(t)$, we can suppose w.l.o.g. that $s\in\flatten(t_1)$. Moreover, because $t_\C=t=(t_1,t_2)$, we have $u_{C_1}=t_1$. Thus $F'$ is a compatible family which contains $t_1$. We conclude that $\neg\CLC(\Model{k},t_1)$, which implies $\neg\CLCk{k}(\S,s)$, as $s\in\flatten(t_1)$. The same argument can be used to prove that $\CSC(\Model{k+1})\Rightarrow \CSC(\Model{k})$. 
\end{proof}

From these two propositions, we immediately have the following theorem:

\begin{theorem}
Let $\S$ be a model on a scenario $\scenario$, and suppose $\S$ has the CCP. Let $N$ denote the size of the largest cycle in $\Cov{1}$. We have 
\[
\LC(\S,s)\Leftrightarrow \CLCk{N-1}(\S,s),
\]
for all section $s$ of $\S$. Moreover, we have
\[
\SC(\S)\Leftrightarrow \CSC(\Model{N-1}).
\]
\end{theorem}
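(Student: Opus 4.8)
The plan is to derive the theorem directly from the two preceding propositions, handling the logical and strong cases in turn. For the logical equivalence, the forward implication $\CLCk{N-1}(\S,s)\Rightarrow\LC(\S,s)$ is already supplied by Theorem \ref{thm: main2}, since the existence of any level $k$ with $\CLCk{k}(\S,s)$ forces $\LC(\S,s)$. For the converse I would begin from $\LC(\S,s)$ and invoke the CCP: there is a contextual cycle $\D_\bullet\subseteq\M$ of $s$ of some size $n$, and by the first of the two propositions above this yields $\CLCk{n-1}(\S,s)$. Because $\D_\bullet$ is a cycle in $\Cov{1}$ and $N$ is by definition the size of the \emph{largest} such cycle, we have $n\le N$, hence $n-1\le N-1$. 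The crucial step is then to raise the level of the joint model from $n-1$ up to $N-1$: the monotonicity proposition (the second of the two above) gives $\CLCk{n-1}(\S,s)\Rightarrow\CLCk{N-1}(\S,s)$, completing the logical equivalence.

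For the strong case, the implication $\CSC(\Model{N-1})\Rightarrow\SC(\S)$ is again immediate from Theorem \ref{thm: main2}. For the converse I would argue pointwise over the sections of $\Model{N-1}$. Assuming $\SC(\S)$, fix an arbitrary local section $t$ of $\Model{N-1}$ and pick any $s\in\flatten(t)$, which exists since $t$ is built from sections of $\S$. Strong contextuality of $\S$ gives $\LC(\S,s)$, and the logical equivalence just established upgrades this to $\CLCk{N-1}(\S,s)$. By Definition \ref{defn: CLCk}, $\CLCk{N-1}(\S,s)$ asserts $\CLC(\Model{N-1},t')$ for \emph{every} section $t'$ of $\Model{N-1}$ with $s\in\flatten(t')$; applying this to $t'=t$ yields $\CLC(\Model{N-1},t)$. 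As $t$ was arbitrary, $\CSC(\Model{N-1})$ follows.

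The only genuine subtlety, rather than a real obstacle, is bookkeeping. One must check that the cycle provided by the CCP really is a cycle of $\Cov{1}$, so that its size is bounded above by $N$ and the monotonicity step applies; and one must align the quantifier structure of $\CLCk{N-1}$ (a statement over \emph{all} sections $t'$ containing $s$) with the specific $t$ at hand when deducing $\CSC$. Both points are discharged directly from the definitions and the monotonicity proposition, so no new computation is required beyond assembling these ingredients.
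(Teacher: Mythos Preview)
Your proposal is correct and matches the paper's intended argument: the paper simply states that the theorem follows ``immediately'' from the two preceding propositions (the CCP proposition giving $\LC(\S,s)\Leftrightarrow\CLCk{n-1}(\S,s)$ for the contextual cycle of size $n$, and the monotonicity proposition $\CLCk{k}\Rightarrow\CLCk{l}$ for $l\ge k$), and what you have written is precisely the natural unpacking of that claim. Your handling of the strong case via the pointwise argument over sections $t$ of $\Model{N-1}$ is the right way to deduce $\CSC(\Model{N-1})$ from the logical equivalence, and the bookkeeping points you flag are indeed resolved by the definitions.
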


This result shows that, if a model has the CCP, then studying its contextuality is equivalent to study the cohomological contextuality of its $(N-1)$-th joint model. In other words, cohomological contextuality on $\Model{N-1}$ is a full invariant for contextuality on the original model. As for cyclic scenarios, note that it might be possible to erase cohomological false positives for a particular model even at a lower level.  

Obviously, we usually do not know a priori whether a model satisfies the CCP, however, as we mentioned earlier, this property is extremely common among empirical models, which means that this method is widely applicable. In the following section we will give some examples to support this claim.

\subsection{Examples}
\paragraph{A simple scenario}
Let us start with the model summarised in Table \ref{tab: CCP}.
\begin{table}[htbp]
\centering
\begin{tabular}{c | c c c c}
\hline
Contexts & $(0,0)$ & $(0,1)$ & $(1,0)$ & $(1,1)$\\
\hline
$\{a,b\}$ & $0$ & $1$ & $1$ & $0$\\
$\{a,d\}$ & $1$ & $0$ & $1$ & $1$\\
$\{b,c\}$ & $1$ & $1$ & $0$ & $1$\\
$\{b,d\}$ & $1$ & $0$ & $0$ & $1$\\
$\{c,d\}$ & $0$ & $1$ & $1$ & $0$\\
\end{tabular}
\caption{The empirical model $\S$.}\label{tab: CCP}
\end{table}
A bundle diagram representation of the model can be found in Figure \ref{fig: bundle CCP}. 
\begin{figure}[htbp]
\centering
\includegraphics[scale=0.55]{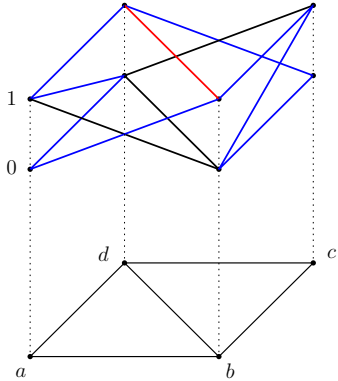}
\caption{The bundle diagram of the model summarised in Table \ref{tab: CCP}. In blue is highlighted the cohomological false positive for the section $(b,d)\mapsto(1,1)$, marked in red.}\label{fig: bundle CCP}
\end{figure}
By simply looking at the diagram, it is easy to see that the section $(b,d)\mapsto(1,1)$, marked in red, cannot be extended to any compatible family for $\S$. However, it can be extended to a compatible family for $\F$, as shown in blue. 

Using the enumeration specified in Table \ref{tab: enum CCP}, we can represent the first joint model $\Model{1}$ as a bundle diagram in Figure \ref{fig: CCPcohomology}. 

\begin{table}[htbp]
\centering
\begin{tabular}{c | c c c c}
\hline
Contexts & $(0,0)$ & $(0,1)$ & $(1,0)$ & $(1,1)$\\
\hline
$\{a,b\}$ &  & $s_1$ & $s_2$ & \\
$\{a,d\}$ & $s_3$ &  & $s_4$ & $s_5$ \\
$\{b,c\}$ & $s_6$ & $s_7$ & & $s_8$\\
$\{b,d\}$ & $s_9$ & & & $s_{10}$\\
$\{c,d\}$ &  & $s_{11}$ & $s_{12}$ & \\
\end{tabular}
\caption{An enumeration of the sections of $\S$. The model is logically contextual at $s_{10}$}\label{tab: enum CCP}
\end{table}

\begin{figure}[htbp]
\centering
\includegraphics[scale=0.55]{CPPcohomology.pdf}
\caption{The bundle diagram of $\Model{1}$. In black, all the possibilities to extend $s_{10}$ to a cohomology loop on the cycle $\{\{b,d\},\{b,c\},\{c,d\}\}$. They all fail to be compatible.}\label{fig: CCPcohomology}
\end{figure}

Notice that the section $s_{10}$, marked in red, cannot be extended to a compatible family for $\FModel{1}$ for the cycle $\{\{b,d\},\{b,c\},\{c,d\}\}$ (all the possibilities are highlighted in black). In particular, this means that the cohomological false positive has been deleted. Note that in this case it was sufficient to derive the first joint model to avoid a false positive. The size of the largest cycle in this scenario is $4$, thus, in general, we would have to consider the $3$rd joint model to remove any false positive with absolute certainty.

\paragraph{The Kochen-Specker model of \cite{Abramsky3}} The only cohomological false positive on a non-cyclic model that has appeared in the litterature is the Kochen-Specker model for the cover 
\begin{equation}\label{equ: KS}
\{A,B,C\},\{B,D,E\},\{C,D,E\},\{A,D,F\},\{A,E,G\}.
\end{equation}
It was introduced in \cite{Abramsky3} as an example of a cohomological false positive for a strongly contextual model. Let us show how the false positive arises. In Table \ref{tab: KS} we introduce a list of variables in $\mathbb{Z}_2$ for each of the $15$ possible sections of the model, to see if it is possible to construct a global section (i.e. a compatible family). 

\begin{table}[htbp]
\centering
\begin{tabular}{c | c c c c}
\hline
Contexts & $(1,0,0)$ & $(0,1,0)$ & $(1,0,0)$\\
\hline
$\{A,B,C\}$ & $a$ & $b$ & $c$\\
$\{B,D,E\}$ & $d$ & $e$ & $f$\\
$\{C,D,E\}$ & $g$ & $h$ & $i$\\ 
$\{A,D,F\}$ & $j$ & $k$ & $l$\\
$\{A,E,G\}$ & $m$ & $n$ & $o$\\ 
\end{tabular}
\caption{Variables for the possible sections of the Kochen-Specker model on the cover \eqref{equ: KS}.}\label{tab: KS}
\end{table}
The compatibility conditions of a presumed compatible family for $\F$ translate into equations modulo $2$. First of all, we have
\[
\begin{aligned}[c c c c]
a &= j = m \\
b &=d =g =c\\
e &= h = k \\
f &= i = n 
\end{aligned}
\]
Moreover,
\[
\begin{aligned}[c]
a\oplus c &= d\oplus f\\
a\oplus b &= h\oplus i\\
b\oplus c &=k\oplus l\\
\end{aligned}
\quad\quad
\begin{aligned}[c]
b\oplus c &= n\oplus o\\
d\oplus f &= j\oplus l\\
d\oplus e &= m\oplus o\\ 
\end{aligned}
\quad\quad
\begin{aligned}[c]
g\oplus i &= j\oplus l\\
g\oplus h &= m\oplus o\\
k\oplus l &= n\oplus o\\ 
\end{aligned}
\]
From these equations it follows that
\[
\begin{aligned}[c c c c]
a &=i=j=m=n=o\\
b &=c=d=e=g=h=k=l  \\
\end{aligned}
\]
Thus, we can rewrite Table \ref{tab: KS} to obtain Table \ref{tab: KS2}.

\begin{table}[htbp]
\centering
\begin{tabular}{c | c c c c}
\hline
Contexts & $(1,0,0)$ & $(0,1,0)$ & $(1,0,0)$\\
\hline
$\{A,B,C\}$ & $a$ & $b$ & $b$\\
$\{B,D,E\}$ & $b$ & $b$ & $a$\\
$\{C,D,E\}$ & $b$ & $b$ & $a$\\ 
$\{A,D,F\}$ & $a$ & $b$ & $b$\\
$\{A,E,G\}$ & $a$ & $a$ & $a$\\ 
\end{tabular}
\caption{Table \ref{tab: KS} rewritten given compatibility equations.}\label{tab: KS2}
\end{table}
Thanks to this table, we can immediately check that the model is strongly contextual. Indeed, in order to construct a compatible section for the model, we are only allowed to choose one section per context to which we assign $1$, while the others must be zero. By simply looking at Table \ref{tab: KS2} we can see that this is clearly impossible. 

However, if we let $a=1$ and $b=0$ we obtain the following cohomological compatible family
\[
\begin{split}
\{s_{\{A,B,C\},A} &,~ s_{\{B,D,E\},E},~ s_{\{C,D,E\},E},\\
&\phantom{,}~s_{\{A,D,F\},A},~ s_{\{A,E,G\},A}\oplus s_{\{A,E,G\},E}\oplus s_{\{A,E,G\},G}\}
\end{split}
\]
where we have used the standard notation for sections of a Kochen-Specker model, i.e. given a context $C$ and a measurement $m\in C$, the section $s_{C,m}$ is the section that maps $m$ to $1$ and every other $x\in C$ to $0$. 

This compatible family is a false positive for logical contextuality at sections $s_{\{A,B,C\},A}$, $s_{\{B,D,E\},E}$, $s_{\{C,D,E\},E}$ and $s_{\{A,D,F\},A}$. Furthermore, note that the only other families we have, namely the ones obtained by setting $a=0, b=1$ or $a=b=1$, do not give rise to false positives for any section since they contain multiple sections for every context. 

We will now show that it is sufficient to derive the first joint model $\Model{1}$ to remove all the cohomological false positives. 

First of all, we represent the first joint model using bundle diagrams. For each context $C=\{c_1,c_2,c_3\}$ of $\M$, there are exactly three possible sections, namely $s_{C,c_1}$, $s_{C,c_2}$ and $s_{C,c_3}$. Therefore, for each vertex of $\Cov{1}$, there are three distinct vertices in its fiber, which we will label with $s_{C,c_1}$, $s_{C,c_2}$ and $s_{C,c_3}$ from bottom to top (this labelling is not shown in the pictures for the sake of readability of the diagrams).
Using this convention, we have depicted the bundle diagram of the first joint model in Figure \ref{fig: FJMKS} (the colored vertices are only used as a visual reference).

\begin{figure}[htbp]
\centering
\includegraphics[scale=0.49]{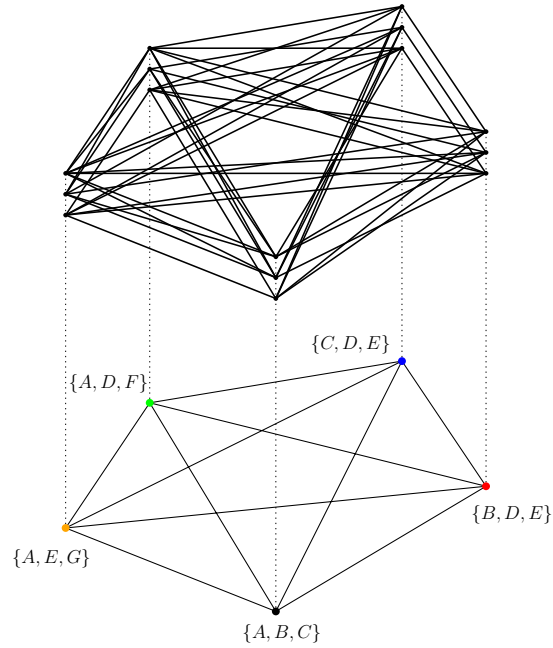}
\caption{The bundle diagram of the first joint model of the Kochen-Specker model on the cover \eqref{equ: KS}.}\label{fig: FJMKS}
\end{figure}

In Figure \ref{fig: KSPlanar}, we give a different representation of the model by decomposing it into two planar diagrams. The top diagram corresponds to the cycle that constitutes the perimeter of the pentagon, while the bottom one corresponds to the star-shaped cycle in the center. The colored circles in the fibers represent the four sections for which we have a cohomological false positive in the original model, as explained at the bottom of the picture. 

\begin{figure}[htbp]
\centering
\includegraphics[scale=0.49]{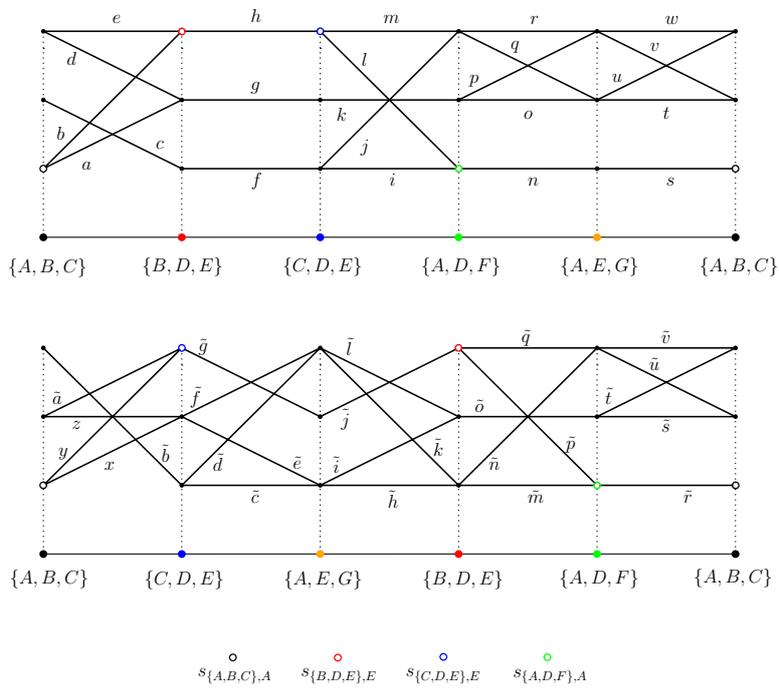}
\caption{The Kochen-Specker model of \cite{Abramsky3} decomposed in two cyclic planar diagrams. The top diagram corresponds to the perimeter of the pentagon; the bottom diagram refers to the central `star'. We introduce one variable in $\mathbb{Z}_2$ for each section of the model.}\label{fig: KSPlanar}
\end{figure}.

In the same picture, we have also introduced variables 
\[
a,b,\dots ,y,z, \tilde{a},\tilde{b},\dots, \tilde{u},\tilde{v}
\]
in $\mathbb{Z}_2$ for each of the possible sections of the first joint model. We will now show that the cohomological false positive no longer exists. To do so, we list all the equations imposed by compatibility conditions:
\[
\begin{aligned}[c]
a\oplus d &= g\\
b\oplus e &= h\\
c &= f\\
a\oplus d &= \tilde{o}\\
b\oplus e &= \tilde{p}\oplus\tilde{q}\\
c &= \tilde{m}\oplus\tilde{n}\\
a\oplus d &= \tilde{i}\oplus \tilde{l}\\
b\oplus e &= \tilde{j}\\
c &= \tilde{h}\oplus\tilde{k}\\
a\oplus b &= \tilde{r}\\
c &= \tilde{s}\oplus \tilde{u}\\
d\oplus e &= \tilde{t}\oplus\tilde{v}\\
a\oplus b &= s\\
c &= t\oplus v\\
d\oplus e &= u \oplus w\\
a\oplus b &= x\oplus y\\
c &= z\oplus\tilde{a}\\
d\oplus e &= \tilde{b}\\
f &= i\oplus j\\
g &= k\\
h &= l\oplus m\\
f &= b_1\\
g &= x\oplus z\\
h &= y \oplus\tilde{a}\\
f &= \tilde{h}\oplus\tilde{i}\\
g &= \tilde{j}\\
h &= \tilde{k}\oplus\tilde{l}\\
f &= \tilde{h}\oplus\tilde{k}\\
g &=\tilde{i}\oplus\tilde{l}\\
h &= \tilde{j}\\
\end{aligned}
\quad\quad
\begin{aligned}[c]
f &= \tilde{m}\oplus\tilde{n}\\
g &= \tilde{o}\\
h &=\tilde{p}\oplus\tilde{q}\\
i\oplus l &= n\\
k &= o\oplus p\\
j\oplus m &= q\oplus r\\
i\oplus l &= \tilde{r}\\
k &= \tilde{s}\oplus\tilde{t}\\
j\oplus m&=\tilde{u}\oplus\tilde{v}\\
i\oplus l &= \tilde{m}\oplus\tilde{p}\\
k &= \tilde{o}\\
j\oplus m &= \tilde{n}\oplus\tilde{q}\\
i\oplus j &= \tilde{b}\\
k &= x\oplus z\\
l\oplus m &= y\oplus\tilde{a}\\
i\oplus j &= \tilde{c}\oplus\tilde{d}\\
k &= \tilde{e}\oplus\tilde{f}\\
l\oplus m &= \tilde{g}\\
n &= s\\
o\oplus q &= t\oplus u\\
p\oplus r &= v\oplus w\\
n &= \tilde{h}\oplus\tilde{i}\\
o\oplus q &= \tilde{j}\\
p\oplus r &= \tilde{k}\oplus\tilde{l}\\
n &= \tilde{c}\oplus\tilde{l}\\
o\oplus q &= \tilde{g}\\
p\oplus r &= \tilde{d}\oplus\tilde{f}\\
n &= \tilde{m}\oplus\tilde{p}\\
o\oplus p &=\tilde{o}\\
q\oplus r &= \tilde{n}\oplus\tilde{q}\\
\end{aligned}
\quad\quad
\begin{aligned}[c]
n &=\tilde{r}\\
o\oplus r &=\tilde{s}\oplus\tilde{t}\\
q\oplus r &= \tilde{u}\oplus\tilde{v}\\
s &= x\oplus y\\
t\oplus v &= z\oplus \tilde{a}\\
u\oplus w &= \tilde{b}\\
s &= \tilde{r}\\
t\oplus v &=\tilde{s}\oplus\tilde{u}\\
u\oplus w &=\tilde{t}\oplus\tilde{v}\\
s &=\tilde{c}\oplus\tilde{e}\\
t\oplus w &= \tilde{g}\\
v\oplus w &=\tilde{d}\oplus\tilde{f}\\
s &= \tilde{h}\oplus\tilde{i}\\
t\oplus u &=\tilde{j}\\
v\oplus w &=\tilde{k}\oplus\tilde{l}\\
\tilde{b} &= \tilde{c}\oplus\tilde{d}\\
x\oplus z &=\tilde{e}\oplus\tilde{f}\\
y\oplus \tilde{a} &= \tilde{g}\\
x\oplus y &= \tilde{r}\\
z\oplus \tilde{a} &= \tilde{s}\oplus\tilde{u}\\
\tilde{b} &= \tilde{t}\oplus\tilde{v}\\
\tilde{c} \oplus\tilde{e} &=\tilde{h}\oplus\tilde{i}\\
\tilde{g} &=\tilde{j}\\
\tilde{d}\oplus\tilde{f} &=\tilde{k}\oplus\tilde{l}\\
\tilde{h}\oplus\tilde{k}&=\tilde{m}\oplus\tilde{n}\\
\tilde{i}\oplus\tilde{l} &= \tilde{o}\\
\tilde{j} &= \tilde{p}\oplus\tilde{q}\\
\tilde{m} \oplus \tilde{p} &=\tilde{r}\\
\tilde{o} &= \tilde{s}\oplus\tilde{t}\\
\tilde{n} \oplus\tilde{q} &= \tilde{u}\oplus\tilde{v}
\end{aligned}
\]

With the aid of a computer, we can easily find the solutions to this system of equations. The free variables are $a$, $b$, $i$, $o$, $\tilde{a}$, $\tilde{c}$, $\tilde{m}$, $\tilde{s}$ and $\tilde{t}$, and we must have
\begin{equation}\label{equ: solution}
\begin{split}
c &=f=g=h=k=n=s=\tilde{b}=\tilde{g}=\tilde{j}=\tilde{o}=\tilde{r}=a\oplus b\\
j &=l=a\oplus b\oplus i\\
p &= q = a\oplus b\oplus o\\
u &= v = a\oplus b \oplus t\\
y &= z = a\oplus b\oplus\tilde{a}\\
\tilde{d} &=\tilde{e}=\tilde{h}=\tilde{l}=a\oplus b\oplus\tilde{c}\\
\tilde{n} &=\tilde{p}=a\oplus b\oplus\tilde{m}\\
\tilde{t} &= \tilde{u}=a\oplus b \oplus \tilde{s}
\end{split}
\end{equation}
Consider section $s_{\{A,B,C\},A}$ of the original model (marked with a red circle in Figure \ref{fig: KSPlanar}). The only section of $\Model{1}$ at the context $\{\{A,B,C\}, \{A,E,G\}\}\in\Cov{1}$ that contains $s_{\{A,B,C\},A}$ is $s:=(s_{\{A,B,C\},A}, s_{\{A,E,G\}, A})$, whose corresponding variable is $s$. If we impose $s=1$ and $t=u=v=w=0$, we can see that these constraints are not consistent with the values \eqref{equ: solution} imposed by compatibility of a presumed compatible family for cohomology. Indeed, we have 
\[
0=u=a\oplus b\oplus t=s\oplus t=1\oplus 0=1.
\]
This means that the section $s=(s_{\{A,B,C\},A}, s_{\{A,E,G\}, A})$ cannot be extended to a compatible family for $\FModel{1}$. In other words, the cohomological false positive for $s_{\{A,B,C\},A}$ has vanished.


In the same way, $(s_{\{B,D,E\},E}, s_{\{C,D,E\},E})$ is the only section of $\Model{1}$ at the context $\{\{B,D,E\},\{C,D,E\}\}\in\Cov{1}$ that contains both $s_{\{B,D,E\},E}$ and $s_{\{C,D,E\},E})$. The corresponding variable is $h$, and if we impose $h=1$ and $g=f=0$, we have an immediate contradiction since $h=f=g$ by \eqref{equ: solution}. Thus we conclude, using the same argument as before, that the cohomological false positive for the contextuality of $\S$ at sections $s_{\{B,D,E\},E}$ and $s_{\{C,D,E\},E}$ has vanished.

Finally, to show that we have removed the false positive for $s_{\{A,D,F\},A}$, it is sufficient to argue that $(s_{\{A,D,F\},A},s_{\{A,E,G\},A})$ is the only section of $\Model{1}$ at the context $\{\{A,D,F\},\{A,E,G\}\}\in\Cov{1}$ that contains it. The corresponding variable is $n$, and if we impose $n=1$, $o=p=q=r=0$, we have
\[
0=p=a\oplus b\oplus o= n\oplus o=1\oplus 0=1,
\]
which is again a contradiction.

\section{Conclusions and further directions}
Thanks to the joint model construction, we introduced a cohomology obstruction to the extension of local sections, which represents a complete invariant for contextuality in the extremely vast class of models satisfying the cyclic contextuality property.
We showed how this invariant gets rid of all the known false positives from the literature, and proved its efficacy in general models. There are strong indications suggesting that this cohomology obstruction represents a full invariant for contextuality in all models. Indeed, in order to give rise to a false positive, a model $\S$ on a scenario $\scenario$ would have to satisfy all of the following:
\begin{itemize}
\item $\S$ is contextual.
\item $\scenario$ is non-cyclic.
\item $\S$ does not satisfy the CCP.
\item $\S$ gives rise to a cohomology false positive in \emph{all} its joint versions at \emph{all} its sections.
\end{itemize}
For this reason, we propose the following conjecture
\begin{conjecture}
Given a general model $\S$ on a scenario $\scenario$, there exists a $k\ge 0$ such that
\[
\LC(\S,s)\Leftrightarrow \CLCk{k}(\S,s),
\]
for all sections $s$ of $\S$. In other words, the cohomology of joint models represents a full invariant for contextuality. 
\end{conjecture}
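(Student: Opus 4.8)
The plan is to assemble the statement directly from the two preceding propositions together with Theorem~\ref{thm: main2}, with no genuinely new argument required; the whole proof is a bookkeeping exercise that reconciles the per-section cycle sizes with the single uniform joint level $N-1$.

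For the logical equivalence I would treat the two implications separately. The implication $\CLCk{N-1}(\S,s)\Rightarrow\LC(\S,s)$ is immediate from Theorem~\ref{thm: main2}, instantiated at $k=N-1$. For the converse, starting from $\LC(\S,s)$, the hypothesis that $\S$ has the CCP provides a contextual cycle $\D_\bullet\subseteq\M$ for $\Cov{1}$ of some size $n$, and the first of the two preceding propositions (the CCP invariant on the restricted cyclic model) then yields $\CLCk{n-1}(\S,s)$. Since $N$ is by definition the size of the largest cycle in $\Cov{1}$, we have $n\le N$, hence $n-1\le N-1$; the monotonicity proposition, applied with $k=n-1$ and $l=N-1$, promotes $\CLCk{n-1}(\S,s)$ to $\CLCk{N-1}(\S,s)$, closing the equivalence.

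For strong contextuality, the implication $\CSC(\Model{N-1})\Rightarrow\SC(\S)$ is again immediate from Theorem~\ref{thm: main2}. For the converse I would reuse the logical equivalence just established: $\SC(\S)$ means $\LC(\S,s)$ for every section $s$ of $\S$, and hence $\CLCk{N-1}(\S,s)$ for every such $s$. To conclude $\CSC(\Model{N-1})$ I must verify $\CLC(\Model{N-1},t)$ for an \emph{arbitrary} section $t$ of $\Model{N-1}$; given $t$, I would choose any $s\in\flatten(t)$, which is non-empty because sections of $\Model{N-1}$ are nested tuples built from sections of $\S$, and then unfold the definition of $\CLCk{N-1}(\S,s)$. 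That definition asserts $\CLC(\Model{N-1},t')$ for every $t'$ with $s\in\flatten(t')$, so specialising to $t'=t$ yields $\CLC(\Model{N-1},t)$, and since $t$ was arbitrary this gives $\CSC(\Model{N-1})$.

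The only real subtlety, and the single place where anything beyond definitional unfolding is used, is the mismatch between the cycle size $n$, which depends on the section $s$ through the CCP, and the uniform joint level $N-1$ appearing in the statement: this is exactly the gap that the monotonicity proposition is designed to bridge. Because that gap is handled cleanly and everything else is unwinding the definitions of $\CLCk{k}$ and $\flatten$, I do not anticipate any genuine obstacle; the proof should be short.
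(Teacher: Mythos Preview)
You have proved the wrong statement. The item you were asked to address is the paper's \emph{Conjecture}, which concerns a \emph{general} empirical model $\S$ on an arbitrary scenario, with no hypothesis of the cyclic contextuality property (CCP). Your argument, however, explicitly invokes ``the hypothesis that $\S$ has the CCP'' to produce a contextual cycle $\D_\bullet$ and then applies the cyclic invariant to the restriction $\S\mid_{\D_\bullet}$. What you have written is a correct proof of the preceding \emph{Theorem} (the one assuming CCP and taking $k=N-1$), and indeed it is essentially the proof the paper gives for that theorem; but it is not a proof of the Conjecture.

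The paper does not prove the Conjecture at all: it is stated in the Conclusions section as an open problem, with the remark that proving it ``will be our main focus in future work''. The whole point of isolating the CCP is that it is the extra hypothesis under which the authors can currently close the gap; without it, the reduction to a cyclic subscenario is unavailable, and no argument in the paper (nor in your proposal) shows how to find a $k$ that works uniformly for all sections of a model lacking the CCP. So the genuine gap in your proposal is the unjustified assumption of CCP; removing it leaves you with exactly the open problem the paper poses.
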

Proving this conjecture will be our main focus in future work.

\section*{Acknowledgements} I would like to thank Samson Abramsky, Rui Soares Barbosa, and Shane Mansfield for helpful discussions. Support from the Oxford-Google DeepMind Graduate Scholarship and the EPSRC Doctoral Training Partnership is also gratefully acknowledged.

\bibliography{FullCohomology}{}

\newcommand{\etalchar}[1]{$^{#1}$}
\begin{thebibliography}{HWVE14}

\bibitem[AB11]{Abramsky1}
Abramsky and A.~Brandenburger.
\newblock The sheaf-theoretic structure of non-locality and contextuality.
\newblock {\em New Journal of Physics}, 13(11):113036, 2011.

\bibitem[ABCP17]{AvNTriple}
Samson Abramsky, Rui~Soares Barbosa, Giovanni Car{\`u}, and Simon Perdrix.
\newblock A complete characterization of all-versus-nothing arguments for
  stabilizer states.
\newblock {\em Philosophical Transactions of the Royal Society of London A:
  Mathematical, Physical and Engineering Sciences}, 375(2106), 2017.

\bibitem[ABK{\etalchar{+}}15]{Abramsky2}
Samson Abramsky, Rui~Soares Barbosa, Kohei Kishida, Raymond Lal, and Shane
  Mansfield.
\newblock Contextuality, cohomology and paradox.
\newblock In Stephan Kreutzer, editor, {\em 24th EACSL Annual Conference on
  Computer Science Logic (CSL 2015)}, volume~41 of {\em Leibniz International
  Proceedings in Informatics (LIPIcs)}, pages 211--228, Dagstuhl, Germany,
  2015. Schloss Dagstuhl--Leibniz-Zentrum fuer Informatik.

\bibitem[AMB12]{Abramsky3}
S.~Abramsky, S.~Mansfield, and R.~Soares Barbosa.
\newblock The cohomology of non-locality and contextuality.
\newblock {\em Electronic Proceedings in Theoretical Computer Science}, 95 -
  Proceedings 8th International Workshop on Quantum Physics and Logic (QPL
  2011), Nijmegen:1--14, 2012.

\bibitem[Bar14]{Barbosa}
Rui~Soares Barbosa.
\newblock On monogamy of non-locality and macroscopic averages: examples and
  preliminary results.
\newblock In Bob Coecke, Ichiro Hasuo, and Prakash Panangaden, editors, {\em
  Proceedings 11th Workshop on Quantum Physics and Logic, Kyoto, Japan, 4-6th
  June 2014}, volume 172 of {\em Electronic Proceedings in Theoretical Computer
  Science}, pages 36--55. Open Publishing Association, 2014.

\bibitem[Bar15]{Rui}
Rui~Soares Barbosa.
\newblock {\em Contextuality in quantum mechanics and beyond}.
\newblock PhD thesis, University of Oxford, 2015.

\bibitem[Car15]{Gio}
Giovanni Car{\`u}.
\newblock Detecting contextuality: Sheaf cohomology and {A}ll vs {N}othing
  arguments.
\newblock Master's thesis, University of Oxford, September 2015.

\bibitem[Car17]{Caru}
Giovanni Car\`u.
\newblock On the cohomology of contextuality.
\newblock In Ross Duncan and Chris Heunen, editors, {\em {\rm Proceedings 13th
  International Conference on} Quantum Physics and Logic, {\rm Glasgow,
  Scotland, 6-10 June 2016}}, volume 236 of {\em Electronic Proceedings in
  Theoretical Computer Science}, pages 21--39. Open Publishing Association,
  2017.

\bibitem[GHSZ90]{Greenberger}
Daniel~M. Greenberger, Michael~A. Horne, Abner Shimony, and Anton Zeilinger.
\newblock {B}ell's theorem without inequalities.
\newblock {\em American Journal of Physics}, 58(12):1131--1143, 1990.

\bibitem[GHZ89]{Greenberger2}
Daniel~M. Greenberger, Michael~A. Horne, and Anton Zeilinger.
\newblock {\em Going beyond {B}ell's theorem}, pages 69--72.
\newblock Springer Netherlands, Dordrecht, 1989.

\bibitem[Har93]{Hardy}
Lucien Hardy.
\newblock Nonlocality for two particles without inequalities for almost all
  entangled states.
\newblock {\em Physical Review Letters}, 71(11):1665--1668, Sep 1993.

\bibitem[HWVE14]{Howard}
Mark Howard, Joel Wallman, Victor Veitch, and Joseph Emerson.
\newblock Contextuality supplies the {`magic'} for quantum computation.
\newblock {\em Nature}, 510(7505):351--355, 06 2014.

\bibitem[Mer90]{Mermin2}
N.~D. Mermin.
\newblock Quantum mysteries revisited.
\newblock {\em American Journal of Physics}, 58(8):731--734, 1990.

\bibitem[Per90]{Peres}
Asher Peres.
\newblock Incompatible results of quantum measurements.
\newblock {\em Physics Letters A}, 151(3):107--108, 1990.

\bibitem[Rau13]{Raussendorf4}
Robert Raussendorf.
\newblock Contextuality in measurement-based quantum computation.
\newblock {\em Physical Review A}, 88(2):022322, Aug 2013.

\bibitem[Rau16]{RaussendorfCohomology}
Robert Raussendorf.
\newblock Cohomological framework for contextual quantum computations.
\newblock {\em To appear in Electronic Proceedings in Theoretical Computer
  Science}, Proceedings 13th international conference on Quantum Physics and
  Logic, 2016.

\bibitem[Rou16]{Roumen}
Frank Roumen.
\newblock Cohomology of effect algebras.
\newblock {\em To appear in Electronic Proceedings in Theoretical Computer
  Science}, Proceedings 13th international conference on Quantum Physics and
  Logic, 2016.

\bibitem[Vor62]{Vorobev}
N.~N. Vorob'ev.
\newblock Consistent families of measures and their extensions.
\newblock {\em Theory of Probability \& Its Applications}, 7(2):147--163, 1962.

\end{thebibliography}
\bibliographystyle{alpha}

\end{document}